\newcommand{\cmark}{{\color{green!60!black}\checkmark}}
\newcommand{\xmark}{{\color{red}\ding{55}}}
\definecolor{lightgray1}{gray}{0.95}  
\definecolor{lightgray2}{gray}{0.85}  
\DeclareMathOperator{\supp}{supp}
\theoremstyle{plain}
\newtheorem{theorem}{Theorem}
\newtheorem{proposition}{Proposition}
\newcommand{\ie}{\emph{i.e., }}
\newcommand{\eg}{\emph{e.g., }}
\newcommand{\za}[1]{{\color{black}{#1}}}
\newcommand{\hugq}[1]{{\color{black}{#1}}}
\title{Fading to Grow: Growing Preference Ratios via Preference Fading Discrete Diffusion for Recommendation}
\newcommand{\USTC}{University of Science and Technology of China}
\newcommand{\IR}{Independent Researcher}
\author{
  Guoqing Hu\textsuperscript{\ddag}~~
  An Zhang\textsuperscript{\ddag}\thanks{Corresponding authors.}~~
  Shuchang Liu\textsuperscript{\S}~~
  Wenyu Mao\textsuperscript{\ddag}~~
  Jiancan Wu\textsuperscript{\ddag}\AND
  Xun Yang\textsuperscript{\ddag}~~
  Xiang Li\textsuperscript{\S}~~
  Lantao Hu\textsuperscript{\S}~~
  Han Li\textsuperscript{\S}~~
  Kun Gai\textsuperscript{\S}~~
  Xiang Wang\textsuperscript{\ddag}\\[8pt]
  \textsuperscript{\ddag}\USTC \\
  \textsuperscript{\S}\IR \\
  \texttt{\{HugoChinn\}@mail.ustc.edu.cn} \\
  \texttt{\{an\_zhang,xyang21\}@ustc.edu.cn} \\
  \texttt{\{wenyumao2, wujcan, xiangwang1223\}@gmail.com}
}
\begin{document}

\maketitle

\begin{abstract}
Recommenders aim to rank items from a discrete item corpus in line with user interests, yet suffer from extremely sparse user preference data.
Recent advances in diffusion models have inspired diffusion-based recommenders, which alleviate sparsity by injecting noise during a forward process to prevent the collapse of perturbed preference distributions.
However, current diffusion‑based recommenders predominantly rely on continuous Gaussian noise, which is intrinsically mismatched with the discrete nature of user preference data in recommendation.
In this paper, building upon recent advances in discrete diffusion, we propose \textbf{PreferGrow}, a discrete diffusion-based recommender system that models preference ratios by fading and growing user preferences over the discrete item corpus.
PreferGrow differs from existing diffusion-based recommenders in three core aspects:
(1) Discrete modeling of preference ratios:
PreferGrow models relative preference ratios between item pairs, rather than operating in the item representation or raw score simplex.
This formulation aligns naturally with the discrete and ranking-oriented nature of recommendation tasks.
(2) Perturbing via preference fading:
Instead of injecting continuous noise, PreferGrow fades user preferences by replacing the preferred item with alternatives---physically akin to negative sampling---thereby eliminating the need for any prior noise assumption.
(3) Preference reconstruction via growing:
PreferGrow reconstructs user preferences by iteratively growing the preference signals from the estimated ratios.
PreferGrow offers a well-defined matrix-based formulation with theoretical guarantees on Markovianity and reversibility, and it demonstrates consistent performance gains over state-of-the-art diffusion-based recommenders across five benchmark datasets, highlighting both its theoretical soundness and empirical effectiveness.
Our codes are available at \url{https://github.com/Hugo-Chinn/PreferGrow}.
\end{abstract}

\section{Introduction}\label{sec:intro}



\begin{figure}[ht!]
    \centering
    \includegraphics[width=\textwidth]{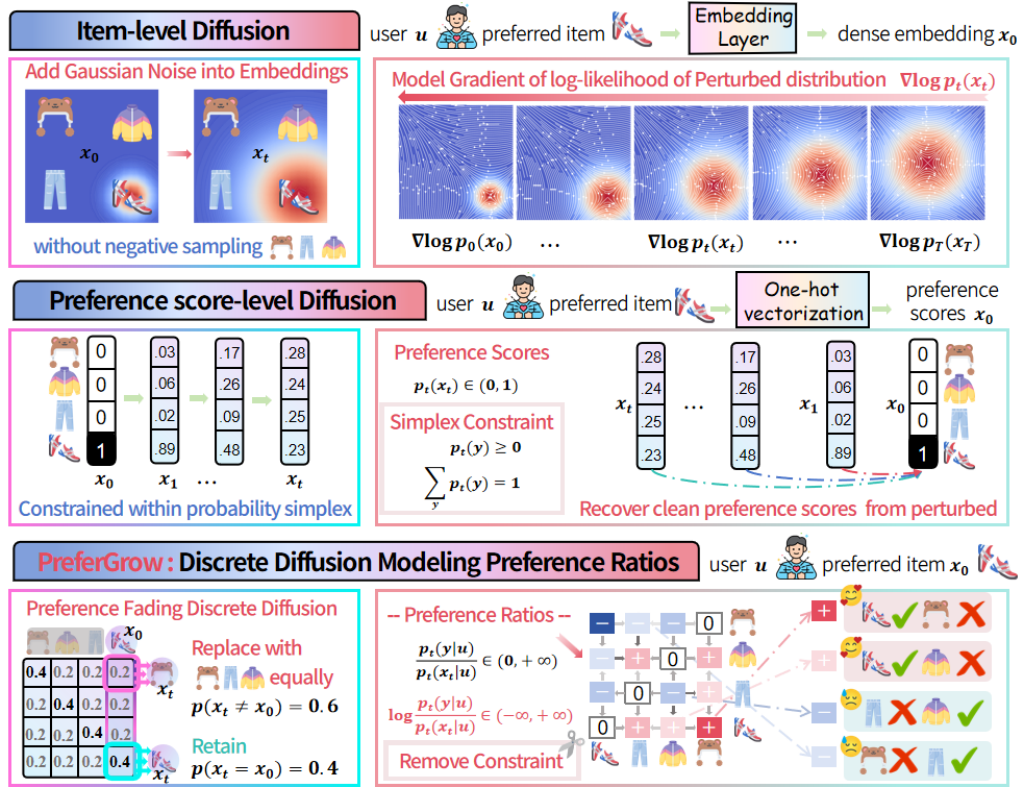}
    \caption{Comparison of diffusion-based recommenders in terms of modeling targets and perturbation strategies.
    Item-level diffusion recommenders (top) add Gaussian noise into dense item embeddings but overlook negative signals.
    Preference score-level diffusion recommenders (middle) perturb one-hot preference vectors under the constraints of probability simplex.
    PreferGrow (bottom) directly models discrete preference ratios via preference fading.
    }
    \vspace{-15pt}
    \label{fig:sparse}
\end{figure}

Recommender systems aim to rank items from a discrete item set that align with user interests, where ones in the user-item interaction matrix denote observed interactions, and zeros indicate unobserved or missing entries.
In real-world scenarios, this interaction matrix is often extremely sparse \cite{sparse1, sparse2}, which poses a significant challenge for recommenders in accurately modeling user preferences \cite{S3Rec, sparse4}.
Recent advances in diffusion models (DMs) offer a promising solution: by injecting noise during the forward process, DMs transform sparse data into smoother, denser distributions while preventing collapse into isotropic zero values \cite{NCSN, DDIM, DDPM}.
Motivated by these properties, diffusion-based recommenders \cite{DreamRec, DiffuRec, DiffRec1, CDDRec, DiffuASR, DDRM, DiffCDR, DiffGT, CFDiff, CaDiRec, DMSR, DiffCL, DMCDR, MoDiCF, DiQDiff, DRGO, DiffuRec2, iDreamRec, PreferDiff, DiffRec2, RecFusion, LD4MRec, PDRec, EdgeRec, DifFaiRec, D3Rec, DDSR, DiffMM, SDiff} have proliferated in recent studies, showing strong potential in addressing data sparsity and improving preference modeling.
 
Current diffusion-based recommenders predominantly adopt continuous noise as the perturbation mechanism by adding noise into either dense item embeddings or one-hot preference score vectors. 
As shown in Figure \ref{fig:sparse}, item-level diffusion recommenders \cite{DreamRec, DiffuRec, DiffRec1, CDDRec, DiffuASR, DDRM, DiffCDR, DiffGT, CFDiff, CaDiRec, DMSR, DiffCL, DMCDR, MoDiCF, DiQDiff, DRGO, DiffuRec2, iDreamRec, PreferDiff} apply Gaussian noise to user-interacted item embeddings during the forward process, and learn a score function---the gradient of the perturbed distribution's log-likelihood---to recover item embeddings during the reverse process.
However, these methods overlook the negative signals in recommendations, lacking mechanisms such as negative sampling to differentiate user preferences \cite{DreamRec, PreferDiff}.
In contrast, preference score-level diffusion recommenders \cite{DiffRec2, RecFusion, LD4MRec, PDRec, EdgeRec, DifFaiRec, D3Rec, DDSR, DiffMM, SDiff} model users' preference scores among the full item corpus, \ie user interacted one-hot vectors, by perturbing them within the probability simplex in the forward process.
While the reverse process attempts to reconstruct the preference scores, the simplex constraints---non-negativity and normalization---introduce additional optimization difficulties \cite{DDSE}.
Worse still, both diffusion-based recommenders assume a prior noise in the sampling process, \eg Gaussian \cite{DreamRec} or Bernoulli \cite{RecFusion}, which may poorly reflect the inherently discrete and sparse nature of the user preference data in recommendation scenarios.

Building on recent advances in discrete diffusion models \cite{concretescore, DDSE, RADD}, we propose \textbf{PreferGrow}, a discrete diffusion-based recommender that models the \textit{preference ratios} by fading and growing user preferences through forward and backward processes.
Unlike existing continuous diffusion-based recommenders that model preference scores (\eg probabilities that users interacted with items), PreferGrow directly models relative preference ratios over a discrete item set.
This formulation aligns more naturally with the discrete and ranking-oriented nature of recommendation tasks and avoids the strong constraints imposed by the probability simplex.
In the forward perturbation, PreferGrow fades user preference by replacing the preferred item with alternatives, enabling explicit negative sampling and alleviating data sparsity without relying on predefined noise distribution.
In the backward generation, PreferGrow reconstructs user preferences by iteratively growing the preference signal from the estimated ratios.
We further provide theoretical analysis showing that these forward and backward processes preserve key properties of discrete diffusion models.

PreferGrow offers a unified and theoretically grounded framework for discrete diffusion-based recommendation, characterized by a well-defined matrix-based formulation, flexible and interpretable preference ratio modeling aligned with diverse negative sampling strategies, and consistently superior empirical performance.
\begin{itemize}[leftmargin=*]
    \item \textbf{Theoretical foundation:} 
    At the core of the formulation for PreferGrow is an idempotent fading matrix used to replace preferred items during the forward perturbation process. 
    We provide a closed-form solution of this preference fading matrix and theoretically prove that its idempotent property is critical for ensuring both the Markov property and reversibility of the diffusion process.
    \item \textbf{Flexible modeling:} By parameterizing the preference fading matrix, PreferGrow flexibly supports point-wise, pair-wise, and hybrid preference ratios modeling. 
    These variants correspond to distinct and physically interpretable negative sampling strategies, enabling the framework to adapt to diverse modeling targets in recommendation.
    \item \textbf{Empirical validation:} Extensive experiments on five benchmark datasets demonstrate that PreferGrow consistently outperforms existing diffusion-based recommenders, validating the practical effectiveness of its theoretical foundations.
\end{itemize}

\section{Preliminaries}\label{sec:prelim}

\textbf{User preference data} is represented as a pair $(u, i)$, where $u$ denotes the user and $i$ is the preferred item. 
The preference score $p(i|u)$ indicates the probability that the user $u$ interacts with item $i$.
In sequential recommendation, $u$ denotes the item sequence that the user has interacted with.

\textbf{Diffusion-based Recommenders} generally consist of three major components: a forward noise-adding process for perturbation, a generative modeling target, and a backward denoising process for reconstruction.
Specifically, \textbf{item-level diffusion-based recommenders} encode a preferred item $i$ into its dense embedding $\mathbf{x}_0$, and instantiate three components as follows \cite{GodDiff}:
\begin{itemize}[leftmargin=*]
    \item \textit{forward perturbation}:
    $\mathbf{x}_t = \sqrt{\alpha_t} \mathbf{x}_0 + \sqrt{1 - \alpha_t} \epsilon_t$, where $\alpha_t\in[0,1],\epsilon_t \sim \mathcal{N}(0, I)$. 
    \item \textit{modeling target}:
    network $\mathbf{s}_{\Theta}(\mathbf{x}_t,t,u)\approx \nabla_{\mathbf{x}_t} \log p_{t}(\mathbf{x}_t|u) = \frac{\sqrt{\alpha_t} \mathbf{x}_0 - \mathbf{x}_t}{1 - \alpha_t} = -\frac{\epsilon_t}{\sqrt{1 - \alpha_t}}$.
    \item \textit{backward generation}:
    $\mathbf{x}_s = \sqrt{\alpha_s} \frac{\mathbf{x}_t+(1-\alpha_t)\mathbf{s}_{\Theta}(\mathbf{x}_t,t,u)}{\sqrt{\alpha_t}} - \sqrt{1 - \alpha_s} \sqrt{1 - \alpha_t}\mathbf{s}_{\Theta}(\mathbf{x}_t,t,u), s<t$.
\end{itemize}
\za{\textbf{Preference score–level diffusion-based recommenders} represent the preferred item $i$ as a one‑hot preference score $\mathbf{x}_0$, modeling user preference over the entire item space. 
Early works \cite{DiffRec2,LD4MRec,D3Rec} adopt Gaussian noise priors $\mathcal{N}(\cdot,\cdot)$; however, Gaussian perturbations are incompatible with the probability simplex constraints inherent to preference scores. 
To address this mismatch, subsequent studies \cite{RecFusion,DDSR} introduce discrete priors to replace the Gaussian assumption that preserves the simplex structure throughout the diffusion process.
For instance, RecFusion \cite{RecFusion} adopts a Bernoulli prior $\mathcal{B}(\cdot;\cdot)$, resulting in a binomial diffusion formulation:}
\begin{itemize}[leftmargin=*]
    \item \textit{forward perturbation}:
    $\mathbf{x}_t \sim \mathcal{B}\left(\mathbf{x}_t;\alpha_t\mathbf{x}_0 +\frac{\alpha_t(1 - \alpha_t)}{2}\right), \alpha_t\in[0,1]$. 
    \item \textit{modeling target}:
    $\mathcal{B}\left(\mathbf{x}_{t-1}; \mathbf{s}_{\Theta}(\mathbf{x}_t,t,u)\right)\approx p(\mathbf{x}_{t-1} \mid \mathbf{x}_t)$.
    \item \textit{backward generation}:
    $\mathbf{x}_{t-1} \sim \mathcal{B}\left(\mathbf{x}_{t-1}; \mathbf{s}_{\Theta}(\mathbf{x}_t,t,u)\right)$.
\end{itemize}
The other approach employs a Categorical prior $\text{Cat}(\cdot;\cdot)$:
\begin{itemize}[leftmargin=*]
  \item \textit{forward perturbation:}
        $p_t(\mathbf{x}_t \mid \mathbf{x}_{t-1})
        = \text{Cat}(\mathbf{x}_t; \overline{\mathbf{Q}}_t \mathbf{x}_{0})$ where $\overline{\mathbf{Q}}_t = \prod_{i=1}^{t} \mathbf{Q}_i$.
  \item \textit{modeling target:} 
        $\mathbf{s}_{\Theta}(\mathbf{x}_t,t,u)\approx
        \mathbf{x}_0$.
  \item \textit{backward generation:}
        $
          p_t\left(\mathbf{x}_{t-1} \mid
          \mathbf{x}_t, \mathbf{x}_0 = \mathbf{s}_{\Theta}(\mathbf{x}_t, t, u)\right)
          = \text{Cat}\left(
              \mathbf{x}_{t-1};
              \frac{\mathbf{Q}_t^\top \mathbf{x}_t \odot
              \overline{\mathbf{Q}}_{t-1} \mathbf{s}_{\Theta}(\mathbf{x}_t, t, u)}
              {\mathbf{x}_t^\top \overline{\mathbf{Q}}_t
                \mathbf{s}_{\Theta}(\mathbf{x}_t, t, u)}
              \right)$.
\end{itemize}


\textbf{Preference Ratios} $\textcolor{orange}{\log \frac{p(i_p \mid u)}{p(i_d \mid u)}}$ characterize the relative preference between items, and a positive value indicates a more preferred item $i_p$ and a less preferred item $i_d$.
Notably, preference modeling in RLHF \cite{RLHF1, OnlineRLHF, IterRLHF} and DPO \cite{DPO, BetaDPO, SDPO, AlphaDPO} is grounded in the Bradley–Terry model \cite{BTPrefer}, which explicitly models preference ratios as in Equation \eqref{BTM}. 
This underscores the effectiveness of preference ratios in more faithfully capturing user preferences.
\begin{equation}
\label{BTM}
p(i_p \succ i_d \mid u) = \frac{p(i_p \mid u)}{p(i_p \mid u) + p(i_d \mid u)} = \frac{1}{1 + \exp{\left( - \textcolor{orange}{\log \frac{p(i_p \mid u)}{p(i_d \mid u)}} \right)}}=\sigma(\textcolor{orange}{\log \frac{p(i_p \mid u)}{p(i_d \mid u)}}).
\end{equation}


\noindent\textbf{Discrete Diffusion Models} previously are formulated based on the following Kolmogorov forward and backward equations \cite{DDM1, DDM2, DDSE}:
\begin{equation}
\label{DDMF}
\frac{\partial \mathbf{P}_{t|s}}{\partial t} = \mathbf{Q}_{t} \mathbf{P}_{t|s}  ,\qquad
\frac{\partial \mathbf{P}_{s|t}}{\partial t} = \mathbf{R}_{s} \mathbf{P}_{s|t},
\end{equation}
where $\mathbf{P}_{t|s}$ denotes the transition probability matrix from time $s$ to time $t$, $\mathbf{Q}_t$ is the forward transition rate matrix at time $t$, and $\mathbf{R}_s$ is the reverse-time transition rate matrix at time $s$.
Discrete diffusion models further model the ratios of data distributions through the following score entropy loss \cite{DDSE}:
\begin{align}
\label{SE}
&\mathcal{L}_{SE} =\mathbb{E}_{x_0\sim p_{data}}\mathbb{E}_{t\in U[0,T]}\mathbb{E}_{x_t\sim p_{t|0}(\cdot|x_0)}\left[\sum\limits_{y\in \mathcal{X}}\mathbf{Q}_t(x_t,y) \cdot l_{SE}(x_0,x_t,y) \right], \\
\label{SE2}
& l_{SE}(x_0,x_t,y) = e^{\mathbf{s}_{\Theta}(y,x_t)}- \frac{p_{t|0}(y|x_0)}{p_{t|0}(x_t|x_0)}\mathbf{s}_{\Theta}(y,x_t)+\frac{p_{t|0}(y|x_0)}{p_{t|0}(x_t|x_0)}[\log \frac{p_{t|0}(y|x_0)}{p_{t|0}(x_t|x_0)} -1],
\end{align}
where $\mathcal{X}$ denotes the discrete state space, and $\mathbf{s}_{\Theta}(y,x_t)$ is the output of a neural network that estimates the preference ratio between $y$ and $x_t$ at timestep $t$.






\section{Method}\label{sec:method}
\begin{figure}[t!]
    \vspace{-15pt}
    \centering
    \includegraphics[width=\textwidth]{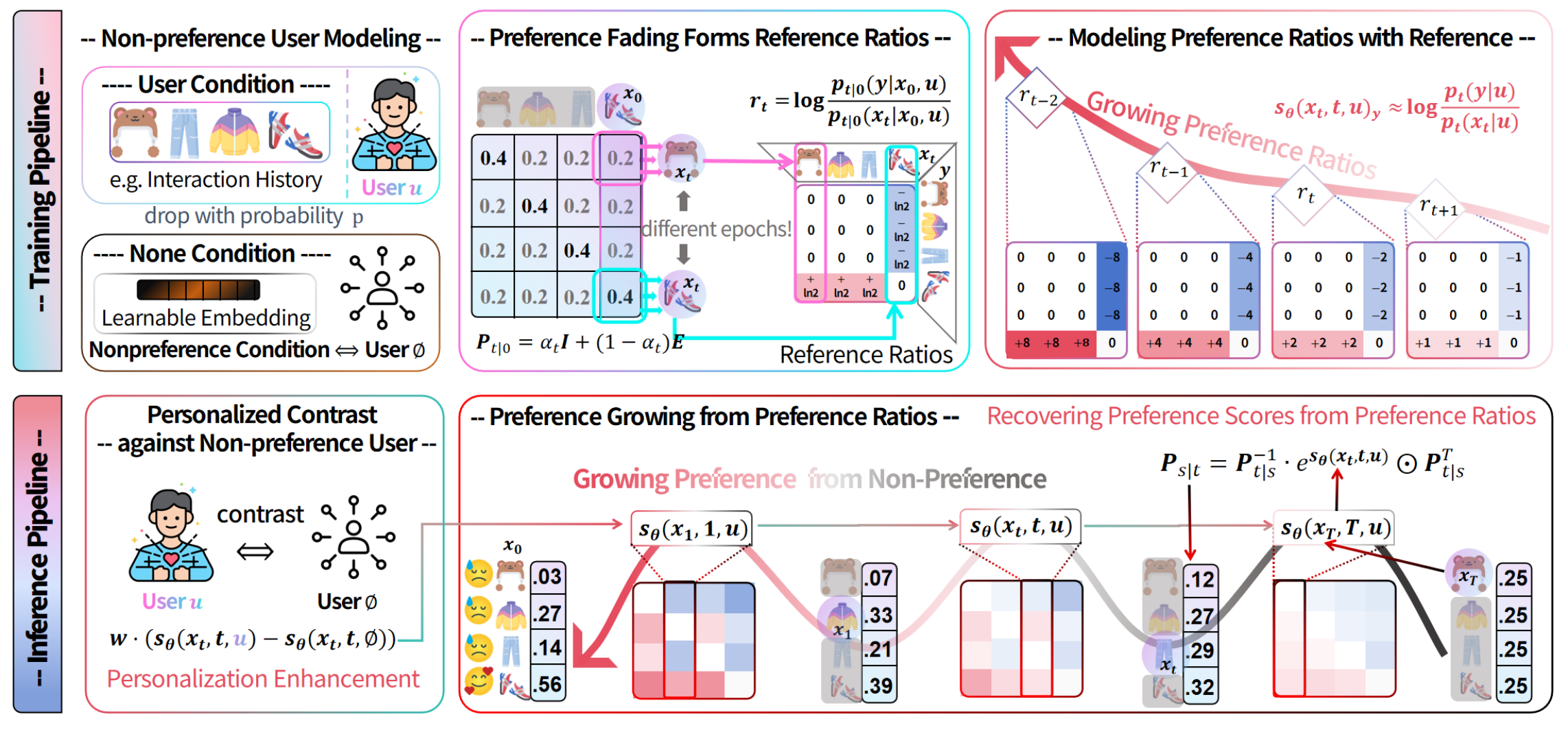}
    \caption{The overall training and inference pipeline of PreferGrow under the pair-wise setting.}
    \vspace{-10pt}
    \label{fig:framework}
\end{figure}

As illustrated in Figure \ref{fig:framework}, the pipeline of PreferGrow consists of three stages: the preference fading discrete diffusion process (Section~\ref{sec:preferfade}), the modeling of preference ratios with reference via score entropy loss (Section~\ref{sec:preferratios}), and the preference growing reverse generation process (Section~\ref{sec:prefergrow}).  
Additionally, PreferGrow models a non-preference user and achieves personalized enhancement (Section~\ref{sec:non-prefer}). 
Theoretical proofs are provided in Appendix~\ref{app:proofs}, while the computation of score entropy loss under different fading matrix settings is detailed in Appendix~\ref{app:variants}. 

\subsection{Forward Perturbation: Preference Fading Discrete Diffusion Process} \label{sec:preferfade}


PreferGrow operates directly on the discrete item set $\mathcal{X}$, wherein $x_0$ represents the preferred item $i$ of user $u$.
PreferGrow first fades user preference based on whether to retain or not to retain the preferred item $x_0$. 
The whole fading process is further achieved by progressively decreasing the probability of retention $\alpha_t$ from timestep $0$ to $T$, fading user preferences towards non-preference.

\subsubsection{Preference Fading Forms Reference Ratios}
Building upon recent advances in discrete DM \cite{concretescore, DDSE, RADD}, we introduce a preference fading discrete diffusion model tailored for recommendation.
For a user preference data $(u, x_0)$, we fade user preference by retaining the target item $x_t = x_0$ with probability $\alpha_t$ and replacing $x_0$ according to discrete distribution $\mathbf{E}(\mathcal{X}, x_0)$ as the following equation:
\begin{equation}
\label{P0t_element}
p_{t|0}(x_t|x_0)=\alpha_t\delta_{x_0}(x_t)+(1-\alpha_t)\mathbf{E}(x_t,x_0), x_t\in\mathcal{X}.
\end{equation}
where Dirac delta function $\delta_{x_0}(x_t)$ equals 1 when $x_t = x_0$, and 0 otherwise.
Also, in matrix form:
\begin{equation}
\label{P0t_matrix}
\mathbf{P}_{t|0} = \alpha_t \mathbf{I} + (1 - \alpha_t) \mathbf{E},
\end{equation}
where $\mathbf{I}$ is the identity matrix, representing retention, and $\mathbf{E}$ is a matrix whose column sums equal $1$, defining the replacing mode for perturbation.
For simplicity, we refer to $\mathbf{E}$ as fading matrix.
We further demonstrate that the idempotence of fading matrix $\mathbf{E}$ ensures the Markov property and reversibility of preference fading discrete diffusion process, which is crucial for the reverse generation.

\begin{theorem} \label{thm:preferfade}
Suppose $\alpha_t : [0,T] \to [0,1]$ is a strictly decreasing function with $\alpha_0 = 1$ and $\alpha_T = 0$. 
If $\mathbf{E} \in \mathbb{R}^{N \times N}$ is idempotent, \ie $\mathbf{E}^2 = \mathbf{E}$,
the following properties hold:
\begin{itemize}[leftmargin=*]
    \item \textbf{Markov property:} $\{\mathbf{P}_{t|0}\}_{t=0}^{T}$ is a Markov process and satisfies the Chapman-Kolmogorov equation:
    \begin{equation}
    \label{Pst}
    \mathbf{P}_{t|s} := \frac{\alpha_t}{\alpha_s} \mathbf{I} + \left(1 - \frac{\alpha_t}{\alpha_s} \right) \mathbf{E},
    \end{equation}
    \begin{equation}
    \label{CKE}
    \mathbf{P}_{t|r} = \mathbf{P}_{t|s} \mathbf{P}_{s|r}, \quad \text{for all } 0 \leq r \leq s \leq t \leq 1.
    \end{equation}
    \item \textbf{Invertibility:} Each $\mathbf{P}_{t|s}$ is invertible, and its inverse is given by:
    \begin{equation}
    \label{Inv_Pts}
    \mathbf{P}_{t|s}^{-1} = \frac{\alpha_s}{\alpha_t} \mathbf{I} + \left(1 - \frac{\alpha_s}{\alpha_t} \right) \mathbf{E}.
    \end{equation}
\end{itemize}
\end{theorem}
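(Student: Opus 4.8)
The plan is to reduce the entire theorem to a single algebraic fact about idempotent matrices: the one-parameter family $\{r\mathbf{I} + (1-r)\mathbf{E}\}$ is closed under multiplication, and composition corresponds to \emph{multiplying} the parameters. Concretely, I would first establish the key identity that for any scalars $r_1, r_2$,
\begin{equation}
\left(r_1\mathbf{I} + (1-r_1)\mathbf{E}\right)\left(r_2\mathbf{I} + (1-r_2)\mathbf{E}\right) = r_1 r_2\,\mathbf{I} + (1 - r_1 r_2)\,\mathbf{E}.
\end{equation}
This follows by expanding into four terms and using $\mathbf{E}^2 = \mathbf{E}$ to collapse the coefficient of $\mathbf{E}$, namely $r_1(1-r_2) + (1-r_1)r_2 + (1-r_1)(1-r_2)$, down to $1 - r_1 r_2$. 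This is the one place where idempotence is genuinely used: without $\mathbf{E}^2 = \mathbf{E}$ the product would spawn an $\mathbf{E}^2$ term, the family would fail to be closed, and the clean multiplicative parametrization would break down.

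With this identity in hand, every remaining claim becomes immediate once each transition matrix is tagged with its parameter $r_{t|s} = \alpha_t/\alpha_s$. First I would check consistency with the given marginal: setting $s = 0$ and using $\alpha_0 = 1$ gives $r_{t|0} = \alpha_t$, so the proposed $\mathbf{P}_{t|s}$ recovers exactly $\mathbf{P}_{t|0} = \alpha_t\mathbf{I} + (1-\alpha_t)\mathbf{E}$ of Equation \eqref{P0t_matrix}. Moreover, $\mathbf{P}_{t|s}\mathbf{P}_{s|0}$ carries parameter $(\alpha_t/\alpha_s)\cdot\alpha_s = \alpha_t$, matching $\mathbf{P}_{t|0}$; this verifies that the candidate two-time kernel factors the marginals correctly, which is precisely what certifies that the family defines a Markov process. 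For the Chapman–Kolmogorov equation \eqref{CKE}, the identity gives that $\mathbf{P}_{t|s}\mathbf{P}_{s|r}$ has parameter $(\alpha_t/\alpha_s)(\alpha_s/\alpha_r) = \alpha_t/\alpha_r = r_{t|r}$, so it equals $\mathbf{P}_{t|r}$.

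For invertibility I would simply observe that the claimed inverse \eqref{Inv_Pts} carries parameter $\alpha_s/\alpha_t$, and apply the identity once more: the product of parameters $(\alpha_t/\alpha_s)(\alpha_s/\alpha_t) = 1$ yields $1\cdot\mathbf{I} + 0\cdot\mathbf{E} = \mathbf{I}$, and the same computation holds in the reverse order, so the two matrices are mutual inverses. As a validity check I would also note that since $\alpha_t$ is strictly decreasing, $r_{t|s} = \alpha_t/\alpha_s \in [0,1]$ whenever $s \le t$, so each $\mathbf{P}_{t|s}$ is an honest convex combination of $\mathbf{I}$ and $\mathbf{E}$ and therefore inherits nonnegativity and column-sums-equal-one from $\mathbf{E}$, confirming it is a legitimate stochastic transition matrix.

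There is essentially no hard computational step here — the whole theorem is a corollary of the multiplicativity identity. The only point deserving care is the logical status of the Markov claim: I want to state explicitly that exhibiting a kernel family $\mathbf{P}_{t|s}$ which both reproduces the marginals (via $\mathbf{P}_{t|s}\mathbf{P}_{s|0} = \mathbf{P}_{t|0}$) and satisfies Chapman–Kolmogorov is exactly what establishes Markovianity, rather than leaving ``$\{\mathbf{P}_{t|0}\}$ is a Markov process'' as an unargued assertion.
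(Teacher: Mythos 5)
Your proposal is correct, and the underlying mechanism is the same as the paper's: idempotence is exactly what collapses the cross terms so that products of matrices of the form $r\mathbf{I}+(1-r)\mathbf{E}$ stay in the family. The difference is organizational, and in one place substantive. You factor everything through a single multiplicativity lemma (parameters compose by multiplication, $r_1,r_2\mapsto r_1r_2$), then read off marginal consistency, Chapman--Kolmogorov, and the inverse formula as three instances of it; the paper instead expands four separate products ($\mathbf{P}_{t|s}\mathbf{P}_{s|0}$, $\mathbf{P}_{t|s}\mathbf{P}_{s|r}$, $\mathbf{P}_{t|s}^{-1}\mathbf{P}_{t|s}$, and $\mathbf{P}_{t|r}\mathbf{P}_{s|r}^{-1}$), invoking $\mathbf{E}^2=\mathbf{E}$ each time. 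Your lemma buys economy and makes the role of idempotence transparent. Where the two proofs genuinely diverge is the Markov claim: you certify Markovianity ``by construction'' from marginal consistency plus Chapman--Kolmogorov, whereas the paper uses invertibility for this purpose --- it solves the marginalization constraint $\mathbf{P}_{t|r}=\mathbf{P}_{t|s,r}\mathbf{P}_{s|r}$ to get $\mathbf{P}_{t|s,r}=\mathbf{P}_{t|r}\mathbf{P}_{s|r}^{-1}$ and verifies this equals $\mathbf{P}_{t|s}$, i.e.\ $p_{t|s,r}(x_t|x_s,x_r)=p_{t|s}(x_t|x_s)$, so that conditioning on additional past information provably does not change the kernel. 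That extra step is why invertibility and Markovianity are bundled in the theorem: in the paper's logic the inverse is not just a standalone formula but the tool that forces any consistent two-time-conditioned kernel to coincide with $\mathbf{P}_{t|s}$. Your construction-based stance is defensible (and is how such statements are usually read in this literature), but if you want to match the strength of what the paper actually argues, you should add the one-line computation $\mathbf{P}_{t|r}\mathbf{P}_{s|r}^{-1}=\mathbf{P}_{t|s}$ --- which, pleasantly, is again just your lemma with parameters $\alpha_t/\alpha_r$ and $\alpha_r/\alpha_s$. Your closing observation that $\alpha_t/\alpha_s\in[0,1]$ makes each $\mathbf{P}_{t|s}$ a legitimate column-stochastic matrix is a small check the paper omits (it needs the column-sum and nonnegativity properties of $\mathbf{E}$ from Equation \eqref{P0t_matrix}, not idempotence alone), and is a welcome addition.
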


Upon obtaining $x_t$ through preference fading, as illustrated in Figure \ref{fig:framework}, we can compute the reference ratios for all items $\mathcal{X}$ at timestep $t$ with fading awareness $1-\alpha_t$ as follows:
\begin{equation}
    \label{ref_ratios}
    r_t(x_0, x_t, y)=\log \frac{p_{t|0}(y \mid x_0, u)}{p_{t|0}(x_t \mid x_0, u)}=\log \frac{\alpha_t\delta_{x_0}(y)+(1-\alpha_t)\mathbf{E}(y,x_0)}{\alpha_t\delta_{x_0}(x_t)+(1-\alpha_t)\mathbf{E}(x_t,x_0)}, \forall y\in\mathcal{X}.
\end{equation}

As shown in Figure \ref{fig:framework}, when the preferred item $x_0$ is retained, \ie $x_t = x_0$, the reference ratio $r_t(x_0, x_t=x_0, y\ne x_0)$ is negative, reflecting a tendency to stay with the current item. 
Conversely, when $x_0$ is replaced, \ie $x_t \ne x_0$, the reference ratio $r_t(x_0, x_t\ne x_0, y= x_0)$ tends to be positive, highlighting the preference for the positive item $x_0$.
In this case, the faded item $x_t$ can be interpreted as a negative item, establishing a conceptual connection to the mechanism of negative sampling.


\subsubsection{Design Paradigms of the Idempotent Fading Matrix} \label{sec:replacement}

Given that designing such an idempotent fading matrix is crucial, we derive that, if the preference fading process converges to a unified state, the fading matrix $ \mathbf{E} $ can be expressed in closed form. 

\begin{proposition} \label{pro:fade}
    Suppose the preference fading Markov process $\{\mathbf{P}_{t|0}\}_{t=0}^{T}$ converges to a unified non-preference state $\vec{p}_T$. 
    Then, the fading matrix $\mathbf{E}$ can be expressed in closed form as:
    \begin{equation}
    \label{E_rank1}
    \mathbf{E} = \frac{ \vec{p}_T \vec{1}^\top }{ \vec{1}^\top \vec{p}_T }.
    \end{equation}
\end{proposition}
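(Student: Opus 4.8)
The plan is to read off the closed form of $\mathbf{E}$ from the terminal behaviour of the forward process and then confirm that the resulting matrix is a legitimate idempotent fading matrix. The natural starting point is the matrix form of the perturbation kernel, $\mathbf{P}_{t|0} = \alpha_t \mathbf{I} + (1-\alpha_t)\mathbf{E}$ (Equation \eqref{P0t_matrix}). Evaluating at $t = T$ and using the boundary condition $\alpha_T = 0$ collapses this to $\mathbf{P}_{T|0} = \mathbf{E}$, so the fading matrix is exactly the terminal transition matrix of the Markov process established in Theorem \ref{thm:preferfade}.

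Next I would translate the convergence hypothesis into a statement about the columns of $\mathbf{E}$. Since the $x_0$-th column of $\mathbf{P}_{t|0}$ is the conditional law $p_{t|0}(\cdot \mid x_0)$ (recall $\mathbf{E}$ has unit column sums, so its columns carry distributions), the assumption that the process converges to a single non-preference state $\vec{p}_T$ independent of the initial item $x_0$ forces every column of $\mathbf{P}_{T|0} = \mathbf{E}$ to coincide with $\vec{p}_T$. A matrix all of whose columns equal $\vec{p}_T$ is precisely the rank-one outer product $\vec{p}_T \vec{1}^\top$; imposing the fading-matrix constraint that each column sum equals $1$ then fixes the normalisation, since each column of $\vec{p}_T \vec{1}^\top$ sums to $\vec{1}^\top \vec{p}_T$. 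Dividing by this scalar yields the claimed form $\mathbf{E} = \frac{\vec{p}_T \vec{1}^\top}{\vec{1}^\top \vec{p}_T}$.

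I would then close the loop by checking consistency with Theorem \ref{thm:preferfade}: a short computation gives $\mathbf{E}^2 = \frac{\vec{p}_T (\vec{1}^\top \vec{p}_T) \vec{1}^\top}{(\vec{1}^\top \vec{p}_T)^2} = \mathbf{E}$, so $\mathbf{E}$ is idempotent and the Markovianity and reversibility guarantees apply; likewise its column sums are $1$, and for any probability vector $\vec{v}$ one has $\mathbf{E}\vec{v} = \vec{p}_T/(\vec{1}^\top \vec{p}_T)$, confirming that $\mathbf{E}$ acts as a projector onto the unified non-preference state. A slightly more structural route is available if one prefers not to invoke $\alpha_T = 0$ directly: idempotence already confines the eigenvalues of $\mathbf{E}$ to $\{0,1\}$, convergence to a single state makes the $1$-eigenspace one-dimensional and spanned by $\vec{p}_T$ (so $\mathbf{E} = \vec{p}_T \vec{w}^\top$), and column-stochasticity $\vec{1}^\top \mathbf{E} = \vec{1}^\top$ then pins down $\vec{w} = \vec{1}/(\vec{1}^\top \vec{p}_T)$.

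The main obstacle here is not computational but interpretational: one must correctly encode ``converges to a unified non-preference state'' as the condition that all columns of the terminal matrix coincide, and then keep the normalisation bookkeeping straight so that $\mathbf{E}$ remains column-stochastic regardless of whether $\vec{p}_T$ is supplied in normalised form. Once that interpretation is fixed, the remaining algebra is immediate.
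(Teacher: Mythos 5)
Your proof is correct, but it follows a genuinely different route from the paper's. You evaluate the forward kernel $\mathbf{P}_{t|0} = \alpha_t\mathbf{I} + (1-\alpha_t)\mathbf{E}$ at $t=T$, using the boundary condition $\alpha_T = 0$ from Theorem \ref{thm:preferfade} to identify $\mathbf{E} = \mathbf{P}_{T|0}$, and then read the column structure directly off the convergence hypothesis: the terminal law $p_{T|0}(\cdot \mid x_0)$ equals the normalised $\vec{p}_T$ for every $x_0$, so all columns of $\mathbf{E}$ coincide, and the column-sum constraint fixes $\mathbf{E} = \vec{p}_T\vec{1}^\top/(\vec{1}^\top\vec{p}_T)$. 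The paper works instead at the infinitesimal level: it computes the rate matrix $\mathbf{Q}_t = \beta(t)(\mathbf{E}-\mathbf{I})$, derives the equilibrium equation $(\mathbf{I}-\mathbf{E})\vec{p}_T = \vec{0}$, observes via idempotence that $(\mathbf{I}-\mathbf{E})\mathbf{E} = \mathbf{0}$ so every column of $\mathbf{E}$ solves the same equation, and then \emph{assumes} all columns are identical in order to guarantee uniqueness of $\vec{p}_T$, finally verifying idempotence of the resulting closed form. Each approach buys something. Yours is more elementary and, importantly, derives the all-columns-identical property from the stated hypothesis rather than postulating it --- the paper's ``we therefore assume that all columns of $\mathbf{E}$ are identical'' is the least satisfying step of its own argument, and your secondary eigenvalue variant (idempotence forces eigenvalues in $\{0,1\}$; a unified limit forces a one-dimensional fixed space spanned by $\vec{p}_T$; column-stochasticity pins the left factor to $\vec{1}/(\vec{1}^\top\vec{p}_T)$) is essentially the rigorous version of that step. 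The paper's route has a compensating advantage: it does not require $\alpha_T$ to vanish exactly. Under the parameterisation $\alpha_t = e^{-\int_0^t \beta(\tau)\,\mathrm{d}\tau}$ used in its proof, $\alpha_T$ is strictly positive whenever $\beta$ is integrable, so convergence to $\vec{p}_T$ is only asymptotic, and the stationary-distribution argument covers that regime where your terminal-evaluation step would not literally apply; it also introduces the rate-matrix machinery that Proposition \ref{pro:rate} and Theorem \ref{thm:prefergrow} subsequently reuse.
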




\hugq{
Equation~\eqref{E_rank1} gives a rank-$1$ solution for the fading matrix $\mathbf{E}$. 
Most existing discrete diffusion models \cite{DDSE, concretescore, RADD, MDD1, MDD2, LDR, SDDM} implicitly adopt such rank-$1$ instances (i.e., the absorbing and uniform cases). 
In Appendix~\ref{app:variants}, examine several representative rank-$1$ configurations of $\mathbf{E}$ that, in a physical sense, correspond to distinct negative sampling strategies and thus induce different forms of preference ratios.
Specifically, by parameterizing the rank-$1$ preference-fading matrix, PreferGrow flexibly supports point-wise, pair-wise, and hybrid preference-ratio modeling. 
These variants align with interpretable negative-sampling schemes, enabling the framework to adapt to diverse recommendation objectives.
For each setting, we provide analytical solutions for the fading matrix, closed-form expressions for the reference ratios, and simplified training losses.
We believe prior works on negative sampling~\cite{Bert4Rec,InfoNCE} offer valuable guidance for designing more effective rank-$1$ fading matrices; a systematic exploration is left for future work.
Beyond the rank-$1$ case, Appendix~\ref{app:rankr} derives a general rank-$r$ solution for $\mathbf{E}$.
We present a closed-form characterization together with a discussion of its physical interpretation. 
Notably, a rank-$r$ fading matrix induces a quantization of the item space --- i.e., a partition into $r$ components. 
A comprehensive study of this quantization mechanism and its algorithmic implications is deferred to our future work.
}

\subsection{Modeling Target: Preference Ratios with Reference via Score Entropy} \label{sec:preferratios}
\hugq{
We begin by computing the reference ratios $r_t(x_0, x_t, y)$ induced by preference fading, as defined in Equation~\eqref{ref_ratios}. 
These reference ratios are then employed to guide the modeling of the user-conditioned preference ratio 
$\log\frac{p_t(y\mid u)}{p_t(x_t\mid u)}$, where $u$ denotes the user context.
Concretely, given a training instance $(u,x_0)$ in sequential recommendation, we encode the interaction history $u$ with a sequential recommender to obtain a user embedding $\mathbf{u}=\mathrm{SeqRec}(u)$ (instantiated as SASRec~\cite{SASRec} in our experiments). 
We then parameterize a learnable function $s_{\Theta}$ to estimate the preference ratio for each $y\in\mathcal{X}$ at time $t$:
\begin{equation}
s_{\Theta}(x_t,t,u)_y 
= \mathbf{y}^\top \,\textbf{MLP}\big(\operatorname{concat}(\mathbf{x}_t,\mathbf{t},\mathbf{u})\big),
\quad y\in\mathcal{X},
\label{s_theta_prefgrow}
\end{equation}
where $\mathbf{y}$ and $\mathbf{x}_t$ are the embeddings of item $y$ and the faded item $x_t$, respectively, and $\mathbf{t}$ is the embedding of timestep $t$.
For the training objective, we adopt the well-known score entropy loss for discrete diffusion modeling \cite{DDSE}, as defined in Equations~\eqref{SE} and~\eqref{SE2}. 
To legitimately employ it, we first verify that PreferGrow satisfies the applicability conditions of score entropy loss.


\begin{proposition}\label{pro:rate}
The preference-fading discrete diffusion $\{\mathbf{P}_{t|s}\}_{0\le s\le t\le T}$ with an idempotent fading matrix $\mathbf{E}$ satisfies the Kolmogorov forward equation:
\begin{equation}\label{KFE}
\frac{\partial \mathbf{P}_{t|s}}{\partial t} \;=\; \mathbf{Q}_{t}\,\mathbf{P}_{t|s}.
\end{equation}
Here the rate matrix $\mathbf{Q}_{t}$ and $\alpha_t$ are defined succinctly by
\begin{equation}\label{rate_matrix}
\mathbf{Q}_{t}:=\lim_{s\to t}\frac{\partial \mathbf{P}_{t|s}}{\partial t}=\beta(t)\,(\mathbf{E}-\mathbf{I}),
\quad
\alpha_t:=\exp\Big(-\int_{0}^{t}\beta(\tau)\,\mathrm{d}\tau\Big),
\end{equation}
with $\beta(\tau)>0$ and $\mathbf{I}$ the identity matrix.
\end{proposition}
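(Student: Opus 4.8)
The plan is to use the closed-form transition matrix already established in Theorem~\ref{thm:preferfade} rather than re-deriving the process from scratch. Recall from Equation~\eqref{Pst} that $\mathbf{P}_{t|s} = \frac{\alpha_t}{\alpha_s}\mathbf{I} + \bigl(1 - \frac{\alpha_t}{\alpha_s}\bigr)\mathbf{E}$. Since $s$ is held fixed during differentiation in $t$, the factor $\alpha_s$ is constant, and a one-line differentiation gives
\[
\frac{\partial \mathbf{P}_{t|s}}{\partial t} = \frac{\alpha_t'}{\alpha_s}\,(\mathbf{I} - \mathbf{E}),
\]
where $\alpha_t' := \mathrm{d}\alpha_t/\mathrm{d}t$. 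The relation between $\alpha$ and $\beta$ is $\alpha_t' = -\beta(t)\,\alpha_t$, equivalently $\beta(t) = -\tfrac{\mathrm{d}}{\mathrm{d}t}\log\alpha_t$, which is positive precisely because $\alpha_t$ is strictly decreasing (consistent with the hypotheses of Theorem~\ref{thm:preferfade}). Substituting rewrites the derivative as $\frac{\beta(t)\alpha_t}{\alpha_s}(\mathbf{E} - \mathbf{I})$. Letting $s \to t$ sends $\alpha_t/\alpha_s \to 1$, which recovers $\mathbf{Q}_t = \beta(t)(\mathbf{E} - \mathbf{I})$ and thereby confirms the stated definition of the rate matrix in Equation~\eqref{rate_matrix}.

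The core verification is to check that $\mathbf{Q}_t\,\mathbf{P}_{t|s}$ reproduces the same expression. Expanding $\beta(t)(\mathbf{E} - \mathbf{I})\bigl[\frac{\alpha_t}{\alpha_s}\mathbf{I} + (1-\frac{\alpha_t}{\alpha_s})\mathbf{E}\bigr]$, the term proportional to $\mathbf{I}$ yields $\beta(t)\frac{\alpha_t}{\alpha_s}(\mathbf{E} - \mathbf{I})$, while the term proportional to $\mathbf{E}$ carries the factor $(\mathbf{E} - \mathbf{I})\mathbf{E} = \mathbf{E}^2 - \mathbf{E}$. This is exactly where idempotence enters: $\mathbf{E}^2 = \mathbf{E}$ forces the cross term to vanish, leaving $\mathbf{Q}_t\,\mathbf{P}_{t|s} = \beta(t)\frac{\alpha_t}{\alpha_s}(\mathbf{E}-\mathbf{I})$, which matches $\partial_t \mathbf{P}_{t|s}$ exactly and establishes Equation~\eqref{KFE}.

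The computation itself is short, so the proof presents no serious technical obstacle; the crux—and the step I would flag explicitly—is the cancellation $(\mathbf{E} - \mathbf{I})\mathbf{E} = \mathbf{E}^2 - \mathbf{E} = \mathbf{0}$, which is precisely where the idempotence hypothesis is consumed. Without it, $\mathbf{Q}_t\,\mathbf{P}_{t|s}$ would acquire a spurious $(\mathbf{E}^2 - \mathbf{E})$ contribution and the forward equation would fail, so the hypothesis is actively used rather than merely inherited from Theorem~\ref{thm:preferfade}. As a consistency check I would also confirm that the construction respects the structural constraints: since $\mathbf{E}$ has unit column sums, $\mathbf{Q}_t = \beta(t)(\mathbf{E} - \mathbf{I})$ has zero column sums and is therefore a legitimate transition-rate matrix, and $\beta(\tau)>0$ guarantees that $\alpha_t = \exp(-\int_0^t \beta(\tau)\,\mathrm{d}\tau)$ is strictly decreasing with $\alpha_0 = 1$, in agreement with the standing assumptions.
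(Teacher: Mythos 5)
Your proposal is correct and follows essentially the same route as the paper's proof: differentiate the closed-form $\mathbf{P}_{t|s} = \frac{\alpha_t}{\alpha_s}\mathbf{I} + \bigl(1-\frac{\alpha_t}{\alpha_s}\bigr)\mathbf{E}$ using $\partial_t \alpha_t = -\beta(t)\alpha_t$, then verify $\mathbf{Q}_t\,\mathbf{P}_{t|s} = \partial_t \mathbf{P}_{t|s}$ by expanding the product and invoking idempotence $\mathbf{E}^2=\mathbf{E}$ to eliminate the cross term. Your grouping of that cross term as $(\mathbf{E}-\mathbf{I})\mathbf{E}$ and the added sanity checks (zero column sums of $\mathbf{Q}_t$, monotonicity of $\alpha_t$) are minor presentational refinements, not a different argument.
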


To better explain the effectiveness of the Score Entropy (SE) loss in recommendation, we theoretically analyze its connection to the Binary Cross-Entropy (BCE) loss \cite{Bert4Rec}.

\begin{proposition}\label{prop:se-bce-link}
Define the soft label $\pi_{y \succ x_t \mid x_0} = p(y \succ x_t \mid x_0) = \sigma(r_t(x_0, x_t, y))$ as in Equation \eqref{BTM}, where $\sigma$ denotes the sigmoid function, and use it as the label for the BCE loss. 
Treating $\sigma(s_{\Theta}(x_t,t,u)_y)$ as the prediction for the BCE loss yields the soft-label BCE objective:
$$
\mathcal{L}_{\textit{sBCE}} = -\pi_{y \succ x_t \mid x_0} \log\sigma(s_{\Theta}(x_t,t,u)_y) - (1 - \pi_{y \succ x_t \mid x_0})\log(1 - \sigma(s_{\Theta}(x_t,t,u)_y)).
$$
From a gradient perspective, the SE loss and the soft-label BCE loss are related by:
\begin{equation}\label{grad_loss}
\nabla_{s_\Theta} \mathcal{L}_{\textit{SE}} = (1 + e^{s_{\Theta}(x_t,t,u)_y})(1 + e^{r_t(x_0,x_t,y)}) \nabla_{s_\Theta} \mathcal{L}_{\textit{sBCE}}.
\end{equation}
\end{proposition}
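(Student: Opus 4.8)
The plan is to reduce the entire statement to a single-variable calculus exercise in the scalar network output $s := s_{\Theta}(x_t,t,u)_y$, and then match the two gradients coordinate by coordinate. First I would abbreviate $r := r_t(x_0,x_t,y)$ and observe, from the definition in Equation~\eqref{ref_ratios}, that $e^{r} = p_{t|0}(y\mid x_0)/p_{t|0}(x_t\mid x_0)$. Substituting this identity into the per-instance score entropy integrand $l_{SE}$ of Equation~\eqref{SE2} rewrites it compactly as $l_{SE} = e^{s} - e^{r}s + e^{r}(r-1)$, where the last two summands are constants in $s$. Differentiating then yields the clean expression $\partial l_{SE}/\partial s = e^{s} - e^{r}$.

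Next I would differentiate the soft-label BCE term. Holding the label $\pi = \sigma(r)$ fixed and using the standard identities $\tfrac{d}{ds}\log\sigma(s) = 1-\sigma(s)$ and $\tfrac{d}{ds}\log(1-\sigma(s)) = -\sigma(s)$, the two occurrences of $\pi$ cancel, leaving $\partial \mathcal{L}_{sBCE}/\partial s = \sigma(s) - \sigma(r)$.

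The one genuinely non-routine algebraic step is the sigmoid-difference identity: writing $\sigma(s) = e^{s}/(1+e^{s})$ and placing the difference over a common denominator gives $\sigma(s) - \sigma(r) = (e^{s}-e^{r})/[(1+e^{s})(1+e^{r})]$. Dividing the score entropy gradient by the BCE gradient then collapses the shared factor $e^{s}-e^{r}$ and produces exactly the claimed multiplier $(1+e^{s})(1+e^{r})$. Since for a fixed target $y$ only that single coordinate of $s_{\Theta}$ enters $l_{SE}$, this per-coordinate identity lifts directly to the vector gradient $\nabla_{s_\Theta}$ asserted in Equation~\eqref{grad_loss}.

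I expect the main obstacle to be bookkeeping rather than analysis: I must keep the sign conventions of the reference ratio $r$ and the output $s$ consistent so that $e^{r}$ and $e^{s}$ enter symmetrically, and I must be explicit that the comparison is made at the level of the per-pair integrands — so that the weighting $\mathbf{Q}_t(x_t,y)$ and the expectations in Equation~\eqref{SE} do not affect the stated factor. Once those conventions are pinned down, the result follows in essentially two lines from the sigmoid-difference identity.
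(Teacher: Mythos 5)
Your proposal is correct and follows essentially the same route as the paper's own proof: both reduce $l_{SE}$ to $e^{s}-e^{r}s+e^{r}(r-1)$, compute the two gradients $e^{s}-e^{r}$ and $\sigma(s)-\sigma(r)$, and link them through the identity $e^{s}-e^{r}=(1+e^{s})(1+e^{r})\left[\sigma(s)-\sigma(r)\right]$ (the paper factors where you divide, which is the same algebra). Your explicit remark that the comparison is made at the level of per-pair integrands, before the $\mathbf{Q}_t$ weighting and expectations, is a point the paper leaves implicit but is consistent with its proof.
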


Consequently, the SE loss and the soft-label BCE loss share the same descent direction (up to a positive scalar) and attain the same optima.
Hence, although originating from discrete diffusion modeling, SE loss is equally well suited to ranking-oriented recommendation tasks as the BCE loss.
}

\subsection{Backward Generation: Preference Growing from Preference Ratios} \label{sec:prefergrow}

After estimating the preference ratios via score entropy, we reverse the preference fading process to grow user preferences from the non-preference state.  
We first demonstrate that this reverse process of preference fading, referred to as the preference growing process, is Markovian and satisfies the Kolmogorov backward equation.  
Together with Proposition \ref{pro:rate}, this implies that PreferGrow with an idempotent fading matrix inherits the key properties of previous discrete diffusion models \cite{DDM2,DDSE}.

\begin{theorem} \label{thm:prefergrow}
If the preference fading process $\{\mathbf{P}_{t|0}\}_{t=0}^{T}$ with an idempotent fading matrix $\mathbf{E}$ converges to a unified non-preference state $\vec{p}_T$, the reverse preference growing process $\{\mathbf{P}_{s|T}\}_{s=T}^{0}$ holds:
\begin{itemize}[leftmargin=*]
    \item \textbf{Markov property:} $\{\mathbf{P}_{s|T}\}_{s=T}^{0}$ is also a Markov process:
    \begin{equation}
    \label{Pst_matrix}
    \mathbf{P}_{s|t} = \mathbf{P}_{t|s}^{-1} \cdot \textcolor{blue}{\left[\vec{p}_t\cdot \left(\frac{1}{\vec{p}_t}\right)^\top\right]} \odot \mathbf{P}_{t|s}^{\top}, \quad \forall\, 0 \le s \le t \le T.
    \end{equation}
    \item \textbf{Kolmogorov backward equation:} Preference growing $\{\mathbf{P}_{s|T}\}_{s=T}^{0}$ satisfies:
    \begin{equation}
    \label{P_st_matrix}
   \frac{\partial \mathbf{P}_{s|t}}{\partial t} = \mathbf{R}_{s} \mathbf{P}_{s|t}, \quad \forall\, 0 \le s \le t \le T. 
    \end{equation}
    \begin{equation}
    \label{Rs_matrix}
    \mathbf{R}_{s} := \lim\limits_{t\rightarrow s} \frac{\partial \mathbf{P}_{s|t}}{\partial s}=\mathbf{Q}_{s}^\top \odot \textcolor{blue}{\left[\vec{p}_s\cdot \left(\frac{1}{\vec{p}_s}\right)^\top\right]} - \mathbf{Q}_{s} \cdot \textcolor{blue}{\left[\vec{p}_s\cdot \left(\frac{1}{\vec{p}_s}\right)^\top\right]} \odot \mathbf{I}.
    \end{equation}
\end{itemize}
The matrix $\textcolor{blue}{\vec{p}_t \cdot \left(\frac{1}{\vec{p}_t}\right)^\top}$ denotes exponential preference ratios with entries given by $\frac{p_t(y \mid u)}{p_t(x_t \mid u)} \forall x_t, y \in \mathcal{X}$.
\end{theorem}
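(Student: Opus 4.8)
The plan is to prove the two claims of Theorem~\ref{thm:prefergrow} separately, in both cases building on the explicit, invertible forward kernel from Theorem~\ref{thm:preferfade} and the rate matrix from Proposition~\ref{pro:rate}. Throughout I will write $\mathbf{M}_\tau:=\vec{p}_\tau\,(\tfrac{1}{\vec{p}_\tau})^{\top}$ and $D_\tau:=\mathrm{diag}(\vec{p}_\tau)$, and use the standing consequence of convergence to $\vec{p}_T$ (Proposition~\ref{pro:fade}) that every marginal $\vec{p}_\tau$ is strictly positive, so that $D_\tau^{-1}$ and $\tfrac{1}{\vec{p}_\tau}$ are well defined.

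For the Markov property and the reverse kernel \eqref{Pst_matrix}, I would start from the elementary Bayes identity valid for any Markov process,
\[
\mathbf{P}_{s|t}(x_s,x_t)=\mathbf{P}_{t|s}(x_t,x_s)\,\frac{p_s(x_s)}{p_t(x_t)},
\]
which is exactly the diagonal conjugation $\mathbf{P}_{s|t}=D_s\,\mathbf{P}_{t|s}^{\top}\,D_t^{-1}$. The remaining task is to massage this into the stated form. Using the marginal evolution $\vec{p}_t=\mathbf{P}_{t|s}\vec{p}_s$ together with the invertibility of $\mathbf{P}_{t|s}$ guaranteed by \eqref{Inv_Pts}, I substitute $\vec{p}_s=\mathbf{P}_{t|s}^{-1}\vec{p}_t$, so that $p_s(x_s)=[\mathbf{P}_{t|s}^{-1}\vec{p}_t]_{x_s}$. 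Pulling $p_t(x_t)^{-1}$ inside the sum identifies the entry $[\mathbf{P}_{t|s}^{-1}\mathbf{M}_t]_{x_s,x_t}$, and pairing it with $\mathbf{P}_{t|s}(x_t,x_s)=[\mathbf{P}_{t|s}^{\top}]_{x_s,x_t}$ as a Hadamard factor yields precisely \eqref{Pst_matrix} (read with the ordinary product grouped before $\odot$, matching the grouping in \eqref{Rs_matrix}). The reverse Chapman--Kolmogorov property then follows with no new computation: inserting the conjugation form into $\mathbf{P}_{r|s}\mathbf{P}_{s|t}$, the inner diagonal factors telescope and the transposes reverse order, so $\mathbf{P}_{r|s}\mathbf{P}_{s|t}=D_r(\mathbf{P}_{t|s}\mathbf{P}_{s|r})^{\top}D_t^{-1}=\mathbf{P}_{r|t}$ by the forward equation \eqref{CKE}.

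For the backward equation \eqref{P_st_matrix}--\eqref{Rs_matrix}, I would differentiate the conjugation form directly. From the Kolmogorov forward equation $\partial_t\mathbf{P}_{t|s}=\mathbf{Q}_t\mathbf{P}_{t|s}$ of Proposition~\ref{pro:rate} I get $\partial_t\mathbf{P}_{t|s}^{\top}=\mathbf{P}_{t|s}^{\top}\mathbf{Q}_t^{\top}$, and from the induced marginal equation $\partial_t\vec{p}_t=\mathbf{Q}_t\vec{p}_t$ I get $\partial_t D_t^{-1}=-D_t^{-1}\mathrm{diag}(\mathbf{Q}_t\vec{p}_t)D_t^{-1}$. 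Collecting the two contributions and passing to the limit $t\to s$ (where $\mathbf{P}_{t|s}\to\mathbf{I}$ and $D_t\to D_s$) leaves $\mathbf{R}_s=D_s\mathbf{Q}_s^{\top}D_s^{-1}-\mathrm{diag}(\mathbf{Q}_s\vec{p}_s)D_s^{-1}$. The last step is purely notational: the conjugation $D_s\mathbf{Q}_s^{\top}D_s^{-1}$ has $(y,x)$ entry $\tfrac{p_s(y)}{p_s(x)}\mathbf{Q}_s(x,y)$, i.e. $\mathbf{Q}_s^{\top}\odot\mathbf{M}_s$, while $\mathrm{diag}(\mathbf{Q}_s\vec{p}_s)D_s^{-1}$ is diagonal with entries $\sum_z\mathbf{Q}_s(x,z)\tfrac{p_s(z)}{p_s(x)}=[\mathbf{Q}_s\mathbf{M}_s]_{x,x}$, i.e. $(\mathbf{Q}_s\mathbf{M}_s)\odot\mathbf{I}$. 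Together these give exactly \eqref{Rs_matrix}, and a one-line column-sum check confirms that the columns of $\mathbf{R}_s$ sum to zero, so it is a legitimate reverse rate matrix.

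The main obstacle I anticipate is bookkeeping rather than any deep estimate, concentrated in two places. First, idempotence $\mathbf{E}^2=\mathbf{E}$ is what makes the forward kernel invertible and what makes the forward dynamics collapse cleanly---through $(\mathbf{E}-\mathbf{I})\mathbf{E}=0$---into the rate $\mathbf{Q}_t=\beta(t)(\mathbf{E}-\mathbf{I})$; I must keep invoking Theorem~\ref{thm:preferfade} and Proposition~\ref{pro:rate} rather than treating the kernels as generic. Second, the time-orientation signs require care: because the reverse process flows from $T$ down to $0$, the limit $\lim_{t\to s}\partial_s\mathbf{P}_{s|t}$ in \eqref{Rs_matrix} and the $\partial_t$-form of \eqref{P_st_matrix} are linked through $\mathbf{P}_{s|s}=\mathbf{I}$, which forces $\partial_s\mathbf{P}_{s|t}\big|_{t=s}=-\partial_t\mathbf{P}_{s|t}\big|_{t=s}$; I must align these with the convention fixed in \eqref{DDMF} so the stated $\mathbf{R}_s$ carries the correct sign, and interpret \eqref{P_st_matrix} as the generator-level relation (with the generator read at the point where the defining derivative is taken), consistent with the forward case in Proposition~\ref{pro:rate}.
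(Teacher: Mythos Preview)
Your proposal is correct and shares the same skeleton as the paper's proof---Bayes' rule for the reverse kernel, then differentiation to extract $\mathbf{R}_s$ and verify the backward equation---but it is organized more abstractly. The paper first establishes the reverse Markov property by an explicit $\sigma$-algebra argument (showing $p_{s\mid \ge t}(x_s\mid x_t,A)=p_{s\mid t}(x_s\mid x_t)$ from the forward Markov property), then derives the Bayes form and rewrites it as \eqref{Pst_matrix} exactly as you do. For the backward equation, however, the paper does not work through your conjugation $D_s\mathbf{P}_{t|s}^{\top}D_t^{-1}$; instead it differentiates the explicit closed form in terms of $\alpha_s$ (using $\partial\alpha_s/\partial s=\alpha_s\beta(s)$ under the reverse-time sign convention you anticipated), computes $\partial_s\mathbf{P}_{t|s}=\mathbf{P}_{t|s}\mathbf{Q}_s$ and $\partial_s\mathbf{P}_{s|t}$ separately, and finally checks $\mathbf{R}_s\mathbf{P}_{s|t}=\partial_s\mathbf{P}_{s|t}$ entry by entry. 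Your conjugation route collapses this last step to one line, since
\[
\mathbf{R}_s\,D_s\mathbf{P}_{t|s}^{\top}D_t^{-1}
= D_s\mathbf{Q}_s^{\top}\mathbf{P}_{t|s}^{\top}D_t^{-1}
-\mathrm{diag}(\mathbf{Q}_s\vec{p}_s)\,\mathbf{P}_{t|s}^{\top}D_t^{-1}
\]
by telescoping $D_s^{-1}D_s$, and this is precisely $\partial_s\mathbf{P}_{s|t}$; the paper's element-wise verification is then unnecessary. Conversely, the paper's explicit approach makes the role of idempotence $(\mathbf{E}-\mathbf{I})\mathbf{E}=0$ visible at each step, whereas in your argument it is entirely absorbed into the forward Kolmogorov equation from Proposition~\ref{pro:rate}. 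Your caution about the $\partial_t$ versus $\partial_s$ convention is well placed: the paper in fact proves $\partial_s\mathbf{P}_{s|t}=\mathbf{R}_s\mathbf{P}_{s|t}$ (the reverse-time forward equation), which is exactly what your method delivers directly.
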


At this point, we are equipped with the approximated preference ratios $\mathbf{s}_{\Theta}$ to grow user preferences.  
Given a user condition $u$, we first sample an item $x_T$ from the non-preference state $\vec{p}_T$, and then compute the reverse transition matrix $\mathbf{P}_{s|t}$ to progressively grow user preference over time as:

\begin{equation}
    \label{Pst_vec}
   p_{s|t}(x_s=y|x_t) = {p}_{t|s}(x_t|x_s=y)\cdot \sum\limits_{z\in\mathcal{X}}{p}_{t|s}^{-1}(x_t=y|x_s=z) \cdot e^{\mathbf{s}_{\Theta}(x_t,t,u)_z}, \forall\, 0 \le s \le t \le T. 
\end{equation}

As illustrated in Figure \ref{fig:framework}, we iteratively grow user preferences backward until reaching $x_0$.  
Finally, based on the preference scores at timestep 0, we recommend the top-$K$ items with the highest scores.



\subsection{Modeling Non-preference User for Personalization} \label{sec:non-prefer}

\subsubsection{Non-preference User Modeling}

To enhance the personalization of preference ratios, we introduce the modeling of non-preference users---\ie cold-start users with no interaction history---during training.
This allows the model to perform personalized contrast against the non-preference user during inference, thereby reinforcing user-specific signals for personalized recommendations.
As illustrated in Figure \ref{fig:framework}, we randomly drop the user condition $u$ with a fixed probability $p$ during training, and replace it with a learnable embedding that represents the non-preference user $\phi$.

\subsubsection{Personalized Contrast against Non-preference User}
The non-preference user $\phi$ is defined such that $p_{t|0}(x_t|x_0, \phi) = p_{t|0}(x_t|x_0)$.  
According to Bayes' theorem $p_{t|0}(x_t|x_0, u) = \frac{p_{t|0}(x_t|x_0, \phi)}{p(u|x_0)} \cdot p(u |x_0, x_t)$, there holds the following ratio condition:
\begin{equation}
\label{}
\frac{p_{t|0}(x_t=y|x_0,u)}{p_{t|0}(x_t=x|x_0,u)} = \frac{p_{t|0}(x_t=y|x_0)}{p_{t|0}(x_t=x|x_0)} \cdot \frac{p(u|x_0,x_t=y)}{p(u|x_0,x_t=x)}.
\end{equation}
We then estimate the ratio of the likelihood with a personalization strength parameter $w$ as \cite{CFG}:
\begin{equation}
\label{exp_cfg}
\frac{p(u|x_0,x_t=y)}{p(u|x_0,x_t=x)} \approx \left\{ \frac{p_{t|0}(x_t=y|x_0,\phi)}{p_{t|0}(x_t=x|x_0,\phi)} \cdot \left[\frac{p_{t|0}(x_t=y|x_0,u)}{p_{t|0}(x_t=x|x_0,u)}\right]^{-1} \right\}^{w}.
\end{equation}
Therefore, the personalization enhanced preference ratios with strength $w$ are computed as follows:
\begin{equation}
\label{log_cfg}
\hat{s}_{\Theta}(x_t,t,u)_y = w\cdot s_{\Theta}(x_t,t,u)_y + (1-w) \cdot s_{\Theta}(x_t,t,\phi)_y.
\end{equation}


\begin{table}[t!]
\scriptsize
\caption{Performance comparison across different datasets and baselines.}
\begin{tabularx}{0.98\textwidth}{c|c|cc|cc|cc|cc|cc}
\toprule
\multicolumn{1}{c|}{Dataset} &  & \multicolumn{2}{c|}{MoviesLens} & \multicolumn{2}{c|}{Steam} & \multicolumn{2}{c|}{Beauty} & \multicolumn{2}{c|}{Toys} & \multicolumn{2}{c}{Sports} \\
\midrule
Method & & HR & NDCG & HR & NDCG & HR & NDCG & HR & NDCG & HR & NDCG\\
\midrule
\multirow{3}{*}{ SASRec}
& @5
& 0.0905 & 0.0502
& \underline{0.0321} & \underline{0.0192}
& 0.0326  & 0.0221
& 0.0340 & 0.0258 
& 0.0154 & 0.0101 
\\
& \cellcolor{lightgray1}{@10}
& \cellcolor{lightgray1}{\underline{0.1709}} 
& \cellcolor{lightgray1}{0.0760}
& \cellcolor{lightgray1}{0.0572} 
&  \cellcolor{lightgray1}{\underline{0.0272}}
& \cellcolor{lightgray1}{\underline{0.0438}}
& \cellcolor{lightgray1}{0.0257}
& \cellcolor{lightgray1}{0.0438}
& \cellcolor{lightgray1}{0.0289} 
& \cellcolor{lightgray1}{\underline{0.0230}}
& \cellcolor{lightgray1}{0.0126}
\\
& \cellcolor{lightgray2}{@20}
& \cellcolor{lightgray2}{\underline{0.2899}} 
& \cellcolor{lightgray2}{\underline{0.1059}}
& \cellcolor{lightgray2}{0.0957}
& \cellcolor{lightgray2}{0.0368}
& \cellcolor{lightgray2}{0.0595}
& \cellcolor{lightgray2}{0.0297}
& \cellcolor{lightgray2}{0.0499} 
& \cellcolor{lightgray2}{0.0305}  
& \cellcolor{lightgray2}{\underline{0.0298}}
& \cellcolor{lightgray2}{0.0143}
\\
\multirow{3}{*}{Caser}
& @5
& \underline{0.0925} & \underline{0.0576}
& 0.0258 & 0.0163
& 0.0170 & 0.0106 
& 0.0093 & 0.0072 
& 0.0051 & 0.0030  
\\
&\cellcolor{lightgray1}{@10}
& \cellcolor{lightgray1}{0.1605}
& \cellcolor{lightgray1}{\underline{0.0794}}
& \cellcolor{lightgray1}{0.0446}
& \cellcolor{lightgray1}{0.0223}
& \cellcolor{lightgray1}{0.0219}
& \cellcolor{lightgray1}{0.0122}
& \cellcolor{lightgray1}{0.0129}
& \cellcolor{lightgray1}{0.0083}
& \cellcolor{lightgray1}{0.0093}
& \cellcolor{lightgray1}{0.0043}   
\\
& \cellcolor{lightgray2}{@20}
& \cellcolor{lightgray2}{0.2592}
& \cellcolor{lightgray2}{0.1042}
& \cellcolor{lightgray2}{0.0736}
& \cellcolor{lightgray2}{0.0295} 
& \cellcolor{lightgray2}{0.0295}
& \cellcolor{lightgray2}{0.0141}
& \cellcolor{lightgray2}{0.0185}
& \cellcolor{lightgray2}{0.0098}   
& \cellcolor{lightgray2}{0.0149}
& \cellcolor{lightgray2}{0.0057}
\\
\multirow{3}{*}{GRURec}
& @5
& 0.0892 & 0.0551
& 0.0255 & 0.0156
& 0.0130 & 0.0082 
& 0.0124 & 0.0078 
& 0.0070 & 0.0048  
\\
& \cellcolor{lightgray1}{@10}
& \cellcolor{lightgray1}{0.1534}
& \cellcolor{lightgray1}{0.0757}
& \cellcolor{lightgray1}{0.0441}
& \cellcolor{lightgray1}{0.0216}
& \cellcolor{lightgray1}{0.0188}
& \cellcolor{lightgray1}{0.0101} 
& \cellcolor{lightgray1}{0.0180}
& \cellcolor{lightgray1}{0.0092}
& \cellcolor{lightgray1}{0.0107}
& \cellcolor{lightgray1}{0.0059}
\\
& \cellcolor{lightgray2}{@20}
& \cellcolor{lightgray2}{0.2501}
& \cellcolor{lightgray2}{0.1000}
& \cellcolor{lightgray2}{0.0769}
& \cellcolor{lightgray2}{0.0298}
& \cellcolor{lightgray2}{0.0313}
& \cellcolor{lightgray2}{0.0132}  
& \cellcolor{lightgray2}{0.0304}
& \cellcolor{lightgray2}{0.0117}
& \cellcolor{lightgray2}{0.0171}
& \cellcolor{lightgray2}{0.0076} 
\\
\midrule 
\multirow{3}{*}{DreamRec}
& @5
& 0.0676 & 0.0437
& 0.0109 & 0.0073
& 0.0300 & 0.0247 
& 0.0381 & 0.0304
& 0.0095 & 0.0084
\\
& \cellcolor{lightgray1}{@10}
& \cellcolor{lightgray1}{0.1083}
& \cellcolor{lightgray1}{0.0568}
& \cellcolor{lightgray1}{0.0157} 
& \cellcolor{lightgray1}{0.0088}
& \cellcolor{lightgray1}{0.0353}
& \cellcolor{lightgray1}{0.0265}
& \cellcolor{lightgray1}{0.0412}
& \cellcolor{lightgray1}{0.0314}
& \cellcolor{lightgray1}{0.0112}
& \cellcolor{lightgray1}{0.0089}
\\
& \cellcolor{lightgray2}{@20}
& \cellcolor{lightgray2}{0.1610}
& \cellcolor{lightgray2}{0.0701}
& \cellcolor{lightgray2}{0.0218} 
& \cellcolor{lightgray2}{0.0104}
& \cellcolor{lightgray2}{0.0402}
& \cellcolor{lightgray2}{0.0277}
& \cellcolor{lightgray2}{0.0463}
& \cellcolor{lightgray2}{0.0327}
& \cellcolor{lightgray2}{0.0149}
& \cellcolor{lightgray2}{0.0098}
\\ 
\multirow{3}{*}{PreferDiff}
& @5
& 0.0538 & 0.0349
& 0.0167 & 0.0105
& \underline{0.0335} & \underline{0.0272} 
& \underline{0.0386} & \underline{0.0308}
& \underline{0.0168} & \underline{0.0130} 
\\
& \cellcolor{lightgray1}{@10}
& \cellcolor{lightgray1}{0.0852}
& \cellcolor{lightgray1}{0.0450}
& \cellcolor{lightgray1}{0.0297}
& \cellcolor{lightgray1}{0.0145}
& \cellcolor{lightgray1}{0.0434}
& \cellcolor{lightgray1}{\underline{0.0304}}  
& \cellcolor{lightgray1}{\underline{0.0494}}
& \cellcolor{lightgray1}{\underline{0.0343}}
& \cellcolor{lightgray1}{0.0211}
& \cellcolor{lightgray1}{\underline{0.0144}}
\\
& \cellcolor{lightgray2}{@20}
& \cellcolor{lightgray2}{0.1270} 
& \cellcolor{lightgray2}{0.0555} 
& \cellcolor{lightgray2}{0.0526} 
& \cellcolor{lightgray2}{0.0203}
& \cellcolor{lightgray2}{0.0577}
& \cellcolor{lightgray2}{\underline{0.0340}} 
& \cellcolor{lightgray2}{\underline{0.0644}}
& \cellcolor{lightgray2}{\underline{0.0380}}  
& \cellcolor{lightgray2}{0.0256}
& \cellcolor{lightgray2}{\underline{0.0155}}
\\
\midrule
\multirow{3}{*}{DiffRec}
& @5
& 0.0266 & 0.0121 
& 0.0268 & 0.0143
& 0.0095 & 0.0055
& 0.0053 & 0.0028 
& 0.0063 & 0.0035
\\
& \cellcolor{lightgray1}{@10}
& \cellcolor{lightgray1}{0.0889}
& \cellcolor{lightgray1}{0.0320}
& \cellcolor{lightgray1}{\underline{0.0657}}
& \cellcolor{lightgray1}{0.0268}
& \cellcolor{lightgray1}{0.0218}
& \cellcolor{lightgray1}{0.0094}
& \cellcolor{lightgray1}{0.0153}
& \cellcolor{lightgray1}{0.0059}
& \cellcolor{lightgray1}{0.0117}
& \cellcolor{lightgray1}{0.0052}
\\
& \cellcolor{lightgray2}{@20}
& \cellcolor{lightgray2}{0.1768}
& \cellcolor{lightgray2}{0.0541}
& \cellcolor{lightgray2}{\underline{0.1159}}
& \cellcolor{lightgray2}{\underline{0.0394}}  
& \cellcolor{lightgray2}{0.0355}
& \cellcolor{lightgray2}{0.0128}
& \cellcolor{lightgray2}{0.0226}
& \cellcolor{lightgray2}{0.0078}
& \cellcolor{lightgray2}{0.0194}
& \cellcolor{lightgray2}{0.0072}
\\
\multirow{3}{*}{DDSR}
& @5
& 0.0871 & 0.0533
& 0.0252 & 0.0157
& 0.0291 & 0.0217
& \underline{0.0386} & 0.0302
& 0.0146 & 0.0109
\\
& \cellcolor{lightgray1}{@10}
& \cellcolor{lightgray1}{0.1523}
& \cellcolor{lightgray1}{0.0742}
& \cellcolor{lightgray1}{0.0437}
& \cellcolor{lightgray1}{0.0216}
& \cellcolor{lightgray1}{0.0434}
& \cellcolor{lightgray1}{0.0262}
& \cellcolor{lightgray1}{0.0479}
& \cellcolor{lightgray1}{0.0332}
& \cellcolor{lightgray1}{0.0213}
& \cellcolor{lightgray1}{0.0130}
\\
& \cellcolor{lightgray2}{@20}
& \cellcolor{lightgray2}{0.2450}
& \cellcolor{lightgray2}{0.0975}
& \cellcolor{lightgray2}{0.0736}
& \cellcolor{lightgray2}{0.0291}
& \cellcolor{lightgray2}{\underline{0.0608}}
& \cellcolor{lightgray2}{0.0306}
& \cellcolor{lightgray2}{0.0618}
& \cellcolor{lightgray2}{0.0367}
& \cellcolor{lightgray2}{\underline{0.0298}}
& \cellcolor{lightgray2}{0.0151}
\\
\midrule
\cellcolor{blue!10}{PreferGrow} & Impr.
& \cellcolor{green!10}{\cmark} &\cellcolor{green!10}{\cmark}
& \cellcolor{green!10}{\cmark} &\cellcolor{green!10}{\cmark}
& \cellcolor{green!10}{\cmark} &\cellcolor{green!10}{\cmark}
& \cellcolor{green!10}{\cmark} &\cellcolor{green!10}{\cmark}
& \cellcolor{green!10}{\cmark} &\cellcolor{green!10}{\cmark}
\\
\midrule  
\multirow{3}{*}{Hybrid}
& @5
& 0.1409 & 0.0911
& \textbf{0.0615} & \textbf{0.0406}
& \textbf{0.0420} & \textbf{0.0322}
& \textbf{0.0419} & \textbf{0.0311}
& \textbf{0.0207} & \textbf{0.0142}
\\
& \cellcolor{lightgray1}{@10}
& \cellcolor{lightgray1}{0.2229}
& \cellcolor{lightgray1}{0.1174}
& \cellcolor{lightgray1}{\textbf{0.0935}}
& \cellcolor{lightgray1}{\textbf{0.0508}}
& \cellcolor{lightgray1}{\textbf{0.0532}}
& \cellcolor{lightgray1}{\textbf{0.0358}}
& \cellcolor{lightgray1}{0.0480}
& \cellcolor{lightgray1}{0.0331}
& \cellcolor{lightgray1}{\textbf{0.0267}}
& \cellcolor{lightgray1}{\textbf{0.0162}}
\\
& \cellcolor{lightgray2}{@20}
& \cellcolor{lightgray2}{0.3355} 
& \cellcolor{lightgray2}{0.1458}
& \cellcolor{lightgray2}{\textbf{0.1413}}
& \cellcolor{lightgray2}{\textbf{0.0628}}
& \cellcolor{lightgray2}{\textbf{0.0708}}
& \cellcolor{lightgray2}{\textbf{0.0402}}
& \cellcolor{lightgray2}{0.0625}
& \cellcolor{lightgray2}{0.0367}
& \cellcolor{lightgray2}{\textbf{0.0343}}
& \cellcolor{lightgray2}{\textbf{0.0181}}
\\ 
\multirow{3}{*}{Adaptive}
& @5
& \textbf{0.1413} & \textbf{0.0912}
& 0.0583 & 0.0375
& 0.0396 & 0.0310
& 0.0413 & 0.0304
& 0.0168  & 0.0127
\\
& \cellcolor{lightgray1}{@10}
& \cellcolor{lightgray1}{\textbf{0.2240}}
& \cellcolor{lightgray1}{\textbf{0.1177}}
& \cellcolor{lightgray1}{0.0914}
& \cellcolor{lightgray1}{0.0481}
& \cellcolor{lightgray1}{0.0508}
& \cellcolor{lightgray1}{0.0347}
& \cellcolor{lightgray1}{\textbf{0.0513}}
& \cellcolor{lightgray1}{\textbf{0.0337}}
& \cellcolor{lightgray1}{0.0207}
& \cellcolor{lightgray1}{0.0140}
\\
& \cellcolor{lightgray2}{@20}
& \cellcolor{lightgray2}{\textbf{0.3362}} 
& \cellcolor{lightgray2}{\textbf{0.1460}}
& \cellcolor{lightgray2}{0.1387}
& \cellcolor{lightgray2}{0.0600}
& \cellcolor{lightgray2}{0.0610}
& \cellcolor{lightgray2}{0.0373}
& \cellcolor{lightgray2}{\textbf{0.0642}}
& \cellcolor{lightgray2}{\textbf{0.0370}}
& \cellcolor{lightgray2}{0.0267}
& \cellcolor{lightgray2}{0.0155}
\\
\bottomrule
\end{tabularx}
\label{tab:overall}
\end{table}

\begin{figure*}[ht!]
    \vspace{-10pt}
    \begin{subfigure}{0.48\linewidth}
        \centering
        \includegraphics[width=\textwidth]{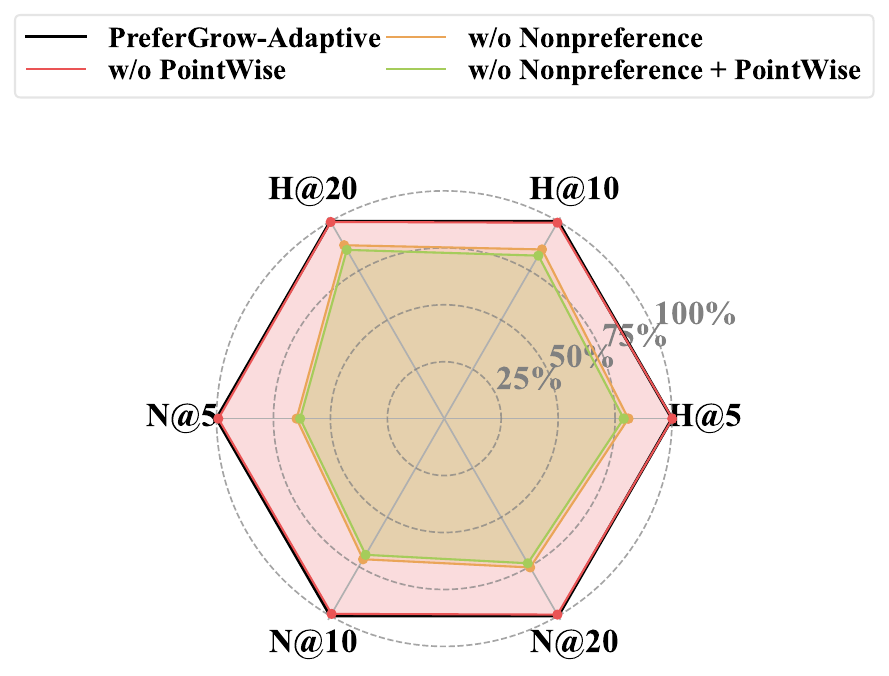}
        \caption{Abalation of Adaptive on Steam.}
        \label{fig:aba1}
    \end{subfigure}
    \begin{subfigure}{0.48\linewidth}
        \centering
        \includegraphics[width=\textwidth]{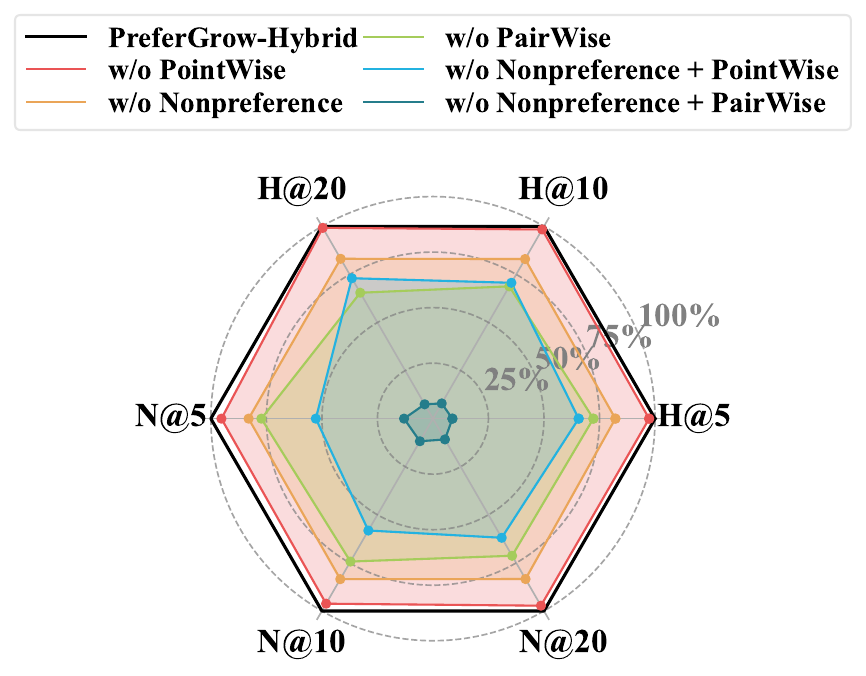}
        \caption{Ablation of Hybrid on Steam.}
        \label{fig:aba2}
    \end{subfigure}
    \caption{Ablation Study on Steam, showing the percentage of the relative effectiveness.}
    \label{fig:aba}
    \vspace{-15pt}
\end{figure*} 

\section{Experiment}\label{sec:exp}


We compare PreferGrow with a variety of baselines under the all-ranking evaluation protocol, including classical recommenders (SASRec \cite{SASRec}, Caser \cite{Caser}, GRURec \cite{GRURec}), item-level diffusion-based recommenders (DreamRec \cite{DreamRec}, PreferDiff \cite{PreferDiff}), and preference score-level diffusion-based recommenders (DiffRec \cite{DiffRec2}, DDSR \cite{DDSR}) across five benchmark datasets. 
Details of the datasets are provided in Appendix \ref{exp:data}, discussions of the baselines are presented in Appendix \ref{exp:baseline}, and the training and evaluation settings are described in Appendix \ref{exp:implement}.

\subsection{Overall Comparsion}

From the results in Table \ref{tab:overall}, we observe that PreferGrow consistently outperforms all baselines. 
We attribute the effectiveness of PreferGrow to three factors: 
1) its discrete diffusion process, which aligns with the discrete nature of the recommendation scenario; 
2) the preference fading noise perturbation mechanism, akin to negative sampling; 
and 3) preference ratios modeling without the simplex constraints, which better captures user preferences.
Additionally, the hybrid and adaptive fading matrix settings, inspired by the negative sampling strategy, also demonstrate the flexibility.

\subsection{Ablation Study}

As presented in Figure \ref{fig:aba}, we perform a thorough analysis and evaluation of each key component within PreferGrow to assess their individual significance. 
The ablation study is conducted using the following three variations: 
(1) \textit{w/o-PointWise}, PreferGrow without the general hard negative item $x_{-1}$;
(2) \textit{w/o-PairWise}, Point-Wise PreferGrow with $\vec{p}_T = \vec{e}_{-1}$; 
(3) \textit{w/o-Nonpreference}, PreferGrow without modeling the non-preference user $\phi$, resulting in no personalized enhancement during the backward generation, as well as the combination of the aforementioned ablations.
Overall, PairWise preference ratios modeling and non-preference user modeling are crucial for the effectiveness of PreferGrow, while PairWise preference ratios modeling adds an additional refinement.

\subsection{Hyper-parameter Analysis}

\hugq{
We further investigate the effect of key hyperparameters on the performance of PreferGrow (Figure~\ref{fig:param}). 
Overall, PreferGrow is fairly robust to the non-preference user proportion $p$, the hybrid coefficient $\lambda$, and the number of sampling steps, while exhibiting relatively higher sensitivity to the personalization strength $w$.
Importantly, $w$ is chosen at inference time, so no retraining is required. 
On the one hand, $w$ is locally stable within a reasonable range (see Figure~\ref{fig:param}); on the other hand, we show that the optimal $w$ is highly consistent across data splits (Appendix~\ref{app:param_ana}). 
Taken together, these properties ensure that tuning $w$ is practical and does not introduce significant overhead.

}

\begin{figure*}[t!]
    \vspace{-5pt}
    \begin{subfigure}{0.325\linewidth}
        \centering
        \includegraphics[width=\textwidth]{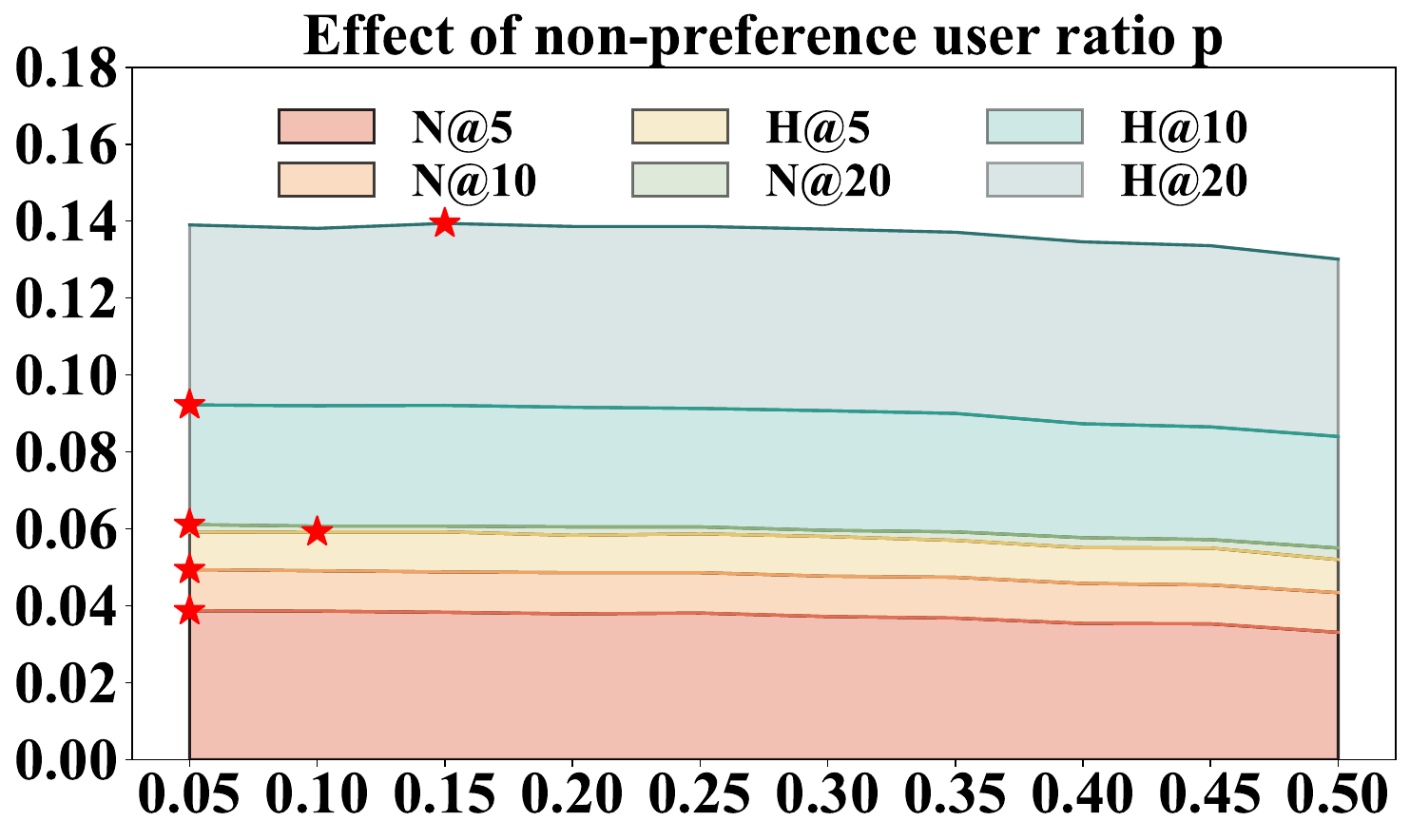}
        \caption{$p$ of Adaptive on Steam.}
        \label{fig:dim_item_embs}
    \end{subfigure}
    \begin{subfigure}{0.325\linewidth}
        \centering
        \includegraphics[width=\textwidth]{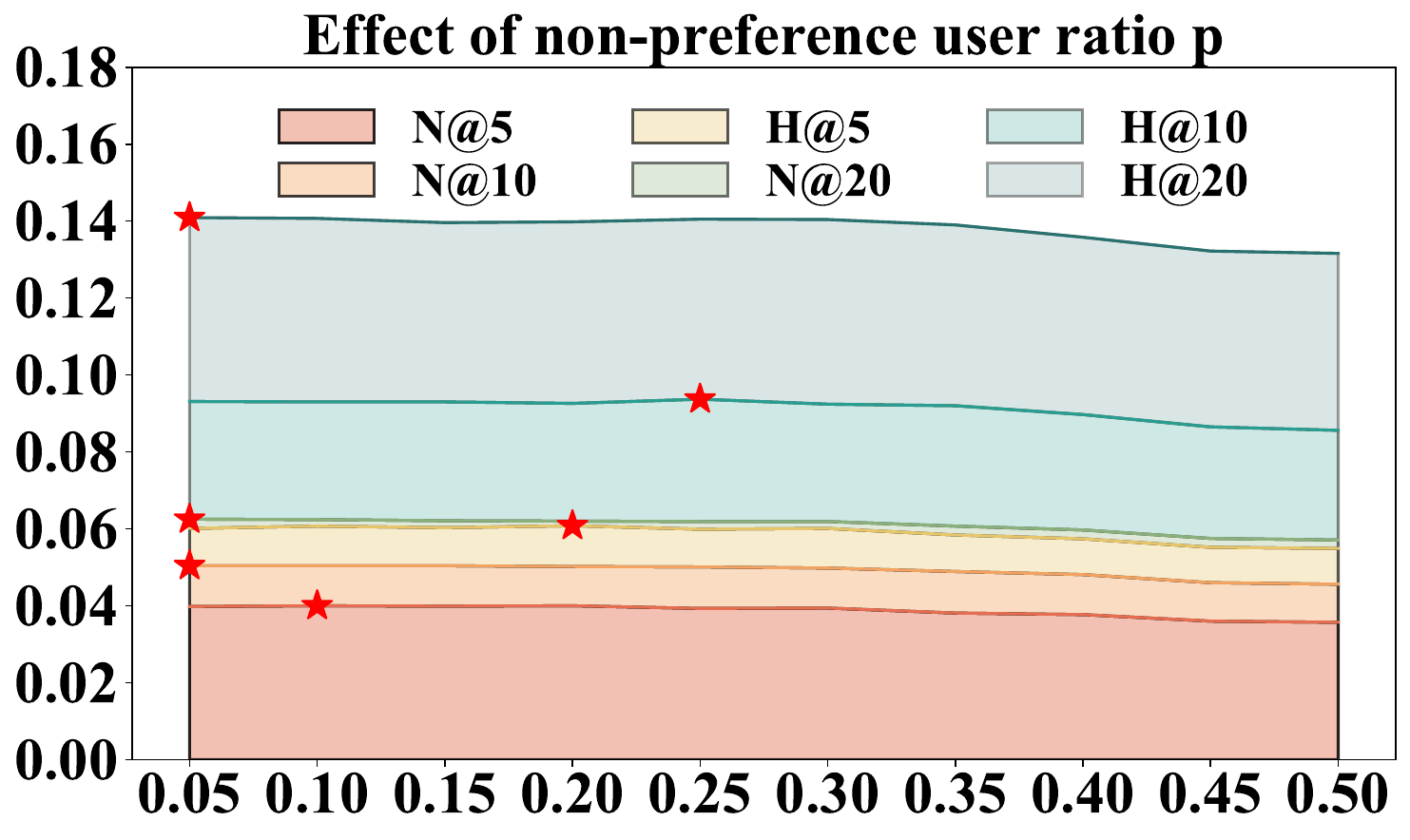}
        \caption{$p$ of Hybrid on Steam.}
        \label{fig:dim_ID_embs}
    \end{subfigure}
    \begin{subfigure}{0.325\linewidth}
        \centering
        \includegraphics[width=\textwidth]{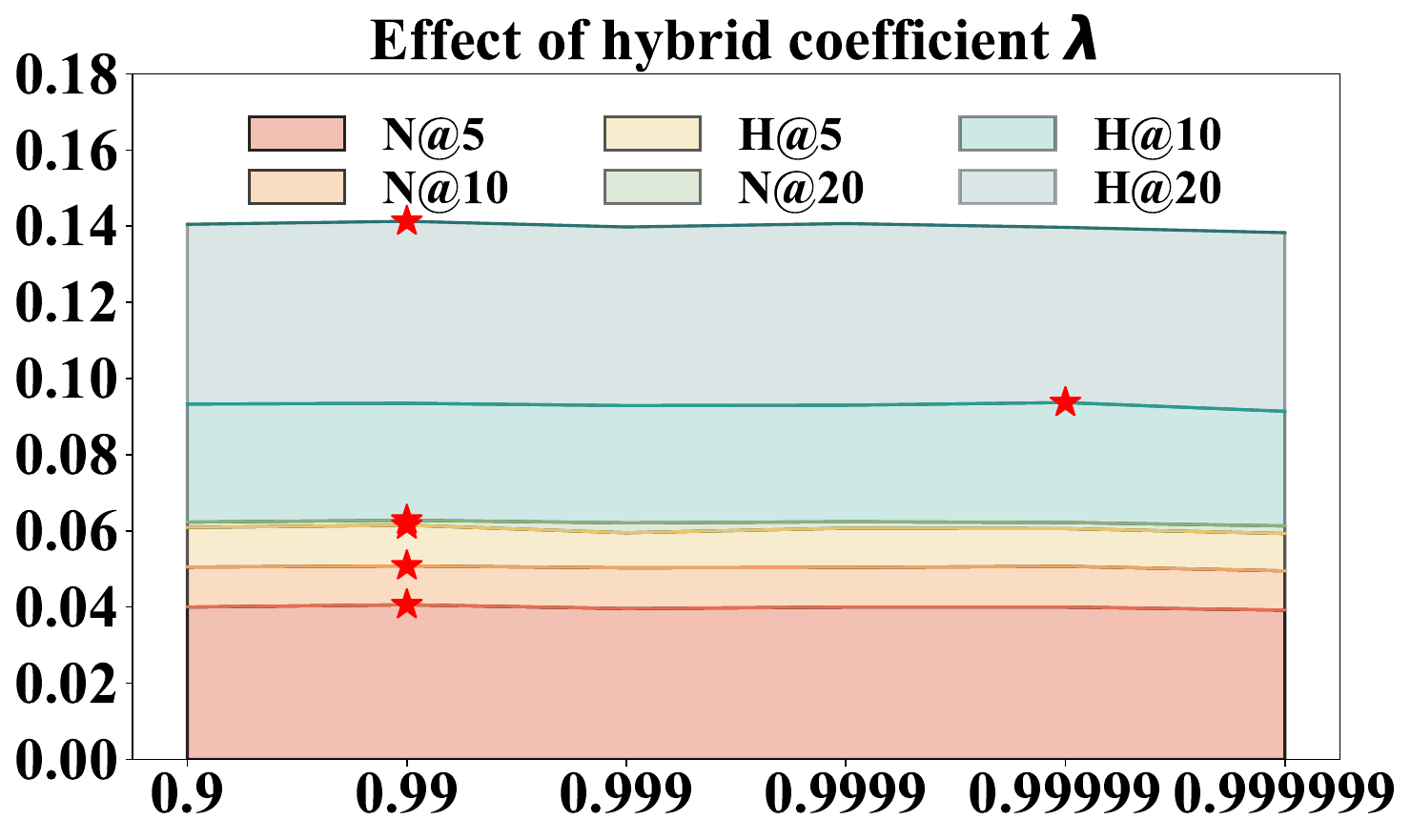}
        \caption{$\lambda$ of Hybrid on Steam.}
        \label{fig:thres}
    \end{subfigure}
    \begin{subfigure}{0.325\linewidth}
        \centering
        \includegraphics[width=\textwidth]{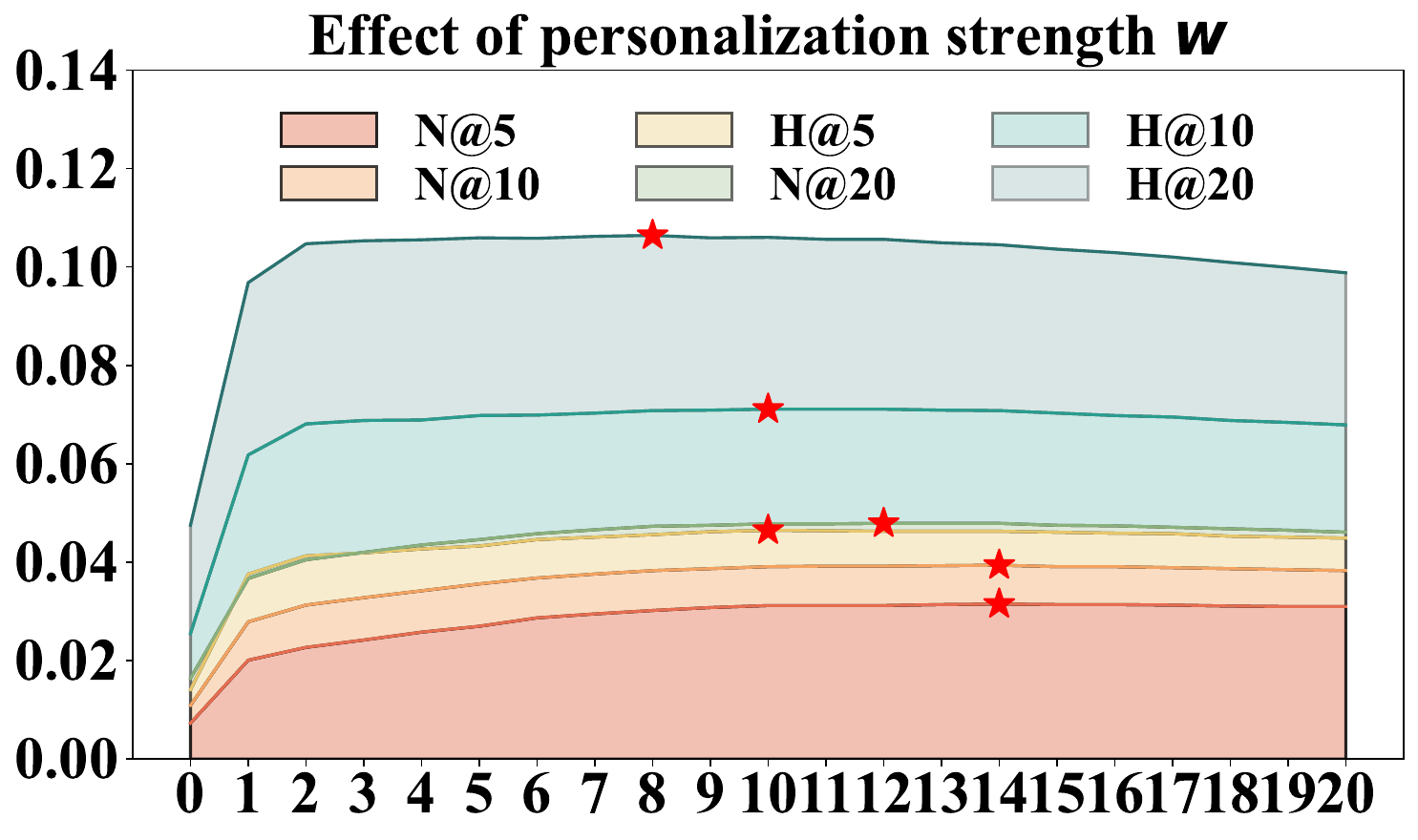}
        \caption{$w$ of Adaptive on Steam.}
        \label{fig:dim_item_embs}
    \end{subfigure}
    \begin{subfigure}{0.325\linewidth}
        \centering
        \includegraphics[width=\textwidth]{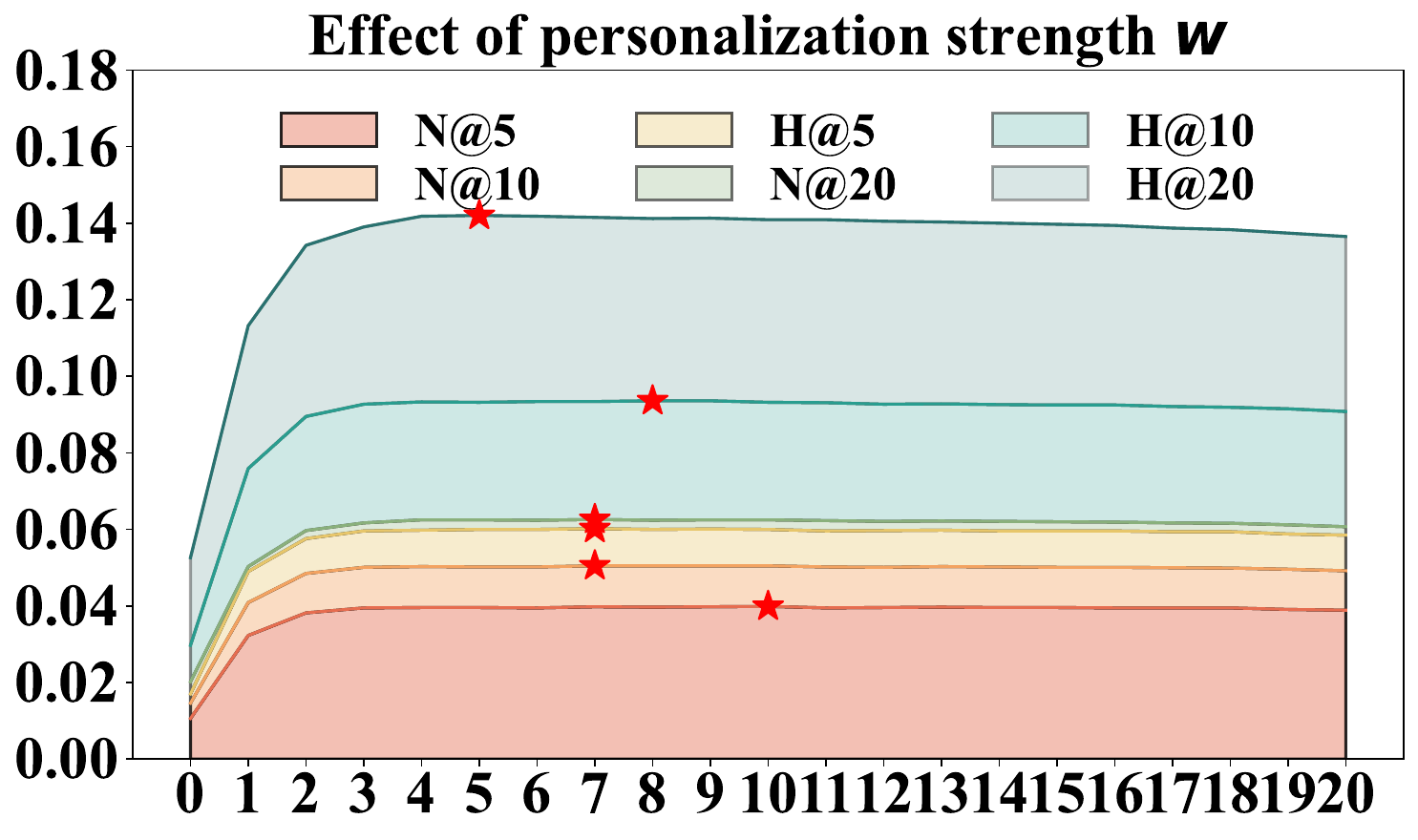}
        \caption{$w$ of Hybrid on Steam.}
        \label{fig:dim_ID_embs}
    \end{subfigure}
    \begin{subfigure}{0.325\linewidth}
        \centering
        \includegraphics[width=\textwidth]{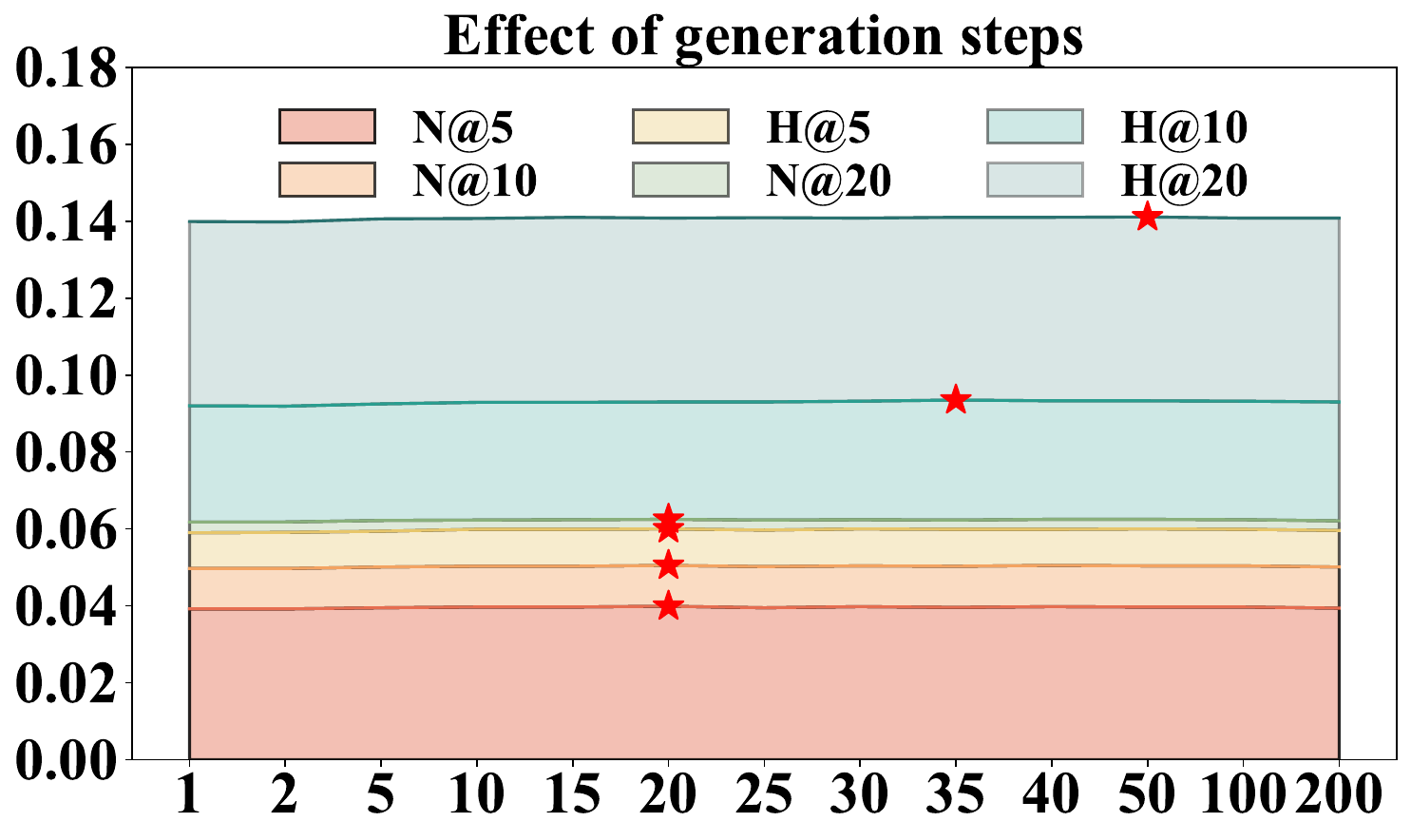}
        \caption{Steps of Hybrid on Steam.}
        \label{fig:thres}
    \end{subfigure}
    \caption{Hyperparameter analysis of PreferGrow on Steam. Red stars denote the best results.}
    \vspace{-15pt}
    \label{fig:param}
\end{figure*}    

\section{Limitations} \label{limi}
\hugq{
While effective, PreferGrow still has several aspects that warrant further optimization.
\begin{itemize}[leftmargin=*]
\item \textbf{Higher modeling complexity.} 
As shown in Table~\ref{tab:complexity}, targeting \emph{preference ratios} incurs substantially higher complexity than prior diffusion-based recommenders. On the one hand, preference ratios are more expressive and thus raise the potential ceiling of the model; on the other hand, they also increase the learning difficulty under finite model capacity---for example, leading to longer training time than previous diffusion-based methods (Appendix~\ref{exp:time}). 
A principled remedy is to extend PreferGrow to a rank-$r$ fading matrix. 
By inducing a quantization structure, this formulation can markedly reduce the complexity of preference ratios while preserving their expressive power, thus achieving a better expressiveness–redundancy trade-off. 
We identify this as a primary avenue for future work.
In addition, we observe that the final 50\% of training yields only about a $5\%$ gain in $\mathrm{NDCG}@5$, suggesting substantial redundancy in the training process. 
A likely cause is uniform timestep sampling: different timesteps reflect different degrees of user preference fading and thus vary in the difficulty of modeling preference ratios. 
Incorporating this difficulty --- by beginning with easier timesteps (mild fading) and gradually advancing to harder ones --- may further accelerate convergence.
\item \textbf{High computational complexity.}
As shown in Table~\ref{tab:complexity}, PreferGrow has $\mathcal{O}(N)$ complexity for both loss computation and inference, where $N$ represents the size of the item corpus. 
While this is comparable to prior diffusion-based recommenders, it becomes impractical when scaling to extremely large item sets (e.g., billions of items), where even $\mathcal{O}(N)$ is prohibitive. 
A promising next step is to incorporate a quantization structure such as \emph{Semantic IDs (SIDs)} to reduce the cost from $\mathcal{O}(N)$ to $\mathcal{O}(mc)$. 
SIDs represent each item with $m$ codebooks of size $c$, enabling up to $c^{m}$ items while maintaining only $\mathcal{O}(mc)$ computation. 
Extending \textsc{PreferGrow} with rank-$r$ fading matrix to operate directly on SIDs is our future work. 

\end{itemize}

}


\section{Conclusion}

Building upon the previous diffusion-based recommenders and discrete diffusion models discussed in Appendix \ref{app:rela}, this paper introduce a new discrete diffusion-based recommender tailored for discrete and sparse recommendation scenarios, named PreferGrow.
In summary, PreferGrow distinguishes itself from existing diffusion-based recommenders in three key aspects: 
(1) \textbf{Discrete Diffusion:} It operates directly on the discrete item set, fully aligning with the discrete nature of recommendation. 
(2) \textbf{Preference Fading:} It fades user preferences by replacing the preferred item with others, akin to negative sampling, thus removing the need for prior noise assumptions. 
(3) \textbf{Preference Ratios:} It estimates preference ratios by modeling the logarithmic ratios of user–item interaction probabilities, circumventing the constraints of the probability simplex.
PreferGrow features a well-defined theoretical formulation and demonstrates superior performance in experiments.
We further discuss the broader impacts of PreferGrow in Appendix~\ref{app:non}.


\section*{Acknowledgments}
This research is supported by the National Natural Science Foundation of China (62572449).

\bibliographystyle{unsrtnat}
\bibliography{ref}


\clearpage
\appendix
\section{Related Work}\label{app:rela}


\begin{table}[ht]
    \vspace{-10pt}
    \centering
    \caption{Comparison of Diffusion-based Recommenders}
    \label{tbl:table1}
    \renewcommand{\arraystretch}{1.5}
    \begin{tabular}{c|c|c|c|c}
        \Xhline{1.5pt}
        & \textit{Modeling} &  \textit{Forward} &  \textit{Negative} &  \textit{Representative}  \\
        & \textit{Target} & \textit{Perturbation} & \textit{Sampling} &  \textit{Work}\\
        \Xhline{1pt}
        \multirow{3}{*}{\textbf{Continuous}} 
            & Score Function & Gaussian on Embedding & \xmark & DreamRec \cite{DreamRec} \\
        \cline{2-5}
            & Score Function & Gaussian on Embedding & \cmark & PreferDiff \cite{PreferDiff} \\
        \cline{2-5}
            & Preference Scores & Gaussian on One-hot & \cmark & DiffRec \cite{DiffRec2} \\
        \Xhline{1pt}
        \multirow{3}{*}{\textbf{Discrete}} 
            & Preference Scores & Bernoulli on One-hot & \cmark & RecFusion \cite{Recformer} \\
         \cline{2-5}
            & Preference Scores & Categorical on One-hot & \cmark & DDSR \cite{DDSR}) \\
        \cline{2-5}
            & Preference Ratios & Fading on Items & \cmark & \textbf{PreferGrow} (Ours) \\
        \Xhline{1.5pt}
    \end{tabular}
    \label{tab:diff4rec}
\end{table}

\noindent\textbf{Diffusion-based recommenders} \cite{DreamRec, DiffuRec, DiffRec1, CDDRec, DiffuASR, DDRM, DiffCDR, DiffGT, CFDiff, CaDiRec, DMSR, DiffCL, DMCDR, MoDiCF, DiQDiff, DRGO, DiffuRec2, iDreamRec, PreferDiff, DiffRec2, RecFusion, LD4MRec, PDRec, EdgeRec, DifFaiRec, D3Rec, DDSR, DiffMM, SDiff} utilize forward perturbation in diffusion models to address data sparsity \cite{NCSN}, thereby better adapting to sparse recommendation scenarios. 
They typically consist of three core components: the modeling objective, the forward noise addition, and the corresponding backward generation process, which are summarized in Table \ref{tab:diff4rec}. 
Current research on diffusion-based recommenders can be classified into two primary approaches: one involves adding noise to dense item embeddings at the item level, and the other focuses on perturbing the one-hot interaction probability vector for all items, which is referred to as preference scores.
The first research line, pioneered by DreamRec \cite{DreamRec} and DiffuRec \cite{DiffuRec}, encodes user-preferred items as dense embeddings and adds Gaussian noise to these item embeddings. 
DreamRec \cite{DreamRec} models the score functions (the gradient of the log-likelihood of the perturbed distribution) without negative sampling, whereas DiffuRec \cite{DiffuRec} incorporates a recommendation loss that includes negative sampling. 
Building on them, PreferDiff \cite{PreferDiff} introduced an optimization objective derived from BPR loss \cite{BPR}, which integrates multiple negative samples into the generative modeling framework. 
However, the application of continuous Gaussian noise to positive preferred items, in contrast to the discrete nature of negative samples, creates an inherent mismatch, making it difficult to optimize both simultaneously during training, leading to a trade-off \cite{DiffuRec,DimeRec,PreferDiff}. \cite{DiffuRec2}
Additionally, other works have introduced more sophisticated module designs \cite{DiffRec1, CDDRec, DDRM, DiffGT, CFDiff, CaDiRec, DiffCL, DiQDiff, DiffuRec2, iDreamRec} or applied them to different recommendation tasks \cite{DiffuASR, DDRM, DiffCDR, DiffGT, CFDiff, CaDiRec, DMSR, DiffCL, DMCDR, MoDiCF, DRGO}. 
For instance, DimeRec \cite{DimeRec} incorporates multi-interest models, DiQDiff \cite{DiQDiff} introduces semantic vector quantization, and DiffuASR \cite{DiffuASR} applies item-level diffusion to sequential recommendation data augmentation.
The second research line \cite{DiffRec2, RecFusion, LD4MRec, PDRec, EdgeRec, DifFaiRec, D3Rec, DDSR, DiffMM, SDiff} involves converting user preference data into one-hot vectors, which are then mapped to preference scores within the probability simplex. 
DiffRec \cite{DiffRec2} add continuous Gaussian noise to the preference scores, and then learn to recover the clean preference scores from the perturbed ones. 
Consecutively, LD4MRec \cite{LD4MRec} refines the design for efficient multimedia recommendation, and D3Rec \cite{D3Rec} introduces targeted category preferences to control diversity during inference. 
However, the constraints of the probability simplex---non-negativity and normalization---pose significant challenges in accurately estimating the preference scores \cite{DDSE}.
To address these challenges while considering the probability simplex constraints, RecFusion \cite{Recformer} assumes a Bernoulli noise prior, completing the binomial diffusion process and subsequently modeling the parameters of the reverse binomial distribution to facilitate the reverse generation of preference scores. 
On the other hand, DDSR \cite{DDSR} adopts a categorical noise prior, as proposed in \cite{D3PM}, and directly recovers clean preference scores from the perturbed ones. Nonetheless, the constraints of the probability simplex may limit the effectiveness of preference score modeling.
Moreover, both diffusion-based recommenders rely on prior noise assumptions, such as Gaussian \cite{DreamRec} or Bernoulli noise \cite{RecFusion}, which may not be optimal for recommendation scenarios where user preference data is inherently discrete.

\noindent\textbf{Discrete Diffusion Models} \cite{D3PM, LDR, concretescore, SDDM, DDSE, RADD, DFM} have made substantial advances recently.
Initially, D3PM \cite{D3PM} proposed a discrete diffusion framework based on an arbitrary probability transition matrix, trained with the evidence lower bound of the log-likelihood. Subsequently, LDR \cite{LDR} extended this framework to a continuous-time setting using the Kolmogorov forward and backward equations. 
However, modeling score functions in such models presents challenges, as the gradient of the data distribution is undefined. 
To address this, CSM \cite{concretescore} introduced Concrete Score---discretization of score functions and the ratios of data distributions---as modeling objectives for discrete diffusion models.
Building upon these advances, SEDD \cite{DDSE} further bridges discrete diffusion models and ConcreteScore by introducing the score entropy loss. 
Expanding on these developments, we present PreferGrow, a matrix-based discrete diffusion framework which perturbing data by retaining or replacing items within a discrete corpus. 
The idempotent property of the replacement matrix (or fading matrix) is central to PreferGrow, and we demonstrate that it satisfies the Kolmogorov forward and backward equations as LDR \cite{LDR}, aligning it with prior works. 
Additionally, we introduce a design paradigm for the idempotent replacement matrix, which unifies previous approaches, including absorbing and uniform settings.

\section{Broader Impacts} \label{app:non}

Theoretically, PreferGrow introduces a well-defined discrete diffusion model building upon prior work.
While designed for recommendation, PreferGrow is also applicable to other discrete domains, such as molecular design in chemistry and protein structure prediction.
There are many potential societal consequences of our work, none of which we believe warrant specific attention at this time.
\section{Proofs of Main Results} \label{app:proofs}

\begin{proof}[Proof of Theorem \ref{thm:preferfade}:] $\alpha_t: [0,T]\to[0,1]$ is a strictly decreasing function with $\alpha_0=1$ and $\alpha_T=0$ and the preference fading discrete diffusion process is denoted as $\mathbf{P}_{t|0}=\alpha_t\mathbf{I}+(1-\alpha_t)\mathbf{E},\forall t \in [0,T]$.
Then we can rewrite preference fading discrete diffusion process as for $\alpha_0=1$:
\begin{equation}
\label{eq:P_t0}
\mathbf{P}_{t|0}=\frac{\alpha_t}{\alpha_0}\mathbf{I}+(1-\frac{\alpha_t}{\alpha_0})\mathbf{E},\forall t \in [0,T].    
\end{equation}
For any well-defined $\mathbf{P}_{t\mid s,0}$ with $0 < s < t$, the following limiting distribution constraint must hold:
\begin{equation}
\label{eq:P_ts0}
p_{t|0}(x_t|x_0)=\sum\limits_{x_s\in\mathcal{X}}p_{t|s,0}(x_t|x_s,x_0)p_{s|0}(x_s|x_0),\forall x_t,x_0\in\mathcal{X}.    
\end{equation}
In matrix form, this constraint reduces to  
$\mathbf{P}_{t\mid0} = \mathbf{P}_{t\mid s,0}\,\mathbf{P}_{s\mid0}$ for all $0 < s < t$.  
Note that one possible particular solution of this equation is:
\begin{equation}
\label{eq:P_ts}
\mathbf{P}_{t|s,0}:=\mathbf{P}_{t|s}=\frac{\alpha_t}{\alpha_s}\mathbf{I}+(1-\frac{\alpha_t}{\alpha_s})\mathbf{E},\forall 0\le s\le t\le T.    
\end{equation}
We will show that this indeed satisfies the constraint under the condition that $\mathbf{E}$ is idempotent.
\begin{align*}
\mathbf{P}_{t|s}\mathbf{P}_{s|0} &= \left[\frac{\alpha_t}{\alpha_s}\mathbf{I}+(1-\frac{\alpha_t}{\alpha_s})\mathbf{E}\right]\cdot\left[\frac{\alpha_s}{\alpha_0}\mathbf{I}+(1-\frac{\alpha_s}{\alpha_0})\mathbf{E}\right] \tag{26}\\
&=\frac{\alpha_t}{\alpha_0}\mathbf{I} + \left(\frac{\alpha_t}{\alpha_s}-2\frac{\alpha_t}{\alpha_0}+\frac{\alpha_s}{\alpha_0}\right)\mathbf{E} + \left(1+\frac{\alpha_t}{\alpha_0}-\frac{\alpha_s}{\alpha_0}-\frac{\alpha_t}{\alpha_s}\right)\mathbf{E}^2 \\
&= \frac{\alpha_t}{\alpha_0}\mathbf{I} + \mathbf{E}^2 -\frac{\alpha_t}{\alpha_0}\mathbf{E}+ \left(\frac{\alpha_t}{\alpha_s}-\frac{\alpha_t}{\alpha_0}+\frac{\alpha_s}{\alpha_0}\right)(\mathbf{E} - \mathbf{E}^2)\\
&~~~~~\textcolor{gray}{\text{the fading matrix is idempotent} \Rightarrow \mathbf{E}^2 = \mathbf{E}} \\
&= \frac{\alpha_t}{\alpha_0}\mathbf{I} + (1-\frac{\alpha_t}{\alpha_0})\mathbf{E} \\
&= \mathbf{P}_{t|0}.
\end{align*}
Similarly, for all $0\le r\le s\le t\le T$, there holds  the Chapman-Kolmogorov equation:
\begin{align*}
\mathbf{P}_{t|s}\mathbf{P}_{s|r} &= \left[\frac{\alpha_t}{\alpha_s}\mathbf{I}+(1-\frac{\alpha_t}{\alpha_s})\mathbf{E}\right]\cdot\left[\frac{\alpha_s}{\alpha_r}\mathbf{I}+(1-\frac{\alpha_s}{\alpha_r})\mathbf{E}\right] \tag{27}\\
&= \frac{\alpha_t}{\alpha_r}\mathbf{I} + \mathbf{E}^2 -\frac{\alpha_t}{\alpha_r}\mathbf{E}+ \left(\frac{\alpha_t}{\alpha_s}-\frac{\alpha_t}{\alpha_r}+\frac{\alpha_s}{\alpha_r}\right)(\mathbf{E} - \mathbf{E}^2)\\
&~~~~~\textcolor{gray}{\text{the fading matrix is idempotent} \Rightarrow \mathbf{E}^2 = \mathbf{E}} \\
&= \frac{\alpha_t}{\alpha_r}\mathbf{I} + (1-\frac{\alpha_t}{\alpha_r})\mathbf{E} \\
& = \mathbf{P}_{t|r}.
\end{align*}
We further note that $\mathbf{P}_{t|s},\forall 0\le s\le t\le T$ is invertible, with inverse given by:
\begin{align*}
\label{eq:inv_P_ts}
\mathbf{P}_{t|s}^{-1}&=\frac{\alpha_s}{\alpha_t}\mathbf{I}+(1-\frac{\alpha_s}{\alpha_t})\mathbf{E},\forall 0\le s\le t\le T. \tag{28}\\ 
\mathbf{P}_{t|s}^{-1}\mathbf{P}_{t|s}&=\left[\frac{\alpha_s}{\alpha_t}\mathbf{I}+(1-\frac{\alpha_s}{\alpha_t})\mathbf{E}\right]\cdot\left[\frac{\alpha_t}{\alpha_s}\mathbf{I}+(1-\frac{\alpha_t}{\alpha_s})\mathbf{E}\right] \tag{29}\\  
&= \mathbf{I} + \left(\frac{\alpha_t}{\alpha_s}+\frac{\alpha_s}{\alpha_t}-2\right)\cdot(\mathbf{E}-\mathbf{E}^2) \\
&~~~~~\textcolor{gray}{\text{the fading matrix is idempotent} \Rightarrow \mathbf{E}^2 = \mathbf{E}} \\
& = \mathbf{I}.
\end{align*}
$\forall 0\le r\le s\le t\le T$, combined with limiting distribution constraint $\mathbf{P}_{t|r}=\mathbf{P}_{t|s,r}\mathbf{P}_{s|r}$, there are:
\begin{align*}
\label{eq:P_tsr}
\mathbf{P}_{t|s,r}&=\mathbf{P}_{t|r}\mathbf{P}_{s|r}^{-1} \\ 
&=\left[\frac{\alpha_t}{\alpha_r}\mathbf{I}+(1-\frac{\alpha_t}{\alpha_r})\mathbf{E}\right]\cdot\left[\frac{\alpha_r}{\alpha_s}\mathbf{I}+(1-\frac{\alpha_r}{\alpha_s})\mathbf{E}\right] \tag{30}\\  
&= \frac{\alpha_t}{\alpha_s}\mathbf{I} + \mathbf{E}^2 -\frac{\alpha_t}{\alpha_s}\mathbf{E}+ \left(\frac{\alpha_t}{\alpha_r}-\frac{\alpha_t}{\alpha_s}+\frac{\alpha_r}{\alpha_s}\right)(\mathbf{E} - \mathbf{E}^2)\\
&~~~~~\textcolor{gray}{\text{the fading matrix is idempotent} \Rightarrow \mathbf{E}^2 = \mathbf{E}} \\
&= \frac{\alpha_t}{\alpha_s}\mathbf{I} + (1-\frac{\alpha_t}{\alpha_s})\mathbf{E} \\
& = \mathbf{P}_{t|s}.
\end{align*}
$\mathbf{P}_{t|s,r}= \mathbf{P}_{t|s}$ indicates $p_{t|s,r}(x_t|x_s,x_r) = p_{t|s}(x_t|x_s), \forall 0\le r\le s\le t\le T$.
In summary, the preference fading discrete diffusion process is Markovian but not time-homogeneous, satisfies the Chapman–Kolmogorov equation, and is reversible.
\end{proof}

\begin{proof}[Proof of Proposition \ref{pro:fade}:] 
$\alpha_t: [0,T]\to[0,1]$ is a strictly decreasing function with $\alpha_0=1$ and $\alpha_T=0$. 
$\alpha_t$ is further defined as $e^{-\int_{0}^{t}\beta(\tau) \mathrm{d}\tau}$ with $\beta(\tau)>0$.
The preference fading discrete Markov diffusion process is denoted as $\mathbf{P}_{t|s}=\frac{\alpha_t}{\alpha_s}\mathbf{I}+(1-\frac{\alpha_t}{\alpha_s})\mathbf{E},\forall 0\le s\le t\le T$.
We first show that the preference fading discrete Markov diffusion process converges to a unified non-preference state $\vec{p}_T$ is well-defined.
The transition rate matrix $\mathbf{Q}_t$ is computed as follows:
\begin{align*}
\mathbf{Q}_{t} &= \lim\limits_{s\rightarrow t} \frac{\partial \mathbf{P}_{t|s}}{\partial t} \tag{31} \\
&= \lim\limits_{s\rightarrow t} \frac{\partial}{\partial t}
\left(\frac{\alpha_t}{\alpha_s}\mathbf{I}+(1-\frac{\alpha_t}{\alpha_s})\mathbf{E}\right) \\
&= \lim\limits_{s\rightarrow t} \frac{\partial}{\partial \alpha_t}
\left(\frac{\alpha_t}{\alpha_s}\mathbf{I}+(1-\frac{\alpha_t}{\alpha_s})\mathbf{E}\right) \cdot\frac{\partial \alpha_t}{\partial t}\\
&= \lim\limits_{s\rightarrow t} 
\left(\frac{1}{\alpha_s}\mathbf{I}-\frac{1}{\alpha_s}\mathbf{E}\right) \cdot (-\beta(t)) \cdot\alpha_t\\
\label{Q_t}
&= \beta(t) \cdot \left(\mathbf{E}-\mathbf{I}\right).
\end{align*}
The rate matrix $\mathbf{Q}_t$ characterizes the velocity of probability transitions at time $t$, encompassing both the direction and rate of transition. 
As shown in Equation \eqref{Q_t}, $\mathbf{Q}_t$ at any time $t \in [0, T]$ shares a consistent transition direction $\mathbf{I} - \mathbf{E}$, while the transition rate $\beta(t)$ varies over time.
This time-dependent rate results in a non-homogeneous preference fading process. 
However, the shared transition direction ensures that all diffusion paths converge to the same non-preference state, making the process well-defined.
Specifically, the stationary distribution $\vec{\pi}_t$ at time $t$ satisfies the equilibrium condition $\mathbf{Q}_t \vec{\pi}_t = \vec{0}$. 
Since different $\mathbf{Q}_t$ matrices differ only by a scalar factor $\beta(t)$, they yield the same stationary solution $\vec{\pi}$. 
Consequently, the Markov process converges to a common steady-state distribution $\vec{\pi}$, \ie the non-preference state $\vec{p}_T$:
\begin{align*}
\mathbf{Q}_t \vec{\pi} = \vec{0}\Rightarrow \left(\mathbf{I} - \mathbf{E}\right)\vec{p}_T &= \vec{0}. \tag{32}
\end{align*}
Given that $(\mathbf{I} - \mathbf{E})\mathbf{E} = \mathbf{0}$, each column of $\mathbf{E}$ satisfies the non-preference state equation $\left(\mathbf{I} - \mathbf{E}\right)\vec{p}_T= \vec{0}$. To ensure a unique solution $\vec{p}_T$, we therefore assume that all columns of $\mathbf{E}$ are identical, i.e.\ 
$\mathbf{E} \propto  \vec{p}_T \cdot\vec{1}^{\top}$.
Considering the idempotence constraint $\mathbf{E}^2=\mathbf{E}$, we then have:
\begin{align*}
\mathbf{E} &= \frac{\vec{p}_T \cdot\vec{1}^{\top}}{\vec{1}^{\top}\vec{p}_T}. \tag{33}\\
\mathbf{E}^2 &= \frac{\vec{p}_T \cdot\vec{1}^{\top}\cdot \vec{p}_T \cdot\vec{1}^{\top}}{ (\vec{1}^{\top}\vec{p}_T)^2} \tag{34}\\
&= \frac{\vec{p}_T \cdot(\vec{1}^{\top}\vec{p}_T) \cdot\vec{1}^{\top}}{ (\vec{1}^{\top}\vec{p}_T)^2}\\
&= \frac{\vec{p}_T \cdot \vec{1}^{\top}}{ \vec{1}^{\top}\vec{p}_T}\\
&= \mathbf{E}.
\end{align*}

\end{proof}

\begin{proof}[Proof of Proposition \ref{pro:rate}:] 
We have $\mathbf{P}_{t|s}=\frac{\alpha_t}{\alpha_s}\mathbf{I}+(1-\frac{\alpha_t}{\alpha_s})\mathbf{E}$ and $\mathbf{Q}_{t}=\beta(t) \cdot \left(\mathbf{E}-\mathbf{I}\right)$.
Then we compute $\frac{\partial \mathbf{P}_{t|s}}{\partial t}$ as follows:
\begin{align*}
\frac{\partial \mathbf{P}_{t|s}}{\partial t} &= \frac{\partial}{\partial t}
\left(\frac{\alpha_t}{\alpha_s}\mathbf{I}+(1-\frac{\alpha_t}{\alpha_s})\mathbf{E}\right) \tag{35}\\
&= \frac{\partial}{\partial \alpha_t}
\left(\frac{\alpha_t}{\alpha_s}\mathbf{I}+(1-\frac{\alpha_t}{\alpha_s})\mathbf{E}\right) \cdot\frac{\partial \alpha_t}{\partial t}\\
&=  \left(\frac{1}{\alpha_s}\mathbf{I}-\frac{1}{\alpha_s}\mathbf{E}\right) \cdot (-\beta(t)) \cdot\alpha_t\\
\label{P_st_partial_derivative}
&= \beta(t) \cdot \frac{\alpha_t}{\alpha_s}\cdot\left(\mathbf{E}-\mathbf{I}\right).
\end{align*}
There holds the Kolmogorov forward equation:
\begin{align*}
\mathbf{Q}_{t}\mathbf{P}_{t|s} &=
\beta(t) \cdot \left(\mathbf{E}-\mathbf{I}\right) \cdot \left[\frac{\alpha_t}{\alpha_s}\mathbf{I}+(1-\frac{\alpha_t}{\alpha_s})\mathbf{E}\right] \tag{36}\\
&= \beta(t) \cdot \left[-\frac{\alpha_t}{\alpha_s}\mathbf{I}+(2\frac{\alpha_t}{\alpha_s}-1)\mathbf{E}+(1-\frac{\alpha_t}{\alpha_s})\mathbf{E}^2\right]\\
&~~~~~\textcolor{gray}{\text{the fading matrix is idempotent} \Rightarrow \mathbf{E}^2 = \mathbf{E}} \\
&= \beta(t) \cdot \frac{\alpha_t}{\alpha_s}\cdot\left(\mathbf{E}-\mathbf{I}\right)\\
&= \frac{\partial \mathbf{P}_{t|s}}{\partial t}.
\end{align*}
\end{proof}

\begin{proof}[Proof of Proposition \ref{prop:se-bce-link}:] 
We begin by considering the expression for the loss function $\mathcal{L}_{\textit{SE}}$, which is given as:
\begin{equation*}
    \mathcal{L}_{\textit{SE}}(x_0,x_t,y) = e^{s_{\Theta}(x_t,t,u)_y}-s_{\Theta}(x_t,t,u)_y \cdot e^{r_t(x_0,x_t,y)} + e^{r_t(x_0,x_t,y)}(r_t(x_0,x_t,y)-1). \tag{37}
\end{equation*}

Then, we compute the gradient of this loss with respect to $s_\Theta$, which will help establish a link between this and the binary cross-entropy loss.

\begin{align*}
    \nabla_{s_\Theta} \mathcal{L}_{\textit{SE}} &= e^{s_{\Theta}(x_t,t,u)_y}- e^{r_t(x_0,x_t,y)} \tag{38}\\
    &=e^{s_{\Theta}(x_t,t,u)_y}(1+e^{r_t(x_0,x_t,y)})-(1+e^{s_{\Theta}(x_t,t,u)_y})e^{r_t(x_0,x_t,y)}\\
    &=(1 + e^{s_{\Theta}(x_t,t,u)_y})(1 + e^{r_t(x_0,x_t,y)})[\sigma(s_{\Theta}(x_t,t,u)_y)-\sigma(r_t(x_0,x_t,y))].
\end{align*}

Next, we consider the soft binary cross-entropy loss $\mathcal{L}_{\textit{sBCE}}$, which is defined as:

\begin{equation*}
    \mathcal{L}_{\textit{sBCE}}(x_0,x_t,y) = -\pi_{y \succ x_t \mid x_0} \log\sigma(s_{\Theta}(x_t,t,u)_y) - (1 - \pi_{y \succ x_t \mid x_0})\log(1 - \sigma(s_{\Theta}(x_t,t,u)_y)). \tag{39}
\end{equation*}
where the soft label $\pi_{y \succ x_t \mid x_0}=p(y \succ x_t \mid x_0) = \sigma(r_t(x_0, x_t, y))$ is the probability of the preference of $y$ over $x_t$, and $\sigma(\cdot)$ is the sigmoid function. 

Now, we compute the gradient of $\mathcal{L}_{\textit{sBCE}}$ with respect to $s_\Theta$:
\begin{align*}
    \nabla_{s_\Theta} \mathcal{L}_{\textit{sBCE}} &= -\pi_{y \succ x_t \mid x_0}(1-\sigma(s_{\Theta}(x_t,t,u)_y))+(1 - \pi_{y \succ x_t \mid x_0})\sigma(s_{\Theta}(x_t,t,u)_y) \tag{40}\\
    &=\sigma(s_{\Theta}(x_t,t,u)_y)-\pi_{y \succ x_t \mid x_0}\\
    &=\sigma(s_{\Theta}(x_t,t,u)_y)-\sigma(r_t(x_0,x_t,y)).
\end{align*}

We can now relate the gradients of both loss functions:
\begin{equation*}
    \nabla_{s_\Theta} \mathcal{L}_{\textit{SE}} = (1 + e^{s_{\Theta}(x_t,t,u)_y})(1 + e^{r_t(x_0,x_t,y)}) \nabla_{s_\Theta} \mathcal{L}_{\textit{sBCE}}. \tag{41}
\end{equation*}

\end{proof}

\allowdisplaybreaks
\begin{proof}[Proof of Theorem \ref{thm:prefergrow}:] 
The reverse preference growing process is denoted as $\Omega=\{\mathbf{P}_{s|T}\}_{s=T}^{0}$.
A collection $\mathcal{F}$ of subsets of $\Omega$ is called a $\sigma$-algebra on $\Omega$ if it satisfies: 1) $\Omega \in \mathcal{F}$.
2) If $A \in \mathcal{F}$ then its complement $A^{c} = \Omega\setminus A$ also belongs to $\mathcal{F}$.
3) If $\{A_{n}\}_{n=1}^{\infty}\subseteq \mathcal{F}$, then the countable union $\bigcup_{n=1}^{\infty}A_{n}\in \mathcal{F}$.
We first show that the preference growing process satisfies the Markov property.
$\mathcal{F}_t$ is a $\sigma$-algebra of the preference growing process $\Omega_t=\{\mathbf{P}_{s|T}\}_{s=T}^{t}$. 
$\forall A\in \mathcal{F}_t$ and $0\le s \le t \le T$: 
\begin{align*}
p_{s|\ge t}(x_s|x_t, A) &= \frac{p_{s,t|> t}(x_s, x_t| A)}{p_{t|> t}(x_t| A)} \cdot \frac{p(A)}{p(A)} \tag{42}\\
&= \frac{p_{s,\ge t}(x_s, x_t,A)}{p_{\ge t}(x_t, A)}\\
&= \frac{p_{>t|t,s}(A|x_t,x_s)}{p_{>t|t}(A|x_t)} \cdot \frac{p_{t|s}(x_t|x_s)p_s(x_s)}{p_t(x_t)}\\
&~~~~~\textcolor{gray}{\text{the preference fading process is Markovian} \Rightarrow \frac{p_{>t|t,s}(A|x_t,x_s)}{p_{>t|t}(A|x_t)}=1} \\
&= \frac{p_s(x_s)}{p_t(x_t)}p_{t|s}(x_t|x_s)\\
&~~~~~\textcolor{gray}{\text{the Bayes' theorem} \Rightarrow \frac{p_s(x_s)}{p_t(x_t)}p_{t|s}(x_t|x_s)=p_{s|t}(x_s|x_t)} \\
&= p_{s|t}(x_s|x_t).
\end{align*}
We rewrite the equation $ p_{s|t}(x_s|x_t)=\frac{p_s(x_s)}{p_t(x_t)}\cdot p_{t|s}(x_t | x_s)$ in matrix form $\forall 0 \le s \le t \le T$:
\begin{align*}
p_{s|t}(x_s|x_t)&=\frac{p_s(x_s)}{p_t(x_t)}\cdot p_{t|s}(x_t | x_s). \tag{43}\\
&~~~~~\textcolor{gray}{\vec{p}_s \cdot \left(\frac{1}{\vec{p}_t}\right)^\top \text{denotes the matrix with entries} \frac{p_s(x_s=y)}{p_t(x_t=x)}, \forall x, y \in \mathcal{X}} \\
\mathbf{P}_{s|t}&= \left[\vec{p}_s \cdot \left(\frac{1}{\vec{p}_t}\right)^\top\right] \odot \mathbf{P}_{t|s}^\top\\
&~~~~~\textcolor{gray}{\vec{p}_t=\mathbf{P}_{t|s} \cdot\vec{p}_s  \Rightarrow \vec{p}_s=\mathbf{P}_{s|t}^{-1} \cdot\vec{p}_t} \\
&= \mathbf{P}_{t|s}^{-1} \cdot \left[\vec{p}_t\cdot \left(\frac{1}{\vec{p}_t}\right)^\top\right] \odot \mathbf{P}_{t|s}^{\top}.
\end{align*}

The reverse-time preference growing process progresses from $s=T$ to $s=0$, and thus $\alpha_s$ is an increasing function, transitioning from $\alpha_T = 0$ to $\alpha_0 = 1$. 
We then compute the reverse transition rate matrix $\mathbf{R}_s$ as follows:
\begin{align*}
\mathbf{R}_{s} &= \lim\limits_{t\rightarrow s} \frac{\partial \mathbf{P}_{s|t}}{\partial s} \\
&= \lim\limits_{t\rightarrow s} \frac{\partial \mathbf{P}_{s|t}}{\partial \alpha_s}\cdot\frac{\partial \alpha_s}{\partial s} \tag{44}\\
&= \lim\limits_{t\rightarrow s} \frac{\partial}{\partial \alpha_s}\left(\mathbf{P}_{t|s}^{-1} \cdot \left[\vec{p}_t\cdot \left(\frac{1}{\vec{p}_t}\right)^\top\right] \odot \mathbf{P}_{t|s}^{\top}\right)\cdot\frac{\partial \alpha_s}{\partial s}\\
&= \lim\limits_{t\rightarrow s} \left(\frac{\partial \mathbf{P}_{t|s}^{-1}}{\partial \alpha_s}  \cdot\left[\vec{p}_t\cdot \left(\frac{1}{\vec{p}_t}\right)^\top\right]\odot \mathbf{P}_{t|s}^{\top}+\mathbf{P}_{t|s}^{-1} \cdot \left[\vec{p}_t\cdot \left(\frac{1}{\vec{p}_t}\right)^\top\right] \odot \frac{\partial \mathbf{P}_{t|s}^{\top}}{\partial \alpha_s}\right)\cdot\frac{\partial \alpha_s}{\partial s}\\
&~~~~~\textcolor{gray}{\frac{\partial \mathbf{P}_{t|s}^{-1}}{\partial \alpha_s}=\frac{\partial}{\partial \alpha_s}\left(\frac{\alpha_s}{\alpha_t}\mathbf{I}+(1-\frac{\alpha_s}{\alpha_t})\mathbf{E}\right)=\frac{1}{\alpha_t}(\mathbf{I}-\mathbf{E})}\\
&~~~~~\textcolor{gray}{\lim\limits_{t\rightarrow s} \mathbf{P}_{t|s}^{-1}=\lim\limits_{t\rightarrow s}\left(\frac{\alpha_s}{\alpha_t}\mathbf{I}+(1-\frac{\alpha_s}{\alpha_t})\mathbf{E}\right)=\mathbf{I}}\\
&~~~~~\textcolor{gray}{\frac{\partial \mathbf{P}_{t|s}^{\top}}{\partial \alpha_s}=\frac{\partial}{\partial \alpha_s}\left(\frac{\alpha_t}{\alpha_s}\mathbf{I}+(1-\frac{\alpha_t}{\alpha_s})\mathbf{E}^\top \right)=-\frac{\alpha_t}{\alpha_s^2}(\mathbf{I}-\mathbf{E}^\top)}\\
&~~~~~\textcolor{gray}{\lim\limits_{t\rightarrow s} \mathbf{P}_{t|s}^{\top}=\lim\limits_{t\rightarrow s}\left(\frac{\alpha_t}{\alpha_s}\mathbf{I}+(1-\frac{\alpha_t}{\alpha_s})\mathbf{E}^\top \right)=\mathbf{I}}\\
&= \lim\limits_{t\rightarrow s} \left(\frac{1}{\alpha_t}(\mathbf{I}-\mathbf{E})\cdot \left[\vec{p}_t\cdot \left(\frac{1}{\vec{p}_t}\right)^\top\right] \odot \mathbf{I}-\left[\vec{p}_t\cdot \left(\frac{1}{\vec{p}_t}\right)^\top\right] \odot \frac{\alpha_t}{\alpha_s^2}(\mathbf{I}-\mathbf{E}^\top)\right)\cdot\frac{\partial \alpha_s}{\partial s}\\
&~~~~~\textcolor{gray}{\mathbf{Q}_t = \beta(t) \cdot\left(\mathbf{E}-\mathbf{I}\right), \mathbf{Q}_t^\top = \beta(t) \cdot\left(\mathbf{E}^\top-\mathbf{I}\right)}\\
&~~~~~\textcolor{gray}{\text{Note that from time $t$ to time $s < t$, $\alpha_s$ is increasing.} \Rightarrow \frac{\partial \alpha_s}{\partial s}=\alpha_s\beta(s)>0} \\
&= \beta(s)\cdot(\mathbf{E}^\top-\mathbf{I})\odot \left[\vec{p}_s\cdot \left(\frac{1}{\vec{p}_s}\right)^\top\right] - \beta(s)\cdot(\mathbf{E}-\mathbf{I})\cdot \left[\vec{p}_s\cdot \left(\frac{1}{\vec{p}_s}\right)^\top\right] \odot \mathbf{I}\\
&= \mathbf{Q}_s^\top\odot \left[\vec{p}_s\cdot \left(\frac{1}{\vec{p}_s}\right)^\top\right] - \mathbf{Q}_s\cdot \left[\vec{p}_t\cdot \left(\frac{1}{\vec{p}_t}\right)^\top\right] \odot \mathbf{I}.
\end{align*}

Moreover, we compute $\frac{\partial \mathbf{P}_{t|s}}{\partial s}$ and $\frac{\partial \mathbf{P}_{s|t}}{\partial s}$ as follows:
\begin{align*}
\frac{\partial \mathbf{P}_{t|s}}{\partial s} &= \frac{\partial}{\partial \alpha_s}\left(\frac{\alpha_t}{\alpha_s}\mathbf{I} +(1-\frac{\alpha_t}{\alpha_s})\mathbf{E} \right) \cdot \frac{\partial \alpha_s}{\partial s}\tag{45}\\
&~~~~~\textcolor{gray}{\text{Note that from time $t$ to time $s < t$, $\alpha_s$ is increasing.} \Rightarrow \frac{\partial \alpha_s}{\partial s}=\alpha_s\beta(s)>0} \\
&= \left(-\frac{\alpha_t}{\alpha_s^2}\mathbf{I} +\frac{\alpha_t}{\alpha_s^2}\mathbf{E} \right) \cdot \alpha_s\beta(s)\\
&= \beta(s)\cdot \frac{\alpha_t}{\alpha_s}\cdot\left(\mathbf{E}-\mathbf{I} \right) \\
&~~~~~\textcolor{gray}{\left(\mathbf{E}-\mathbf{I}\right)^2=\mathbf{E}^2-2\mathbf{E}+\mathbf{I}=-\left(\mathbf{E}-\mathbf{I}\right), \left(\mathbf{E}-\mathbf{I}\right)\mathbf{E}=\mathbf{0}} \\
&= \beta(s)\cdot \left[-\frac{\alpha_t}{\alpha_s}\cdot\left(\mathbf{E}-\mathbf{I} \right)^2\right] + \beta(s)\cdot \left(\mathbf{E}-\mathbf{I}\right)\mathbf{E} \\
&= \left[ \frac{\alpha_t}{\alpha_s}\mathbf{I} +(1-\frac{\alpha_t}{\alpha_s})\mathbf{E}\right] \cdot \left[\beta(s)\cdot \left(\mathbf{E}-\mathbf{I} \right)\right] \\
&= \mathbf{P}_{t|s}\mathbf{Q}_{s}.
\end{align*}
 
\begin{align*}
\frac{\partial \mathbf{P}_{s|t}}{\partial s} &= \frac{\partial}{\partial s}\left(\mathbf{P}_{t|s}^{-1} \cdot \left[\vec{p}_t\cdot \left(\frac{1}{\vec{p}_t}\right)^\top\right] \odot \mathbf{P}_{t|s}^{\top}\right) \tag{46}\\
&= \frac{\partial}{\partial s}\left(\left[\vec{p}_s\cdot \left(\frac{1}{\vec{p}_t}\right)^\top\right] \odot \mathbf{P}_{t|s}^{\top}\right)\\
&= \left[\frac{\partial \vec{p}_s}{\partial s}\cdot \left(\frac{1}{\vec{p}_t}\right)^\top\right] \odot \mathbf{P}_{t|s}^{\top}
+\left[\vec{p}_s\cdot \left(\frac{1}{\vec{p}_t}\right)^\top\right] \odot \frac{\partial \mathbf{P}_{t|s}^{\top}}{\partial s}\\
&~~~~~\textcolor{gray}{\text{The reverse time $s$ begins with $s=T$.} \Rightarrow \frac{\partial \mathbf{P}_{s|0}}{\partial s}=-\mathbf{Q}_{s}\cdot\mathbf{P}_{s|0}} \\
&~~~~~\textcolor{gray}{\frac{\partial \vec{p}_s}{\partial s}=\frac{\partial \mathbf{P}_{s|0}\cdot\vec{p}_0}{\partial s}
= -\mathbf{Q}_{s}\cdot\mathbf{P}_{s|0}\cdot\vec{p}_0
= -\mathbf{Q}_{s}\cdot\vec{p}_s
}\\
&~~~~~\textcolor{gray}{\frac{\partial \mathbf{P}_{t|s}}{\partial s}=\mathbf{P}_{t|s}\cdot \mathbf{Q}_{s} \Rightarrow
\frac{\partial \mathbf{P}_{t|s}^\top}{\partial s}=(\mathbf{P}_{t|s}\cdot \mathbf{Q}_{s})^\top
= \mathbf{Q}_{s}^\top \cdot \mathbf{P}_{t|s}^\top
}\\
&= \left[\vec{p}_s\cdot \left(\frac{1}{\vec{p}_t}\right)^\top\right] \odot (\mathbf{Q}_{s}^\top \cdot \mathbf{P}_{t|s}^\top)-\left[\mathbf{Q}_{s}\cdot\vec{p}_s\cdot \left(\frac{1}{\vec{p}_t}\right)^\top\right] \odot \mathbf{P}_{t|s}^{\top}.
 \end{align*}

Finally, we prove that the preference growing process satisfies the Kolmogorov backward equation:
\begin{align*}
\mathbf{R}_{s}\mathbf{P}_{s|t} &=\left\{\mathbf{Q}_s^\top\odot \left[\vec{p}_s\cdot \left(\frac{1}{\vec{p}_s}\right)^\top\right] - \mathbf{Q}_s\left[\vec{p}_t\cdot \left(\frac{1}{\vec{p}_t}\right)^\top\right] \odot \mathbf{I}\right\}
 \cdot \left\{\left[\vec{p}_s\cdot \left(\frac{1}{\vec{p}_t}\right)^\top\right] \odot \mathbf{P}_{t|s}^{\top}\right\} \tag{47}\\
 &~~~~~\textcolor{gray}{\text{The idea behind this step is to prove that the elements at each position of the matrices are}}\\
 &~~~~~\textcolor{gray}{\text{identical, thereby establishing their equality}}.\\
  &~~~~~\textcolor{gray}{\text{The details are provided in Equation \eqref{point:0}, \eqref{point:1} and \eqref{point:2}}}.\\
 & = \left\{\left[\vec{p}_s\cdot \left(\frac{1}{\vec{p}_t}\right)^\top\right] \odot (\mathbf{Q}_{s}^\top \cdot \mathbf{P}_{t|s}^\top)\right\} - \left\{\left[\mathbf{Q}_{s}\cdot\vec{p}_s\cdot \left(\frac{1}{\vec{p}_t}\right)^\top\right] \odot \mathbf{P}_{t|s}^{\top}\right\} \\
 & = \frac{\partial \mathbf{P}_{s|t}}{\partial s}.
\end{align*}

\begin{align*}
&~~~~\left\{\mathbf{Q}_s^\top\odot \left[\vec{p}_s\cdot \left(\frac{1}{\vec{p}_s}\right)^\top\right]\right\}\cdot \left\{\left[\vec{p}_s\cdot \left(\frac{1}{\vec{p}_t}\right)^\top\right] \odot \mathbf{P}_{t|s}^{\top}\right\} (x,y) \tag{48} \label{point:0}\\
&=
\sum\limits_{z\in\mathcal{X}}q_s(z,x)\frac{p_s(x)}{p_s(z)}\cdot \frac{p_s(z)}{p_t(y)}p_{t|s}(y|z)
\\
&=
\frac{p_s(x)}{p_t(y)}\cdot \sum\limits_{z\in\mathcal{X}}q_s(z,x)p_{t|s}(y|z)
\\
&= \left\{\left[\vec{p}_s\cdot \left(\frac{1}{\vec{p}_t}\right)^\top\right] \odot (\mathbf{Q}_{s}^\top \cdot \mathbf{P}_{t|s}^\top)\right\}(x,y), \forall x,y\in\mathcal{X}.
\end{align*}

\begin{align*}
&~~~~\left\{\mathbf{Q}_s\cdot \left[\vec{p}_s\cdot \left(\frac{1}{\vec{p}_s}\right)^\top\right] \odot \mathbf{I}\right\}(x,y) \tag{49} \label{point:1} \\
&=
\sum\limits_{z\in\mathcal{X}}q_s(x,z)\frac{p_s(z)}{p_s(y)}\cdot \delta_{x}(y)
\\
&=
\delta_{x}(y)\cdot \sum\limits_{l\in\mathcal{X}}q_s(x,l)\frac{p_s(l)}{p_s(x)}.
\end{align*}

\begin{align*}
&~~~~\left\{\mathbf{Q}_s\cdot \left[\vec{p}_s\cdot \left(\frac{1}{\vec{p}_s}\right)^\top\right] \odot \mathbf{I}\right\}\cdot \left\{\left[\vec{p}_s\cdot \left(\frac{1}{\vec{p}_t}\right)^\top\right] \odot \mathbf{P}_{t|s}^{\top}\right\} (x,y) \tag{50} \label{point:2}\\
&=
\sum\limits_{z\in\mathcal{X}}\delta_{x}(z)\cdot \sum\limits_{l\in\mathcal{X}}q_s(x,l)\frac{p_s(l)}{p_s(x)}\cdot 
\frac{p_s(z)}{p_t(y)}p_{t|s}(y|z)
\\
&=
\sum\limits_{l\in\mathcal{X}}q_s(x,l)\frac{p_s(l)}{p_s(x)}\cdot 
\frac{p_s(x)}{p_t(y)}p_{t|s}(y|x)
\\
&= p_{t|s}(y|x)\cdot\sum\limits_{l\in\mathcal{X}}q_s(x,l)
\frac{p_s(l)}{p_t(y)} \\
&= \left\{\left[\mathbf{Q}_{s}\cdot\vec{p}_s\cdot \left(\frac{1}{\vec{p}_t}\right)^\top\right] \odot \mathbf{P}_{t|s}^{\top}\right\}(x,y), \forall x,y\in\mathcal{X}.
\end{align*}

\end{proof}

\section{Details of Different Fading Matrix Setting} \label{app:variants}

\subsection{Point-Wise Setting}
In the setting of masked discrete diffusion models \cite{RADD, MDD1, MDD2}, we can model point-wise preference ratios. 
Specifically, we introduce an auxiliary general hard negative item $x_{-1}$, which is represented as a learnable embedding. 
The unified non-preference state corresponds to the general hard negative item $x_{-1}$, that is, $\vec{p}_T = \vec{e}_{-1} \in \mathbb{R}^{N+1}$,
where $\vec{e}_{-1}$ denotes the one-hot vector associated with $x_{-1}$. 
In this case, the reference ratios $r_t(x_0, x_t \in \{x_0, x_{-1}\}, y \in \{x_0, x_{-1}\})$ capture only the relative preference between the positive item $x_0$ and the general hard negative $x_{-1}$. Thus, by using $x_{-1}$ as a common reference, we derive the point-wise preference ratios.

Let $\vec{p}_T = \vec{e}_{-1} \in \mathbb{R}^{N+1}$. Then, the rank-$1$ fading matrix $\mathbf{E}$ is defined as follows:
$$
\mathbf{E}=
\begin{pmatrix}
0 & \cdots & 0  & 0 \\
\vdots & \ddots & \vdots & \vdots \\
0 & \cdots & 0 & 0 \\
1 & \cdots & 1 & 1
\end{pmatrix}
$$
In this case, the reference ratios $r_t(x_0, x_t \in \{x_0, x_{-1}\}, y \in \{x_0, x_{-1}\})$ model only the ratios between the general hard negative $x_{-1}$ and real items $\mathcal{X}$. 
$$
r_t(x_0, x_t, y) =
\begin{cases}
    r_t(x_0, x_t, x_t)=0 & \text{if } x_t = y \\
    r_t(x_0, x_0, x_{-1})=\log\frac{1-\alpha_t}{\alpha_t} & \text{if } y=x_{-1} \text{ and }  x_t =x_0  \\
    r_t(x_0, x_{-1}, x_0)=\log\frac{\alpha_t}{1-\alpha_t} & \text{if } y=x_{0} \text{ and }  x_t =x_{-1}  \\
\end{cases}
$$

$$
\mathbf{Q}_t(x,y) = \beta(t)\cdot (\mathbf{E}-\mathbf{I})=
\begin{cases}
    -\beta(t) & \text{if } x = y \ne x_{-1} \\
    \beta(t) & \text{if } x=x_{-1} \text{ and }  x_t \ne x_{-1}  \\
    0 & \text{otherwise} \\
\end{cases}
$$

We eastimates the preference ratios $\log\frac{p_t(y|u)}{p_t(x_t|u)}$ with $s_{\Theta}(x_t, t, u)_y$:
$$
l_{SE}(x_0,x_t,y|u) = e^{s_{\Theta}(x_t, t, u)_y}-e^{r_t(x_0,x_t,y)}s_{\Theta}(x_t, t, u)_y+e^{r_t(x_0,x_t,y)}[r_t(x_0,x_t,y)-1].
$$

For one user preference data $(u,x_0)$, with faded item $x_t$, we compute $\mathcal{L}_{SE}(x_0,x_t,y|u)$:
\begin{align*}
\mathcal{L}_{SE} &=\sum\limits_{y\in \{x_0,x_{-1}\}}\mathbf{Q}_t(x_t,y) \cdot l_{SE}(x_0,x_t,y|u)\\
&=\mathbf{Q}_t(x_t,x_t) \cdot l_{SE}(x_0,x_t,x_t|u)+\sum\limits_{y\in \{x_0,x_{-1}\}\setminus\{x_t\}}\mathbf{Q}_t(x_t,y) \cdot l_{SE}(x_0,x_t,y|u)\\
&~~~~~\textcolor{gray}{s_{\Theta}(x_t, t, u)_{x_t}=r_t(x_0, x_t, x_t)=0\Rightarrow
l_{SE}(x_0,x_t,x_t|u)=0}\\
&= \sum\limits_{y\in \{x_0,x_{-1}\}\setminus\{x_t\}}\mathbf{Q}_t(x_t,y) \cdot l_{SE}(x_0,x_t,y|u)\\
&~~~~~\textcolor{gray}{\mathbf{Q}_t(x_t\ne x_{-1},y\ne x_t)=0\Rightarrow
\mathcal{L}_{SE}(x_0,x_0,y|u)=0\Rightarrow x_t=x_{-1}}\\
&= \mathbf{Q}_t(x_{-1},x_0) \cdot l_{SE}(x_0,x_{-1},x_0|u)\\
&= \beta(t) \cdot \left[e^{s_{\Theta}(x_t, t, u)_y}- \frac{\alpha_t}{1-\alpha_t}\cdot s_{\Theta}(x_t, t, u)_y +\frac{\alpha_t}{1-\alpha_t}[\log\frac{\alpha_t}{1-\alpha_t} -1]\right].
\end{align*}

\subsection{Pair-Wise Setting}

In the field of natural language processing, uniform discrete diffusion models are generally considered inferior to masked discrete diffusion models \cite{DDSE, RADD}. 
However, in the context of recommendation, uniform discrete diffusion models correspond to pair-wise preference ratios, which resemble random negative sampling, and are in fact superior to point-wise masked discrete diffusion models.
Specifically, we define the unified non-preference state as having equal probability over all items, \ie
$\vec{p}_T = \vec{1} \in \mathbb{R}^{N},$
where $\vec{1}$ denotes the all-ones vector. In this setting, the reference ratios $r_t(x_0, x_t \in \mathcal{X}, y \in \mathcal{X})$ capture the relative preferences among all item pairs.

Let $\vec{p}_T=\vec{1}\in \mathbb{R}^{N}$, we have fading matrix $\mathbf{E}$ as follows:
$$
\mathbf{E}=
\begin{pmatrix}
\frac{1}{N} & \cdots & \frac{1}{N}   \\
\vdots & \ddots & \vdots  \\
\frac{1}{N} & \cdots & \frac{1}{N}  
\end{pmatrix}
$$
In this case, the reference ratios $r_t(x_0, x_t \in \mathcal{X}, y \in \mathcal{X})$ model the ratios of all item pairs. 
$$
r_t(x_0, x_t, y) =
\begin{cases}
    r_t(x_0, x_t, x_t)=0 & \text{if } x_t = y \\
    r_t(x_0, x_0, y\ne x_{0})=-\log(1+N\cdot\frac{\alpha_t}{1-\alpha_t}) & \text{if } y\ne x_{0} \text{ and }  x_t =x_0  \\
    r_t(x_0, x_t\ne x_0, x_{0})=\log(1+N\cdot\frac{\alpha_t}{1-\alpha_t}) & \text{if } y= x_{0} \text{ and }  x_t\ne x_0  \\
    r_t(x_0, x_t\ne x_0, y\ne x_0)=0 & \text{otherwise }\\
\end{cases}
$$

$$
\mathbf{Q}_t(x,y) = \beta(t)\cdot (\mathbf{E}-\mathbf{I})=
\begin{cases}
    \beta(t)(\frac{1}{N}-1) & \text{if } x = y  \\
    \beta(t)\cdot\frac{1}{N} & \text{otherwise}  \\
\end{cases}
$$

We eastimates the preference ratios $\log\frac{p_t(y|u)}{p_t(x_t|u)}$ with $s_{\Theta}(x_t, t, u)_y$:
$$
l_{SE}(x_0,x_t,y|u) = e^{s_{\Theta}(x_t, t, u)_y}-e^{r_t(x_0,x_t,y)}s_{\Theta}(x_t, t, u)_y+e^{r_t(x_0,x_t,y)}[r_t(x_0,x_t,y)-1].
$$

For one user preference data $(u,x_0)$, with faded item $x_t$, we compute $\mathcal{L}_{SE}(x_0,x_t,y|u)$:
\begin{align*}
\mathcal{L}_{SE} &=\sum\limits_{y\in \mathcal{X}}\mathbf{Q}_t(x_t,y) \cdot l_{SE}(x_0,x_t,y|u)\\
&=\mathbf{Q}_t(x_t,x_t) \cdot l_{SE}(x_0,x_t,x_t|u)+\sum\limits_{y\in \mathcal{X}\setminus\{x_t\}}\mathbf{Q}_t(x_t,y) \cdot l_{SE}(x_0,x_t,y|u)\\
&~~~~~\textcolor{gray}{s_{\Theta}(x_t, t, u)_{x_t}=r_t(x_0, x_t, x_t)=0\Rightarrow
l_{SE}(x_0,x_t,x_t|u)=0}\\
&= \sum\limits_{y\in \mathcal{X}\setminus\{x_t\}}\mathbf{Q}_t(x_t,y) \cdot l_{SE}(x_0,x_t,y|u)\\
&~~~~~\textcolor{gray}{\mathbf{Q}_t(x_t,y\ne x_t)=\beta(t)\cdot\frac{1}{N}}\\
&= \beta(t)\cdot\frac{1}{N}\cdot\sum\limits_{y\in \mathcal{X}\setminus\{x_t\}}l_{SE}(x_0,x_{-1},x_0|u)\\
&~~~~~\textcolor{gray}{
e^{\overline{\mathbf{s}}_{\Theta}(x_t, t, u)} = \frac{1}{N}\cdot \sum\limits_{y\in \mathcal{X}\setminus\{x_t\}} e^{s_{\Theta}(x_t, t, u)_y} = \frac{1}{N}\cdot \left[\sum\limits_{y\in \mathcal{X}} e^{s_{\Theta}(x_t, t, u)_y} -1\right]}\\
&~~~~~\textcolor{gray}{
\overline{\mathbf{s}}_{\Theta}(x_t, t, u) = \frac{1}{N}\cdot \sum\limits_{y\in \mathcal{X}\setminus\{x_t\}} s_{\Theta}(x_t, t, u)_y = \frac{1}{N}\cdot\sum\limits_{y\in \mathcal{X}} s_{\Theta}(x_t, t, u)_y}\\
&\textcolor{blue}{\text{if } x_t=x_0\text{, denote} \Delta=N\cdot\frac{\alpha_t}{1-\alpha_t}:}\\
&
\textcolor{blue}{
= \beta(t)\cdot\left[
e^{\overline{\mathbf{s}}_{\Theta}(x_t, t, u)}-\frac{1}{1+\Delta}\cdot \overline{\mathbf{s}}_{\Theta}(x_t, t, u)
- \frac{1}{1+\Delta}(\log (1+\Delta)+1)
\right]}\\
&\textcolor{orange}{\text{if } x_t\ne x_0\text{, denote} \Delta=N\cdot\frac{\alpha_t}{1-\alpha_t}:}\\
&
\textcolor{orange}{
= \beta(t)\cdot\left[
e^{\overline{\mathbf{s}}_{\Theta}(x_t, t, u)}- \overline{\mathbf{s}}_{\Theta}(x_t, t, u)
-\Delta\cdot s_{\Theta}(x_t, t, u)_{x_0} +  (1+\Delta)(\log (1+\Delta)-1)
\right].}
\end{align*}

\subsection{Hybrid-Wise Setting}

Note that the point-wise and pair-wise settings are independent of each other. 
Therefore, we can define a hybrid non-preference state as $\vec{p}_T = \lambda(\vec{1} \in \mathbb{R}^{N}, 0) + (1 - \lambda) \vec{e}_{-1} \in \mathbb{R}^{n+1}$, simultaneously modeling point-wise and pair-wise preference ratios. 
Since $|\mathcal{X}|$ is typically large, the hybrid coefficient $\lambda$ should be designed as $1 - 10^{-n_{\lambda}}$, where $n_{\lambda} \in \mathbb{Z}^+$.

Let $\vec{p}_1=\lambda(\vec{1},0)+(1-\lambda)\vec{e}_{-1}\in \mathbb{R}^{n+1}$, we have fading matrix $\mathbf{E}$ as follows:
$$
\mathbf{E}=
\begin{pmatrix}
\frac{\lambda}{N} & \cdots & \frac{\lambda}{N}  & \frac{\lambda}{N} \\
\vdots & \ddots & \vdots & \vdots \\
\frac{\lambda}{N} & \cdots & \frac{\lambda}{N} & \frac{\lambda}{N} \\
1-\lambda & \cdots & 1-\lambda & 1-\lambda
\end{pmatrix}
$$
In this case, the reference ratios $r_t(x_0, x_t \in \mathcal{X}, y \in \mathcal{X})$ model the ratios of all item pairs. 
$$
r_t(x_0, x_t, y) =
\begin{cases}
    0 & \text{if } x_t = y \\
    \log\frac{\alpha_t+(1-\alpha_t)\frac{\lambda}{N}}{(1-\alpha_t)(1-\lambda)} & \text{else if } y= x_{0} \text{ and }  x_t =x_{-1}  \\
    \log\frac{\frac{\lambda}{N}}{(1-\lambda)} & \text{else if } y\ne x_{0} \text{ and }  x_t= x_{-1}  \\
    \log\frac{(1-
    \alpha_t)(1-\lambda)}{\alpha_t+(1-\alpha_t)\frac{\lambda}{N}} & \text{else if } y= x_{-1} \text{ and }  x_t =x_0  \\
    \log\frac{(1-
    \alpha_t)\frac{\lambda}{N}}{\alpha_t+(1-\alpha_t)\frac{\lambda}{N}} & \text{else if } y\ne x_{-1} \text{ and }  x_t= x_0  \\
    \log\frac{(1-
    \alpha_t)(1-\lambda)}{(1-\alpha_t)\frac{\lambda}{N}} & \text{else if } y= x_{-1} \text{ and }  x_t\ne \{ x_{-1},x_0\}  \\
    \log\frac{\alpha_t+(1-
    \alpha_t)\frac{\lambda}{N}}{(1-\alpha_t)\frac{\lambda}{N}} & \text{else if } y= x_{0} \text{ and }  x_t\ne \{ x_{-1},x_0\}  \\
    0 & \text{else if } y\ne \{ x_{-1},x_0\} \text{ and }  x_t\ne \{ x_{-1},x_0\}  \\
\end{cases}
$$

$$
\mathbf{Q}_t(x,y) = \beta(t)\cdot (\mathbf{E}-\mathbf{I})=
\begin{cases}
    \beta(t)(-\lambda) & \text{if } x = y= x_{-1}  \\
    \beta(t)(\frac{\lambda}{N}-1) & \text{if } x = y\ne x_{-1}  \\
    \beta(t)(1-\lambda) & \text{if } x \ne y \text{ and } x= x_{-1}  \\
    \beta(t)(\frac{\lambda}{N}) & \text{if } x \ne y \text{ and } x\ne x_{-1}  \\
\end{cases}
$$

We estimate the preference ratios $\log\frac{p_t(y|u)}{p_t(x_t|u)}$ with $s_{\Theta}(x_t, t, u)_y$:
$$
l_{SE}(x_0,x_t,y|u) = e^{s_{\Theta}(x_t, t, u)_y}-e^{r_t(x_0,x_t,y)}s_{\Theta}(x_t, t, u)_y+e^{r_t(x_0,x_t,y)}[r_t(x_0,x_t,y)-1].
$$

For one user preference data $(u,x_0)$, with faded item $x_t$, we compute $\mathcal{L}_{SE}(x_0,x_t,y|u)$:
\begin{align*}
\mathcal{L}_{SE} &=\sum\limits_{y\in \mathcal{X}}\mathbf{Q}_t(x_t,y) \cdot l_{SE}(x_0,x_t,y|u)\\
&=\mathbf{Q}_t(x_t,x_t) \cdot l_{SE}(x_0,x_t,x_t|u)+\sum\limits_{y\in \mathcal{X}\setminus\{x_t\}}\mathbf{Q}_t(x_t,y) \cdot l_{SE}(x_0,x_t,y|u)\\
&~~~~~\textcolor{gray}{s_{\Theta}(x_t, t, u)_{x_t}=r_t(x_0, x_t, x_t)=0\Rightarrow
l_{SE}(x_0,x_t,x_t|u)=0}\\
&= \sum\limits_{y\in \mathcal{X}\setminus\{x_t\}}\mathbf{Q}_t(x_t,y) \cdot l_{SE}(x_0,x_t,y|u)\\
&~~~~~\textcolor{gray}{\mathbf{Q}_t(x_t=x_{-1},y\ne x_t)=\beta(t)\cdot(1-\lambda),\mathbf{Q}_t(x_t\ne x_{-1},y\ne x_t)=\beta(t)\cdot(\frac{\lambda}{N})}\\
&= \mathbf{Q}_t(x_t,y\ne x_t)\cdot\sum\limits_{y\in \mathcal{X}\setminus\{x_t\}}l_{SE}(x_0,x_{-1},x_0|u)\\
&~~~~~\textcolor{gray}{
e^{\overline{\mathbf{s}}_{\Theta}(x_t, t, u)} = \frac{1}{N}\cdot \sum\limits_{y\in \mathcal{X}\setminus\{x_t\}} e^{s_{\Theta}(x_t, t, u)_y} = \frac{1}{N}\cdot \left[\sum\limits_{y\in \mathcal{X}} e^{s_{\Theta}(x_t, t, u)_y} -1\right]}\\
&~~~~~\textcolor{gray}{
\overline{\mathbf{s}}_{\Theta}(x_t, t, u) = \frac{1}{N}\cdot \sum\limits_{y\in \mathcal{X}\setminus\{x_t\}} s_{\Theta}(x_t, t, u)_y = \frac{1}{N}\cdot\sum\limits_{y\in \mathcal{X}} s_{\Theta}(x_t, t, u)_y}\\
&\textcolor{blue}{\text{if } x_t=x_{-1}\text{, denote } \Delta=N\cdot\frac{\alpha_t}{1-\alpha_t} \text{ and } \Lambda=\frac{\lambda\cdot\Delta}{\lambda\cdot\Delta+N}:}\\
&\textcolor{blue}{= \beta(t)\cdot\left[N(1-\lambda)\cdot e^{\overline{\mathbf{s}}_{\Theta}(x_t, t, u)}-\lambda\cdot \overline{\mathbf{s}}_{\Theta}(x_t, t, u)-\frac{\frac{1}{\Lambda}-\lambda}{N}\cdot s_{\Theta}(x_t, t, u)_{x_0}\right]}\\
&\textcolor{blue}{+ \beta(t)\cdot\left\{\lambda\cdot \left[1+\frac{1}{N}\left(\frac{1}{\Lambda}-1\right)\right]\cdot \left[\log\left(\frac{1}{N}\cdot\frac{\lambda}{1-\lambda}\right) -1
\right] - \frac{\log\Lambda}{\Lambda}\right\}}\\
&\textcolor{orange}{\text{if } x_t= x_0\text{, denote } \Delta=N\cdot\frac{\alpha_t}{1-\alpha_t} \text{ and } \Lambda=\frac{\lambda\cdot\Delta}{\lambda\cdot\Delta+N}:}\\
&\textcolor{orange}{= \beta(t)\cdot\left[\lambda\cdot e^{\overline{\mathbf{s}}_{\Theta}(x_t, t, u)}-\lambda\cdot\Lambda\cdot\overline{\mathbf{s}}_{\Theta}(x_t, t, u)-(1-\lambda-\frac{\lambda}{N})\cdot \Lambda\cdot s_{\Theta}(x_t, t,u)_{x_{-1}}\right]}\\
&\textcolor{orange}{+ \beta(t)\cdot\left\{(1-\frac{\lambda}{N})\cdot\Lambda\cdot(\log\Lambda -1)-(1-\lambda)\cdot\Lambda\cdot(\log(\frac{1}{N}\frac{\lambda}{1-\lambda}))\right\}}\\
&\textcolor{purple}{\text{if } x_t\ne \{x_0,x_{-1}\}\text{, denote }\Delta=N\cdot\frac{\alpha_t}{1-\alpha_t} \text{ and }\Lambda=\frac{\lambda\cdot\Delta}{\lambda\cdot\Delta+N}:}\\
&\textcolor{purple}{= \beta(t)\cdot\left[\lambda\cdot e^{\overline{\mathbf{s}}_{\Theta}(x_t, t, u)}-\lambda\cdot\overline{\mathbf{s}}_{\Theta}(x_t, t, u)\right]}\\
&\textcolor{purple}{- \beta(t)\cdot\left[(\frac{1}{N\cdot\Lambda}-\frac{\lambda}{N})\cdot \Lambda\cdot s_{\Theta}(x_t, t, u)_{x_{0}})+(1-\lambda-\frac{\lambda}{N})\cdot s_{\Theta}(x_t, t, u)_{x_{0}}\right]}\\
&\textcolor{purple}{+ \beta(t)\cdot\left\{\frac{1}{N\cdot\Lambda}\cdot(\log\Lambda+ \log\lambda -1)-(1-\lambda)\cdot(1+\log(\frac{1}{N}\frac{\lambda}{1-\lambda}))-(1-\frac{2}{N})\cdot\lambda\right\}}.
\end{align*}

\subsection{Adative Setting}

Furthermore, we can adaptively update the non-preference state $\vec{p}_T$ initialized with above settings.
Under adaptive setting, $\vec{p}_T = \vec{\mu} =\text{softmax}(\vec{\theta})$, where $\vec{\theta}$ are learnable parameters.

Let $\vec{p}_T=\vec{\mu}\in \mathbb{R}^{N}$ or $\mathbb{R}^{N+1}$, with $\sum\limits_{x\in\mathcal{X}}\mu_x=1$, we have fading matrix $\mathbf{E}$ as follows:
$$
\mathbf{E}=
\begin{pmatrix}
\mu_1 & \cdots & \mu_1   \\
\vdots & \ddots & \vdots  \\
\mu_{|\mathcal{X}|} & \cdots & \mu_{|\mathcal{X}|} 
\end{pmatrix}
$$
By arbitrarily specifying a distribution $\vec{\mu}$ satisfying $\sum\limits_{x\in\mathcal{X}} \mu_x = 1$, we can instantiate any desired non-preference state, which is physically associated with different negative sampling strategies.
In this case, the reference ratios $r_t(x_0, x_t \in \mathcal{X}, y \in \mathcal{X})$ are computed as follows:
$$
r_t(x_0, x_t, y) =
\begin{cases}
    r_t(x_0, x_t, x_t)=0 & \text{if } x_t = y \\
    r_t(x_0, x_0, y\ne x_{0})=-\log(\frac{\alpha_t+(1-\alpha_t)\cdot\mu_{x_t}}{(1-\alpha_t)\cdot\mu_y}) & \text{if } y\ne x_{0} \text{ and }  x_t =x_0  \\
    r_t(x_0, x_t\ne x_0, x_{0})=\log(\frac{\alpha_t+(1-\alpha_t)\cdot\mu_{x_t}}{(1-\alpha_t)\cdot\mu_y}) & \text{if } y= x_{0} \text{ and }  x_t\ne x_0  \\
    r_t(x_0, x_t\ne x_0, y\ne x_0)=\log\frac{\mu_y}{\mu_{x_t}} & \text{otherwise }\\
\end{cases}
$$

$$
\mathbf{Q}_t(x,y) = \beta(t)\cdot (\mathbf{E}-\mathbf{I})=
\begin{cases}
    \beta(t)(\frac{1}{N}-1) & \text{if } x = y  \\
    \beta(t)\cdot\frac{1}{N} & \text{otherwise}  \\
\end{cases}
$$

We eastimates the preference ratios $\log\frac{p_t(y|u)}{p_t(x_t|u)}$ with $s_{\Theta}(x_t, t, u)_y$:
$$
l_{SE}(x_0,x_t,y|u) = e^{s_{\Theta}(x_t, t, u)_y}-e^{r_t(x_0,x_t,y)}s_{\Theta}(x_t, t, u)_y+e^{r_t(x_0,x_t,y)}[r_t(x_0,x_t,y)-1].
$$

For one user preference data $(u,x_0)$, with faded item $x_t$, we compute $\mathcal{L}_{SE}(x_0,x_t,y|u)$:
\begin{align*}
\mathcal{L}_{SE} &=\sum\limits_{y\in \mathcal{X}}\mathbf{Q}_t(x_t,y) \cdot l_{SE}(x_0,x_t,y|u)\\
&=\mathbf{Q}_t(x_t,x_t) \cdot l_{SE}(x_0,x_t,x_t|u)+\sum\limits_{y\in \mathcal{X}\setminus\{x_t\}}\mathbf{Q}_t(x_t,y) \cdot l_{SE}(x_0,x_t,y|u)\\
&~~~~~\textcolor{gray}{s_{\Theta}(x_t, t, u)_{x_t}=r_t(x_0, x_t, x_t)=0\Rightarrow
l_{SE}(x_0,x_t,x_t|u)=0}\\
&= \sum\limits_{y\in \mathcal{X}\setminus\{x_t\}}\mathbf{Q}_t(x_t,y) \cdot l_{SE}(x_0,x_t,y|u)\\
&~~~~~\textcolor{gray}{\mathbf{Q}_t(x_t,y\ne x_t)=\beta(t)\cdot\mu_{x_t}}\\
&= \beta(t)\cdot\frac{1}{N}\cdot\sum\limits_{y\in \mathcal{X}\setminus\{x_t\}}l_{SE}(x_0,x_{-1},x_0|u)\\
&~~~~~\textcolor{gray}{
e^{\overline{\mathbf{s}}_{\Theta}(x_t, t, u)} = \frac{1}{N}\cdot \sum\limits_{y\in \mathcal{X}\setminus\{x_t\}} e^{s_{\Theta}(x_t, t, u)_y} = \frac{1}{N}\cdot \left[\sum\limits_{y\in \mathcal{X}} e^{s_{\Theta}(x_t, t, u)_y} -1\right]}\\
&~~~~~\textcolor{gray}{
\tilde{\mathbf{s}}_{\Theta}(x_t, t, u) = \sum\limits_{y\in \mathcal{X}\setminus\{x_t\}} \mu_y\cdot s_{\Theta}(x_t, t, u)_y = \sum\limits_{y\in \mathcal{X}} \mu_{y}\cdot s_{\Theta}(x_t, t, u)_y}\\
&~~~~~\textcolor{gray}{
\hat{\mathbf{\mu}} = \sum\limits_{y\in \mathcal{X}} \mu_{y}\cdot \log \mu_{y}}\\
&\textcolor{blue}{\text{if } x_t=x_0\text{, denote } \Sigma=\frac{\alpha_t}{1-\alpha_t}:}\\
&\textcolor{blue}{
= \beta(t)\cdot\left[
\mu_{x_t}\cdot e^{\overline{\mathbf{s}}_{\Theta}(x_t, t, u)}-\frac{1}{\Sigma+\mu_{x_t}}\cdot \tilde{\mathbf{s}}_{\Theta}(x_t, t, u)
\right]}\\
&\textcolor{blue}{
+ \beta(t)\cdot\frac{\mu_{x_t}}{\mu_{x_t}+\Sigma}\cdot\left[\hat{\mathbf{\mu}}+(\mu_{x_t}-1)\cdot(\log(\mu_{x_t}+\Sigma)-1)-\mu_{x_t}\cdot\log \mu_{x_t} \right]}\\
&\textcolor{orange}{\text{if } x_t\ne x_0\text{, denote } \Sigma=\frac{\alpha_t}{1-\alpha_t}:}\\
&
\textcolor{orange}{
= \beta(t)\cdot\left[
\mu_{x_t}\cdot e^{\overline{\mathbf{s}}_{\Theta}(x_t, t, u)}- \tilde{\mathbf{s}}_{\Theta}(x_t, t, u)
-\Sigma\cdot s_{\Theta}(x_t, t, u)_{x_0} \right]} \\
&\textcolor{orange}{
+ \beta(t)\cdot\left[\hat{\mathbf{\mu}}+\mu_{x_t}-(1+\Sigma)(1+\log(\mu_{x_t}))+(\Sigma+\mu_{x_0})\log (\Sigma+\mu_{x_0})-(\mu_{x_0})\cdot\log (\mu_{x_0}) \right].}
\end{align*}

\section{Rank-$r$ Solution of Fading Matrix $\mathbf{E}$}
\label{app:rankr}

\begin{theorem}[Nonnegative Idempotent Decomposition]
\label{thm:nonneg-idem-decomp}
Let $\mathbf{E}\in\mathbb{R}^{N\times N}$ be entrywise nonnegative and idempotent ($\mathbf{E}^2=\mathbf{E}$) with ${\rm rank}(\mathbf{E})=r$.
Then the following statements are equivalent:
\begin{enumerate}
\item[\textnormal{(A)}] $\mathbf{E}$ is nonnegative and idempotent with ${\rm rank}(\mathbf{E})=r$.
\item[\textnormal{(B)}] There exist $r$ nonzero, pairwise disjoint\footnote{Disjointness means the supports do not overlap:
$\supp(\vec p^{\,i})\cap \supp(\vec p^{\,j})=\varnothing$ for $i\neq j$.}
nonnegative vectors $\vec p^{\,1},\ldots,\vec p^{\,r}\in\mathbb{R}^N_{\ge 0}$ and the corresponding
support-indicator vectors $\vec s^{\,i}:=\mathbf{1}_{\supp(\vec p^{\,i})}\in\{0,1\}^N$
such that
\begin{equation}
\label{eq:E-decomp}
\mathbf{E}
=\sum_{i=1}^r \mathbf{E}_i,
\qquad
\mathbf{E}_i:=\frac{\vec p^{\,i}(\vec s^{\,i})^\top}{(\vec s^{\,i})^\top \vec p^{\,i}}, \tag{51}
\end{equation}
where $(\vec s^{\,i})^\top \vec p^{\,i}>0$ for each $i$ and
$\supp(\vec p^{\,i}):=\{\,k\in\{1,\dots,N\}: p^{\,i}_k>0\,\}$.
Equivalently, for any column index $j\in\{1,\dots,N\}$,
\[
\mathbf{E}_{:j}=
\begin{cases}
\vec p^{\,i}, & \text{if } j\in\supp(\vec p^{\,i}) \text{ for some } i\in\{1,\dots,r\},\\[2pt]
\vec 0, & \text{if } j\notin \bigcup_{i=1}^r \supp(\vec p^{\,i}).
\end{cases}
\]
\end{enumerate}

Moreover, writing the support indicator as $\vec 1$ restricted to $\supp(\vec p^{\,i})$, the decomposition
\eqref{eq:E-decomp} admits the compact form
\[
\mathbf{E}=\sum_{i=1}^r
\frac{\vec p^{\,i}\,\vec 1_{\supp(\vec p^{\,i})}^{\top}}{\vec 1_{\supp(\vec p^{\,i})}^{\top}\vec p^{\,i}},
\qquad
\vec p^{\,i}\ge 0,\quad
(\vec p^{\,i})^\top \vec p^{\,j}=0\ \ (i\neq j).
\]
\end{theorem}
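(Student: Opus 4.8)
The plan is to prove the stated equivalence by establishing the two implications separately, treating $\textnormal{(B)}\Rightarrow\textnormal{(A)}$ as a direct verification and reserving the genuine work for the structural direction $\textnormal{(A)}\Rightarrow\textnormal{(B)}$.

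For $\textnormal{(B)}\Rightarrow\textnormal{(A)}$ I would substitute the claimed decomposition and check the three properties in turn. Nonnegativity is immediate, since each $\mathbf{E}_i$ is an outer product of nonnegative vectors divided by the positive scalar $(\vec s^{\,i})^\top\vec p^{\,i}=\|\vec p^{\,i}\|_1$. For idempotence the crucial observation is that the cross terms vanish: because the supports are pairwise disjoint and $\vec s^{\,i}=\mathbf 1_{\supp(\vec p^{\,i})}$, one has $(\vec s^{\,i})^\top\vec p^{\,j}=\sum_{k\in\supp(\vec p^{\,i})}p^{\,j}_k=0$ whenever $i\neq j$, while $(\vec s^{\,i})^\top\vec p^{\,i}=\|\vec p^{\,i}\|_1$ makes each $\mathbf{E}_i$ idempotent; hence $\mathbf{E}^2=\sum_{i,j}\mathbf{E}_i\mathbf{E}_j=\sum_i\mathbf{E}_i=\mathbf{E}$. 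Finally, each $\mathbf{E}_i$ has column space $\operatorname{span}(\vec p^{\,i})$, and the $\vec p^{\,i}$ are linearly independent because they are nonzero with disjoint supports, so $\operatorname{rank}(\mathbf{E})=r$. This direction is robust and needs no case analysis.

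For $\textnormal{(A)}\Rightarrow\textnormal{(B)}$ I would exploit that an idempotent $\mathbf{E}$ is a projection onto $V:=\operatorname{Im}(\mathbf{E})=\{x:\mathbf{E}x=x\}$ with $\dim V=\operatorname{rank}(\mathbf{E})=r$. Every column satisfies $\mathbf{E}\,\mathbf{E}_{:j}=\mathbf{E}_{:j}$, so the columns are nonnegative fixed vectors lying in $V$ and spanning it. The plan is then to show that the nonzero columns take exactly $r$ distinct values $\vec p^{\,1},\dots,\vec p^{\,r}$, that the index set on which a given value occurs is precisely $\supp(\vec p^{\,i})$, and that these supports are pairwise disjoint; reading off $\mathbf{E}_{:j}$ and normalizing then yields the stated form. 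Concretely I would expand the fixed-point identity columnwise, $\mathbf{E}_{:j}=\sum_{k}\mathbf{E}_{kj}\,\mathbf{E}_{:k}$, which exhibits each column as a nonnegative combination of the columns indexed by its own support, and combine this self-referential relation with the bound $0\le\mathbf{E}_{jj}\le1$ (a consequence of $\mathbf{E}_{jj}=\sum_k\mathbf{E}_{jk}\mathbf{E}_{kj}\ge\mathbf{E}_{jj}^2$) to propagate equality of columns within a class and disjointness of supports across classes.

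The main obstacle is exactly this structural lemma, which is the nonnegative–idempotent analogue of a Flor-type normal form: upgrading the soft statement ``the columns span an $r$-dimensional cone of fixed vectors'' to the rigid statement ``the set of columns sharing a common value is \emph{exactly} the support of that value.'' The delicate point is the coincidence between a column's support and its equivalence class, because a priori columns in a class could be merely \emph{proportional} rather than equal, or a shared value could be supported on a strictly smaller index set than the set of columns realizing it; absorbing-type matrices such as $\bigl(\begin{smallmatrix}1&1\\0&0\end{smallmatrix}\bigr)$ show that without an extra ``self-supported'' condition the identification $\vec s^{\,i}=\mathbf 1_{\supp(\vec p^{\,i})}$ can fail and one is left only with a block indicator on each class. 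Pinning down this coincidence—so that each column value is supported precisely on its own class and the count is exactly $r$—is where the nonnegativity and idempotence hypotheses must be combined most carefully, and it is the crux I expect to occupy the bulk of the argument.
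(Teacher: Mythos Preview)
Your plan for both directions mirrors the paper's argument. For $(\mathrm{B})\Rightarrow(\mathrm{A})$ the verifications are identical. For $(\mathrm{A})\Rightarrow(\mathrm{B})$ the paper, like you, observes that every column of $\mathbf{E}$ is a nonnegative fixed vector, groups the nonzero columns by proportionality into classes $S_1,\dots,S_r$, picks a representative $\vec p^{\,i}$ from each, and assembles $\mathbf{E}$ from the resulting rank-one pieces.

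The obstacle you single out is genuine, and the paper does not resolve it either. The paper silently sets $\vec s^{\,i}:=\mathbf 1_{S_i}$ (the indicator of the proportionality class) rather than $\mathbf 1_{\supp(\vec p^{\,i})}$ as the theorem statement requires, asserts without argument that proportional columns are in fact \emph{equal}, and claims ``by definition'' that the $\vec p^{\,i}$ have disjoint supports. Your absorbing example $\bigl(\begin{smallmatrix}1&1\\0&0\end{smallmatrix}\bigr)$ already exhibits $S_1=\{1,2\}\neq\{1\}=\supp(\vec p^{\,1})$, so the decomposition with $\vec s^{\,1}=\mathbf 1_{\supp(\vec p^{\,1})}$ fails there; and $\mathbf{E}=\bigl(\begin{smallmatrix}1/2&1\\1/4&1/2\end{smallmatrix}\bigr)$ is a nonnegative rank-$1$ idempotent whose two columns are proportional but unequal, confirming your other worry and showing that even replacing $\supp(\vec p^{\,i})$ by $S_i$ does not rescue the formula. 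In short, the literal equivalence $(\mathrm{A})\Leftrightarrow(\mathrm{B})$ is false as written, the paper's proof glosses over exactly the gap you anticipated, and some additional hypothesis (or a weaker form of (B) with general nonnegative right factors, as in Flor's classical structure theorem) is needed to make the statement correct.
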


\paragraph{Proof (Sufficiency $(\textnormal{B})\Rightarrow(\textnormal{A})$).}
Entrywise nonnegativity is immediate from $\vec p^{\,i}\ge 0$ and $\vec s^{\,i}\ge 0$.
Idempotence follows from disjoint supports:
for $i\neq j$ we have $(\vec s^{\,i})^\top \vec p^{\,j}=0$, hence
\[
\mathbf{E}_i\mathbf{E}_j
=\frac{\vec p^{\,i}(\vec s^{\,i})^\top}{(\vec s^{\,i})^\top \vec p^{\,i}}
\cdot
\frac{\vec p^{\,j}(\vec s^{\,j})^\top}{(\vec s^{\,j})^\top \vec p^{\,j}}
=
\frac{\vec p^{\,i}\,(\vec s^{\,i})^\top \vec p^{\,j}\,(\vec s^{\,j})^\top}{(\vec s^{\,i})^\top \vec p^{\,i}\,(\vec s^{\,j})^\top \vec p^{\,j}}
=\mathbf{0}.
\]
For $i=j$ we obtain
\[
\mathbf{E}_i^2
=\frac{\vec p^{\,i}(\vec s^{\,i})^\top\vec p^{\,i}(\vec s^{\,i})^\top}{\big((\vec s^{\,i})^\top \vec p^{\,i}\big)^2}
=\frac{\vec p^{\,i}(\vec s^{\,i})^\top}{(\vec s^{\,i})^\top \vec p^{\,i}}
=\mathbf{E}_i,
\]
so $\mathbf{E}^2=(\sum_i \mathbf{E}_i)^2=\sum_i \mathbf{E}_i^2=\sum_i \mathbf{E}_i=\mathbf{E}$.
To compute the rank, observe that the column space of $\mathbf{E}_i$ is $\operatorname{span}\{\vec p^{\,i}\}$ and
$\operatorname{supp}(\vec p^{\,i})$ are disjoint, hence the $r$ one-dimensional subspaces are independent; therefore
${\rm rank}(\mathbf{E})=\sum_i {\rm rank}(\mathbf{E}_i)=r$.

\paragraph{Proof (Necessity $(\textnormal{A})\Rightarrow(\textnormal{B})$).}
Since $\mathbf{E}$ is idempotent ($\mathbf{E}^2=\mathbf{E}$), its eigenvalues are $0$ or $1$, so the image
$\operatorname{Im}(\mathbf{E})$ has dimension $r$. Because $\mathbf{E}\ge 0$, every column of $\mathbf{E}$ is either
the zero vector or a nonnegative fixed point of $\mathbf{E}$ (indeed, for the $j$-th standard basis vector
$\vec e_j$, we have $\mathbf{E}(\mathbf{E}\vec e_j)=\mathbf{E}\vec e_j\ge 0$).
Group the nonzero columns of $\mathbf{E}$ by proportionality: put indices $j,k$ in the same class if the $j$-th and
$k$-th nonzero columns are positive multiples of each other. This is an equivalence relation, and it produces exactly
$r$ classes, say $S_1,\ldots,S_r$, because each class contributes one linearly independent direction in
$\operatorname{Im}(\mathbf{E})$. Pick one representative nonzero column from each class and denote it by
$\vec p^{\,i}\ge 0$ ($i=1,\dots,r$). By construction, for every $j\in S_i$ the $j$-th column of $\mathbf{E}$ equals
$\vec p^{\,i}$, and for $j\notin \bigcup_i S_i$ the column is zero. The classes are disjoint in support by definition.
Let $\vec s^{\,i}:=\mathbf{1}_{S_i}$ be the indicator of $S_i$. The matrix that places column $\vec p^{\,i}$ on $S_i$
and zeros elsewhere is
\[
\mathbf{E}_i \;=\; \frac{\vec p^{\,i}(\vec s^{\,i})^\top}{(\vec s^{\,i})^\top \vec p^{\,i}},
\quad\text{with }(\vec s^{\,i})^\top \vec p^{\,i}>0.
\]
Therefore
\[
\mathbf{E} \;=\; \sum_{i=1}^r \mathbf{E}_i,
\]
which is exactly the Equation \eqref{eq:E-decomp}. Disjoint supports immediately give
$(\vec p^{\,i})^\top \vec p^{\,j}=0$ for $i\neq j$. \hfill$\square$

\paragraph{Physical intuition:}
\begin{itemize}[leftmargin=*]
    \item \textbf{Clustering.} Partition the item set $\mathcal{X}$ ($|\mathcal{X}|=N$) into $r$ clusters $\mathcal{C}_1,\ldots,\mathcal{C}_r$, where items in the same cluster are more similar. 
    In practice, one may obtain $\{\mathcal{C}_i\}$ via vector-based methods (e.g., $k$-means on semantic embeddings) or via $r$-way classification on the user--item bipartite/weighted sequential graph. Since these constructions require side information beyond our core formulation, a full exploration is deferred to future work.
    \item \textbf{Cluster-wise target distributions.} Define for each cluster a nonnegative target vector
    \[
    \vec p_T^{\,i}(x) \;>\; 0 \ \text{if}\ x\in\mathcal{C}_i, 
    \qquad
    \vec p_T^{\,i}(x) \;=\; 0 \ \text{otherwise}, 
    \quad i=1,\ldots,r.
    \]
    Because $\{\vec p_T^{\,i}\}_{i=1}^r$ have disjoint supports, the resulting fading matrix
    $\mathbf{E}=\sum_{i=1}^r \dfrac{\vec p_T^{\,i}(\vec s^{\,i})^\top}{(\vec s^{\,i})^\top \vec p_T^{\,i}}$ is idempotent with ${\rm rank}(\mathbf{E})=r$ (Theorem~\ref{thm:nonneg-idem-decomp}), where $\vec s^{\,i}=\mathbf{1}_{\supp(\vec p_T^{\,i})}$.
    \item \textbf{Effect.} During fading, each preferred item is replaced only by items within the same cluster, which filters out cross-cluster candidates and better respects item heterogeneity. This rank-$r$ structure also improves efficiency. Using the decomposition, for any vector $\mathbf{x}\in\mathbb{R}^N$,
    \[
    \mathbf{E}\mathbf{x}
    =\sum_{i=1}^r \vec p_T^{\,i}\,
    \frac{(\vec s^{\,i})^\top \mathbf{x}}{(\vec s^{\,i})^\top \vec p_T^{\,i}},
    \]
    so computing $(\vec s^{\,i})^\top \mathbf{x}$ costs $\mathcal{O}(|\mathcal{C}_i|)$ and the combination step costs $\mathcal{O}(rN)$, reducing the overall complexity from the naive $\mathcal{O}(N^2)$ to $\mathcal{O}(rN)$.
    \item \textbf{Takeaway.} PreferGrow equipped with a \emph{rank-$r$} fading matrix --- i.e., cluster-wise replacement --- combines theoretical elegance with practical realism: it confines replacements within semantically coherent groups while offering a scalable $\mathcal{O}(rN)$ implementation. 
\end{itemize}

\section{Algorithms for Training and Inference} \label{app:algorithms}

\begin{minipage}{\textwidth}
\begin{algorithm}[H]  
  \caption{ Training Algorithm of PreferGrow}  
  \label{alg:train}  
  \begin{algorithmic}[ht]
    \Require  
        user preference data $\mathcal{D}=\{ \left(u, x_0\right)\}$, non-preference user ratio $p$, retention probability $\alpha_t=e^{-\int_0^{t}\beta(\tau)\mathrm{d}\tau}$ with ${\int_0^{t}\beta(\tau)\mathrm{d}\tau}=(\beta_{\min})^{1-t}(\beta_{\max})^t$ or $\int_0^{t}\beta(\tau)\mathrm{d}\tau=\log(1-(1-\beta_{\mathrm{scale}} \cdot t))$,
        and the non-preference state $\vec{p}_T$ for preference fading.
    \Ensure  
        estimated \textit{Preference Ratios} $\mathbf{s}_{\Theta}(x_t, t, u)$ and the non-preference user $\phi$.
    \Repeat  
        \State $(u, x_0) \sim \mathcal{D}$  
        \Comment{preference user-item pair.}
        \State $t\sim \text{Uniform}(\{1,\dots,T\})$
        \Comment{sampling timestep $t$ uniformly.}
        \State $u=\phi$ with probability $p$ \Comment{non-preference user modeling.}
        \State$x_t \sim p_{t|0}(\cdot|x_0)=\alpha_t\vec{e}_{x_0}+(1-\alpha_t)\vec{p}_{T}, x_t\in\mathcal{X}$
        \Comment{retain or replace for preference fading.}
        \State Compute the score entropy loss $\mathcal{L}_{SE}=\sum_{y\in\mathcal{X}}\mathbf{Q}_t(x_t,y)\cdot l_{SE}(x_0,x_t,y|u)$ as Appendix \ref{app:variants}.
        \State Take gradient descent step on
        $\nabla_{\theta} \mathcal{L}_{SE}$ and update parameters.
    \Until{converged}
  \end{algorithmic} 
\end{algorithm}
\end{minipage}
\begin{minipage}{\textwidth}
\begin{algorithm}[H]  
  \caption{ Inference Algorithm of PreferGrow}  
  \label{alg:inference}  
  \begin{algorithmic}[ht]
    \Require  
        user condition $u$, sampling timesteps $S_\tau=\{\tau_i\}_{i=0}^{S}$ with $\tau_S=T$ and $\tau_0=0$, personalization strength $w$, estimated \textit{Preference Ratios} $\mathbf{s}_{\theta}(x_t, t, u)$ and the non-preference user $\phi$.
    \Ensure
        grown preference scores $p(x_0|u),x_0\in\mathcal{X}$ of user $u$.
        \State $x_T=x_{\tau_S}\sim \vec{p}_T, x_T\in\mathcal{X}$ 
        \Comment{the non-preference state.}
        \For {$s=S$ to 1}
            \State $\hat{\mathbf{s}}_{\Theta}(x_{\tau_s}, \tau_s, u) =(1+w)\mathbf{s}_{\Theta}(x_{\tau_s}, \tau_s, u) - w\cdot\mathbf{s}_{\Theta}(x_{\tau_s}, \tau_s, \phi)$ 
            \Comment{personalization enhancement.} 
            \State $p_{{\tau_{s-1}}|{\tau_s}}(x_{\tau_{s-1}}|x_{\tau_{s}},u) = {p}_{{\tau_{s}}|{\tau_{s-1}}}(x_{\tau_{s}}|x_{\tau_{s-1}})\cdot \sum_{z\in\mathcal{X}}{p}_{{\tau_{s}}|{\tau_{s-1}}}^{-1}(x_{\tau_{s-1}}|z) \cdot e^{\hat{\mathbf{s}}_{\Theta}(x_{\tau_{s}},{\tau_{s}},u)_z}$ \\
            \Comment{${p}_{{\tau_{s}}|{\tau_{s-1}}}(x_{\tau_{s}}|x_{\tau_{s-1}})=\frac{\alpha_{\tau_s}}{\alpha_{\tau_{s-1}}}\delta_{x_{\tau_{s}}}(x_{\tau_{s-1}})+(1-\frac{\alpha_{\tau_s}}{\alpha_{\tau_{s-1}}})\cdot\vec{p}_T(x_{\tau_{s})}$ as Equation \eqref{Pst}}. \\
            \Comment{${p}^{-1}_{{\tau_{s}}|{\tau_{s-1}}}(x_{\tau_{s-1}}|z)=\frac{\alpha_{\tau_{s-1}}}{\alpha_{\tau_{s}}}\delta_{x_{\tau_{s-1}}}(z)+(1-\frac{\alpha_{\tau_{s-1}}}{\alpha_{\tau_{s}}})\cdot\vec{p}_T(x_{\tau_{s-1})}$ as Equation \eqref{Inv_Pts}}. \\
            \Comment{$\sum_{z\in\mathcal{X}}{p}_{{\tau_{s}}|{\tau_{s-1}}}^{-1}(x_{\tau_{s-1}}|z) \cdot e^{\hat{\mathbf{s}}_{\Theta}(x_{\tau_{s}},{\tau_{s}},u)_z}=\frac{\alpha_{\tau_{s-1}}}{\alpha_{\tau_{s}}}e^{\hat{\mathbf{s}}_{\Theta}(x_{\tau_{s}},{\tau_{s}},u)_{x_{\tau_{s-1}}}}$} ~\\
            $\quad\quad\quad\quad\quad\quad\quad\quad\quad\quad\quad\quad\quad\quad\quad\quad\quad\quad+(1-\frac{\alpha_{\tau_{s-1}}}{\alpha_{\tau_{s}}})\cdot \vec{p}_T(x_{\tau_{s-1}})\cdot\sum_{z\in\mathcal{X}} e^{\hat{\mathbf{s}}_{\Theta}(x_{\tau_{s}},{\tau_{s}},u)_{z}}.$
            \State $x_{\tau_{s-1}}\sim p_{{\tau_{s-1}}|{\tau_s}}(x_{\tau_{s-1}}|x_{\tau_{s}},u), x_{\tau_{s-1}}\in\mathcal{X}$.
            \Comment{reverse preference growing.}\\
        \EndFor \\
        \Return $p(x_0|u)=p_{\tau_{0}|\tau_{1}}(x_{\tau_{0}}|x_{\tau_{1}},u),x_0\in\mathcal{X}$ \Comment{grown preference scores of user $u$.}
  \end{algorithmic} 
\end{algorithm} 
\end{minipage}


\section{Experiments Details} \label{app:exp}

\subsection{Datasets} \label{exp:data}

We evaluate PreferGrow on five real-world benchmark datasets:
\begin{itemize}[leftmargin=*]
    \item \textbf{MoviesLens} \cite{MoviesLens} is a commonly used movie recommendation dataset that contains user ratings, movie titles, and movie genres.
    \item \textbf{Steam} \cite{SASRec} encompasses user reviews for video games on the Steam Store.
    \item \textbf{Beauty} \cite{Amazon2014} contains movie details and user reviews from Jun 1996 to Sep 2023.
    \item \textbf{Toys} \cite{Amazon2014} includes user reviews and metadata for toys and games from Jun 1996 to Jul 2014.
    \item \textbf{Sports} \cite{Amazon2014} comprises reviews and metadata for sports and outdoor products from 1996 to 2014.
\end{itemize}
Following prior works \cite{DreamRec, PreferDiff}, we adopt the user-splitting strategy, which has been shown to effectively prevent information leakage in test sets \cite{DataLeakage}. Specifically, we sort all sequences chronologically for each dataset and then split the data into training, validation, and test sets with an 8:1:1 ratio, while preserving the last 10 interactions as the historical sequence. 
The statistical characteristics of the processed dataset are shown in Table \ref{stat_dataset}. 
As observed from the table, the recommendation datasets face a significant challenge of severe data sparsity.

\begin{table}[t]
\caption{Statistics of datasets after preprocessing.}
\label{stat_dataset}
\centering
\begin{tabularx}{0.6\textwidth}{lcccc}
\toprule 
Dataset & \# users & \# items & \# Interactions & sparsity \\
\midrule
Movies & 6040 & 3883 &  1001456 & 04.27\%   \\
Steam & 39795 & 9265 &  2949605 &  00.80\%   \\
Beauty & 22,363 & 12,101 & 198,502 & 00.07\%   \\
Toys & 19,412 & 11,924 &  138,444 & 00.06\%   \\
Sports & 35,598 & 18,357 & 256,598 & 00.04\%   \\
\midrule
\end{tabularx}
\vspace{-8pt}
\end{table}

\subsection{Baselines} \label{exp:baseline}

We compare PreferGrow with both traditional discriminative recommenders using negative sampling and diffusion-based generative recommenders, including classical recommenders (SASRec \cite{SASRec}, Caser \cite{Caser}, GRURec \cite{GRURec}), item-level diffusion-based recommenders (DreamRec \cite{DreamRec}, PreferDiff \cite{PreferDiff}), and preference score-level diffusion-based recommenders (DiffRec \cite{DiffRec2}, DDSR \cite{DDSR}):

\begin{itemize}[leftmargin=*]
    \item \textbf{SASRec} \cite{SASRec} leverages the self-attention mechanism in Transformer to model user preference scores from interaction histories, addressing data sparsity through negative sampling.
    \item \textbf{Caser} \cite{Caser} utilizes horizontal and vertical convolutional filters to capture sequential patterns at the point-level and union-level, allowing for skip behaviors, and models user preference scores while addressing data sparsity through negative sampling.
    \item \textbf{GRURec} \cite{GRURec} adopts RNNs to model user preference scores from interaction histories, mitigating data sparsity through negative sampling.
    \item \textbf{DreamRec} \cite{DreamRec} reshapes sequential recommendation as oracle item generation, addressing data sparsity by adding Gaussian noise to dense item embeddings.
    \item \textbf{PreferDiff} \cite{PreferDiff} introduces an optimization objective specifically designed for item-level DM-based recommenders, which can integrate multiple negative samples, addressing data sparsity by adding noise to dense item embeddings and using negative sampling both.
    \item \textbf{DiffRec} \cite{DiffRec2} is a preference score-level diffusion-based generative recommender assuming a Gaussian prior, addressing data sparsity by adding Gaussian noise to preference scores, without considering the constraints of the probability simplex.
    \item \textbf{DDSR} \cite{DDSR} is a preference score-level diffusion-based generative recommender assuming a categorical prior, addressing data sparsity by adding discrete noise to preference scores while respecting the constraints of the probability simplex.
\end{itemize}

\subsection{Implementation Details}  \label{exp:implement}

\textbf{Training settings}: 
We implement all models using Python 3.7 and PyTorch 1.12.1 on an Nvidia GeForce RTX 3090. During training, all methods are trained with a fixed batch size of 256 using the Adam optimizer. 
Additionally, we apply early stopping based on the model's performance on the validation set.
To ensure reproducibility, we fix all random seeds to 100 in our main experiments, a randomly chosen value.
For the classic recommenders with negative sampling, we employ the binary cross-entropy (BCE) loss. 
PreferGrow uses SASRec as the encoding model for the user's historical sequence, and we adopt both Hybrid-Wise and Adaptive settings for the fading matrix. 
For all SASRec modules, we apply RoPE position encoding \cite{RoPE}.
The search space of hyperparameters for the baselines is shown in Table \ref{tab:param}, and the optimal parameters for our PreferGrow under both hybrid and adaptive settings are presented in Table \ref{tab:bestparam}.
To ensure the reliability of our findings, we have conducted a comprehensive re-evaluation of our experiments under multiple random seeds $\{100,200,300,400,500\}$ and statistical significance testing ($p<0.001$). 
The results shown in Table \ref{tab:reval} consistently confirm the significant performance gains of PreferGrow over baselines. 

\begin{table}[h]
\centering
\footnotesize
\caption{Re-evaluation results (NDCG@5) under multiple random seeds.}
\label{tab:reval}
\begin{tabular}{c c c c c c}
\toprule
\textbf{Dataset} & \textbf{Beauty} & \textbf{Toys} & \textbf{Sports} & \textbf{Steam} & \textbf{MovieLens} \\
\midrule
\textbf{SASRec} & 0.0229$\pm$.0020 & 0.0258$\pm$.0023 & 0.0104$\pm$.0016 & 0.0193$\pm$.0002 & 0.0507$\pm$.0004 \\
\textbf{PreferDiff} & 0.0224$\pm$.0031 & 0.0312$\pm$.0009 & 0.0122$\pm$.0007 & 0.0104$\pm$.0004 & 0.0348$\pm$.0005 \\
\textbf{PreferGrow} & \textbf{0.0315$\pm$.0010} & \textbf{0.0326$\pm$.0007} & \textbf{0.0146$\pm$.0004} & \textbf{0.0399$\pm$.0004} & \textbf{0.0913$\pm$.0003} \\
\bottomrule
\end{tabular}
\end{table}

\textbf{Evaluation Protocols and Metrics}:
To ensure a comprehensive evaluation and mitigate potential biases, we adopt the all-rank protocol \cite{LightGCN, SGL, RLMRec, DreamRec, PreferDiff}, which evaluates recommendations across all items. 
We employ two widely used ranking-based metrics: \textit{Normalized Discounted Cumulative Gain} (\textbf{N$@K$}) and \textit{Mean Reciprocal Rank} (\textbf{M$@K$}), to assess the effectiveness of the models.

\begin{table}[t!]
\centering
\footnotesize
\caption{Hyperparameters Search Space for Baselines.}
\label{tab:param}
\begin{tabular}{l|l}
\toprule
\textbf{Method} & \textbf{Hyperparameter Search Space} \\
\midrule
\multirow{2}{*}{\textbf{Shared}} & lr $\sim$ \{1e-2, 1e-3, 1e-4, 1e-5\} with decay 0, embedding size $d$ $\sim$ \{128, 256, 512\} \\
 & the number of negative sampling (if using) $\sim$ \{64, 128, 256\}, bath size = 256\\
\midrule
\textbf{DreamRec} & $w \sim$ \{0, 1, 2, 5, 10\}, $T\sim$ \{500, 1000, 2000,3000\}, $p \sim$ \{0.05, 0.1, 0.15, 0.2, 0.25, 0.3\}\\
\midrule
\textbf{PreferDiff} & $\lambda \sim$ \{0.2, 0.4, 0.6, 0.8\}, $w \sim$ \{0, 1, 2, 5, 10\}, $T\sim$ \{500, 1000, 2000,3000\}\\ 
\midrule
\textbf{DiffRec} & noise scale $\sim$ \{1e-1, 1e-2, 1e-3, 1e-4, 1e-5\}, $T \sim$ \{2, 5, 20, 50, 100\} \\
\midrule
\textbf{DDSR} & $T\sim$ \{500, 1000, 2000,3000\} \\
\midrule
\multirow{2}{*}{\textbf{PreferGrow}} & $T \sim$ \{5, 10, 20, 30, 40\}, $p \sim$ \{0.05, 0.1, 0.15, 0.2, 0.25, 0.3\} \\
 & $w \sim$ \{0, 1, 2, 5, 10\}, $\lambda \sim$ \{0.9, 0.99, 0.999, 0.9999, 0.99999\} \\
\bottomrule
\end{tabular}
\end{table}

\begin{table}[t!]
\centering
\footnotesize
\caption{Best Hyperparameters for PreferGrow on five datasets.}
\label{tab:bestparam}
\begin{tabular}{l|l|c|c|c|c|c|c}
\toprule
\textbf{Variant}&\textbf{Dataset} & lr & $d$& $p$ & $T$ & $w$ & $\lambda$\\
\midrule
\multirow{5}{*}{\textbf{Hybrid}} & \textbf{MovieLens} 
& 1e-4 & 256 & 0.1 & 20 & 10 & 0.9999  \\
& \textbf{Steam} & 1e-3 & 256 & 0.1 & 20 & 10 & 0.99999 \\
& \textbf{Beauty} & 1e-4 & 256 & 0.1 & 20 & 5 & 0.999 \\
& \textbf{Toys} & 1e-3 & 256 & 0.2 & 20 & 10 & 0.9999 \\
& \textbf{Sports} & 1e-3 & 256 & 0.2 & 20 & 1 & 0.9999 \\
\midrule
\textbf{Variant}&\textbf{Method} & lr & $d$& $p$ & $T$ & $w$ & $x_{-1}$\\
\midrule
\multirow{5}{*}{\textbf{Adaptive}} & \textbf{Dataset} 
& 1e-4 & 256 & 0.2 & 20 & 10 & True  \\
& \textbf{Steam} & 1e-3 & 256 & 0.05 & 20 & 10 & True \\
& \textbf{Beauty} & 1e-4 & 256 & 0.1 & 20 & 2 & True \\
& \textbf{Toys} & 1e-4 & 256 & 0.2 & 20 & 5 & True \\
& \textbf{Sports} & 1e-4 & 256 & 0.2 & 20 & 5 & True \\
\bottomrule
\end{tabular}
\end{table}

\subsection{Hyper-parameter Analysis} \label{app:param_ana}

The personalization strength $w$ is locally stable within a reasonable range and highly consistent across data splits, thus posing no practical challenge for tuning.
As shown in Figure \ref{fig:param} of our paper, PreferGrow is sensitive to large-magnitude changes in the personalization strength parameter $w$ (e.g., from 0 to 20). 
On datasets like \textit{Steam}, we observe that performance remains locally stable within a reasonable range (e.g., $w \in [5, 15]$), but drops notably when $w$ is set too small (\eg $w=1$). 
This highlights the need to set $w$ appropriately for each dataset. 
To assess sensitivity more systematically, we tested PreferGrow under $w \in \{0, 2, 5, 10\}$ and evaluated the mean absolute error between the optimal $w$ values selected on the training, validation, and test sets. These optimal values are determined by uniformly sampling performance across 40 checkpoints throughout the training process and optimality is defined by maximizing the combined metric $HR@5 + HR@10 + NDCG@5 + NDCG@10$. 
Our findings in Table \ref{tab:mae-w} show that the optimal $w$ is highly consistent across data splits, indicating that $w$ can be tuned on a validation (or even training) set like other standard hyperparameters. 
This ensures that tuning  $w$ does not pose a practical challenge.

\begin{table}[t!]
\centering
\caption{Mean absolute error of $w$ across data splits on five datasets.}
\label{tab:mae-w}
\begin{tabular}{lccccc}
\toprule
\textbf{Dataset} & \textbf{MovieLens} & \textbf{Steam} & \textbf{Beauty} & \textbf{Sports} & \textbf{Toys} \\
\midrule
Mean abs.\ error of $w$ & 0.000 & 0.000 & 0.778 & 0.050 & 0.775 \\
\bottomrule
\end{tabular}
\end{table}

\subsection{Efficiency Analysis}\label{exp:time}

As summarized in Table~\ref{tab:efficiency}, we report each model’s trainable parameters, the number of training epochs, GFLOPs per model output (recall that diffusion models require multiple outputs for denoising), and the total number of outputs.

\textbf{Quantitative scalability analysis.}
We also provide a quantitative analysis of PreferGrow and clarify the solution outlined in the Limitations. 
PreferGrow introduces only $\mathcal{O}(N)$ complexity in the network, loss, and inference paths --- comparable to contemporary diffusion-based recommenders.
Table~\ref{tab:complexity} contrasts model, loss, and inference complexities with representative baselines, where $L$ is the history length, $N$ the item set size, $B$ the number of negatives, $d$ the hidden dimension, and $T$ the number of diffusion steps.

\begin{itemize}[leftmargin=*]
    \item \textbf{Model parameters.} We never materialize the full $N^2$ preference-ratio matrix. 
    Equation~\eqref{s_theta_prefgrow} requires only $N$ scores; the additional computation cost is thus $\mathcal{O}(Nd)$.
    \item \textbf{Loss computation.} With the idempotent fading matrix $\mathbf{E}$, the objective in Eq.~(4) simplifies (Appendix~C.1--C.4) to element-wise means, indexing, and a precomputable constant—overall $\mathcal{O}(N)$. 
    The score-entropy loss contains three parts: (i) a positive term and (ii) a negative term (both simple mean/index operations, $\mathcal{O}(N)$), and (iii) a constant term ensuring non-negativity that depends only on $t$ and can be precomputed in $\mathcal{O}(T)$ with $T{=}20$.
    \item \textbf{Inference cost.} Idempotency yields
    \[
      \mathbf{E}\, s_{\Theta}(x_t,t,u)
      \;=\;
      \Bigl(\frac{\mathbf{p}_T^{\top} s_{\Theta}}{\mathbf{1}^{\top}\mathbf{p}_T}\Bigr)\mathbf{1},
    \]
    reducing Equation~\eqref{Pst_vec} from $\mathcal{O}(N^2)$ to $\mathcal{O}(N)$.
\end{itemize}

\textbf{Where $\mathcal{O}(N^2)$ comes from.}
The apparent $\mathcal{O}(N^2)$ arises only from the \emph{modeling target}: preference ratios are more expressive than conventional logits, which raises PreferGrow’s performance ceiling but does not inflate the network/algorithmic path beyond $\mathcal{O}(N)$.

\textbf{Scaling to industry scale.}
When $N$ is extremely large (\eg billions of items), even $\mathcal{O}(N)$ becomes impractical. 
As discussed in the Limitations, Semantic IDs (SIDs) offer a principled route to reduce complexity.
SIDs encode each item with $m$ codebooks of size $c$, enabling up to $c^m$ distinct items while keeping computation at $\mathcal{O}(mc)$; since $c^m \gg N \gg mc$, this yields a compact yet expressive representation space and allows PreferGrow-on-SIDs to reduce cost from $\mathcal{O}(N)$ to $\mathcal{O}(mc)$.
A practical blocker is that current SID pipelines (e.g., Tiger~\cite{tiger}, ActionPiece~\cite{actionpiece}, DDSR~\cite{DDSR}, OneRec~\cite{onerec}) are not open-sourced. 
Truly deploying PreferGrow at industrial scale therefore requires addressing the open challenge of \emph{large-scale multimodal SID pretraining}. 
Extending PreferGrow to operate over SIDs is our ongoing work.

\hugq{
\begin{table}[t!]
\centering
\scriptsize
\caption{Comparison of model complexities.}
\label{tab:complexity}
\begin{tabular}{c|c|c|c|c}
\toprule
\textbf{Complexity} & \textbf{Model Parameters} & \textbf{Loss Computation} & \textbf{Modeling Target} & \textbf{Inference} \\
\midrule
\textbf{SASRec} & $\mathcal{O}\bigl(Ld^{2}+L^{2}d\bigr)$ & $\mathcal{O}\bigl(Bd\bigr)$ & $\mathcal{O}(N)$ & $\mathcal{O}\bigl(Ld^{2}+L^{2}d+Nd\bigr)$ \\
\midrule
\textbf{DreamRec} & $\mathcal{O}\bigl((L+3)d^{2}+L^{2}d\bigr)$ & $\mathcal{O}(d)$ & $\mathcal{O}(Td)$ & $\mathcal{O}\bigl(T(L+3)d^{2}+TL^{2}d+Nd\bigr)$ \\
\midrule
\textbf{PreferDiff} & $\mathcal{O}\bigl((L+3)d^{2}+L^{2}d\bigr)$ & $\mathcal{O}\bigl(Bd\bigr)$ & $\mathcal{O}(Td)$ & $\mathcal{O}\bigl(T(L+3)d^{2}+TL^{2}d+Nd\bigr)$ \\
\midrule
\textbf{DiffRec} & $\mathcal{O}(LN^{2})$ & $\mathcal{O}(N)$ & $\mathcal{O}(TN)$ & $\mathcal{O}(TLN^{2})$ \\
\midrule
\textbf{DDSR} & $\mathcal{O}\bigl(Ld^{2}+L^{2}d\bigr)$ & $\mathcal{O}(N)$ & $\mathcal{O}(TN)$ & $\mathcal{O}\bigl(TLd^{2}+TL^{2}d+TNd\bigr)$ \\
\midrule
\textbf{PreferGrow} & $\mathcal{O}\bigl((L+3)d^{2}+L^{2}d\bigr)$ & $\mathcal{O}(N+T)$ & $\mathcal{O}(TN^{2})$ & $\mathcal{O}\bigl(T(L+3)d^{2}+TL^{2}d+TN(d+3)\bigr)$ \\
\bottomrule
\end{tabular}
\end{table}
}

\begin{table}[t!]
\centering
\footnotesize
\caption{Comparison of efficiency on Steam.}
\label{tab:efficiency}
\begin{tabular}{lcccc }
\toprule
\textbf{Models} &\# Trainable Parameters & \# training epochs & Inference GFLOPs & Inference steps \\
\midrule
\textbf{SASRec} & 2.70M & 61 & 0.85 & 1   \\
\textbf{Caser} & 2.49M & 58 & 0.38  & 1 \\
\textbf{GRURec} & 2.77M & 55 & 4.06 & 1 \\
\textbf{DreamRec} & 7.25M & 52 & 0.85 & 20  \\
\textbf{PreferDiff} & 7.25M & 55 & 0.85 & 20  \\
\textbf{DiffRec} & 18.55M & 1000 & /  & 20 \\
\textbf{DDSR} & 3.03M & 142 & 1.77 & 20 \\
\textbf{PreferGrow} & 3.03M & 480 & 0.93 & 20 \\
\bottomrule
\end{tabular}
\end{table}



\end{document}